\documentclass[11pt]{article}

\usepackage{amsmath} 
\usepackage{amssymb}
\usepackage{amsthm}
\usepackage[ansinew]{inputenc}
\usepackage[T1]{fontenc}
\usepackage{fullpage}
\usepackage{tikz}
\newtheorem{theorem}{Theorem}[section]
\newtheorem{definition}[theorem]{Definition}
\newtheorem{lemma}[theorem]{Lemma}
\newtheorem{corollary}[theorem]{Corollary}
\newtheorem{proposition}[theorem]{Proposition}
\newtheorem{claim}[theorem]{Claim}

\newtheorem{example}{Example}[section]

\newcommand{\vone}{{\mathbf{1}}}
\newcommand{\vzero}{{\mathbf{0}}}

\def\nat{{\mathbb N}}
 \def\real{{\mathbb R}}
 \def\rat{{\mathbb Q}}

\begin{document}

\title{Greatest Fixed Points
of Probabilistic Min/Max\\ Polynomial Equations,
and Reachability for\\ Branching Markov Decision Processes} 
\author{Kousha Etessami\\U. of Edinburgh\\{\tt kousha@inf.ed.ac.uk}
\and 
Alistair Stewart\\U. of Southern California\\{\tt stewart.al@gmail.com} 
\and
Mihalis Yannakakis\\Columbia U.\\{\tt mihalis@cs.columbia.edu}}

\date{}

\maketitle

\begin{abstract}
We give polynomial time algorithms for quantitative
(and qualitative) {\em reachability} analysis for 
{\em Branching Markov Decision Processes} (BMDPs).
Specifically, given a BMDP, and given an initial population, where the
objective of the controller is to maximize (or minimize) the
probability of eventually reaching a population that contains an object of a desired
(or undesired) type,
we give algorithms for approximating the supremum
(infimum) reachability probability,
within desired precision $\epsilon > 0$, in time
polynomial in the encoding size of the BMDP and in $\log(1/\epsilon)$.  
We furthermore give P-time
algorithms for computing $\epsilon$-optimal strategies for both
maximization and minimization of reachability probabilities.  
We also give P-time algorithms for all associated {\em qualitative} analysis
problems, namely: deciding whether the optimal (supremum or infimum)
reachability probabilities are $0$ or $1$.
Prior to this paper, approximation of optimal reachability probabilities for BMDPs was not even known to be decidable.  

Our algorithms exploit the following basic fact: we show that for any
BMDP, its maximum (minimum) {\em non}-reachability probabilities are
given by the {\em greatest fixed point} (GFP) solution $g^* \in
[0,1]^n$ of a corresponding monotone max (min) Probabilistic
Polynomial System of equations (max/minPPS), $x=P(x)$, which are
the Bellman optimality equations for a BMDP with non-reachability
objectives.  
We show how to compute the GFP of max/minPPSs to desired precision
in P-time.

We also study more general {\em branching simple stochastic games} (BSSGs) 
with (non-)reachability objectives.   We show that: (1) the value of these
games is captured by the GFP, $g^*$, of a corresponding max-minPPS, $x=P(x)$;
(2) the {\em quantitative} problem of approximating the value is in TFNP; and 
(3)
the {\em qualitative} problems associated with the value are all solvable in P-time.

\end{abstract}

\section{Introduction}

{\em Multi-type branching processes} (BPs) are infinite-state purely stochastic processes that model the stochastic evolution of a population of entities of distinct types.  The BP specifies for every type a probability distribution for the offspring of 
entities of this type. Starting from an initial population, the process evolves from each generation to the next according to the probabilistic offspring rules.\footnote{Branching processes are used
both with discrete and with continuous time (where reproduction rules
for each type have associated rates instead of probabilities). 
However, the probabilities of extinction and reachability are not
time-dependent, and thus continuous-time processes can be studied via their corresponding discrete-time BPs, obtained by simply normalizing 
the rates on rules for each type.} Branching processes are a fundamental stochastic model with applications in many areas:
physics, biology, population genetics, medicine etc. 
{\em Branching Markov Decision Processes} (BMDPs) provide a natural extension of BPs
where the evolution is not purely stochastic but can be
potentially influenced or controlled to some extent: 
a controller can take actions which affect the probability distribution
for the set of offspring of the entities of each type. The goal is to design a policy for choosing the actions in order to optimize a desired objective. 

In recent years there has been great progress in
resolving algorithmic problems for BMDPs with the
objective of maximizing or minimizing the {\em extinction} probability,
i.e., the probability that the population eventually becomes extinct.
Polynomial time algorithms were developed for both maximizing and
minimizing BMDPs for {\em qualitative} analysis, i.e. to determine whether the
optimal extinction probability is 0, 1 or in-between \cite{rmdp},
and for {\em quantitative} analysis, to compute the 
optimal extinction probabilities to
any desired precision \cite{esy-icalp12}.
However, key problems related to
optimizing BMDP {\em reachability} probabilities (the probability
that the population eventually includes an entity having a target type)
have remained open.

Reachability objectives are very natural.
Some types may be undesirable, in which case we 
want to avoid them to the extent possible. Or conversely, we may want
to guide the process to reach certain desirable types.
For example, branching processes have been used recently
to  model cancer tumor  progression and multiple drug
resistance of tumors due to multiple mutations 
(\cite{Bozic13,RBCN13,KomBol13}).  
It could be fruitful to model the introduction
of multiple drugs (each of which controls/influences cells with a 
different type of mutation) via a ``controller'' that controls the offspring of different types, thus extending the current models (and associated
software tools)  which are based on BPs only, to controlled
models based on BMDPs.  A natural question
one could ask then is to compute the minimum
probability of reaching a {\em bad} (malignant) cell type,
and compute a drug introduction strategy that achieves 
(approximately) minimum probability.
Doing this efficiently (in P-time) 
would avoid the combinatorial explosion of trying all 
possible combinations of drug therapies.

In this paper we provide the first polynomial time algorithms 
for quantitative (and also qualitative) {\em reachability} analysis for BMDPs.
Specifically, we provide algorithms 
for $\epsilon$-approximating the supremum probability, 
as well as the infimum probability, of
reaching a given type (or a set of types) starting from an initial type (or 
an initial population of types), up to any desired additive error 
$\epsilon > 0$.  We also give algorithms for  
computing $\epsilon$-optimal strategies which achieve such 
$\epsilon$-optimal values.  
The running time of these algorithms (in the standard
Turing model of computation)  is 
polynomial in both the encoding size of the BMDP and 
in $\log(\frac{1}{\epsilon})$.
We also give P-time algorithms for the qualitative problems:
we determine whether the supremum or infimum probability
is 1 (or 0), and if so we actually compute an optimal
strategy that achieves $1$ ($0$, respectively).

In prior work \cite{rmdp}, we studied the
problem of optimizing extinction (a.k.a. termination) 
probabilities for  BMDPs,
and showed that the optimal extinction probabilities are
captured by the {\em least fixed point} (LFP) solution $q^* \in [0,1]^n$ of 
a corresponding system of multivariate monotone probabilistic
max (min) polynomial equations called maxPPSs (respectively minPPSs), 
which form the {\em Bellman optimality equations} for termination of
a BMDP.
A maxPPS is a system of equations $x=P(x)$ over a vector $x$ of variables, where the right-hand-side of each equation is of
the form $\max_j\{p_{j}(x)\}$, where each $p_j(x)$ is a polynomial with
non-negative coefficients (including the constant term) that sum to at most 1
(such a polynomial is called {\em probabilistic}). 
A minPPS is defined similarly.
In \cite{esy-icalp12}, we introduced an algorithm,
called {\em Generalized Newton's Method} (GNM),
for the solution of maxPPSs and minPPSs,
and showed that it computes the LFP of maxPPS and minPPS
(and hence also the optimal termination
probabilities for BMDPs) to desired precision in P-time. 
GNM is an iterative algorithm (like Newton's)
which in each iteration solves a suitable linear program 
(different ones for the max case and the min case).
In \cite{esy-icalp12} we also showed that for more general 
two player zero-sum {\em branching simple stochastic games} (BSSGs), 
with the player objectives of maximizing and minimizing the {\em extinction}
probability, we can approximate the value of the BSSG extinction game in TFNP.

In this paper we first model the reachability problem for a BMDP
by an appropriate system of equations: We show that the optimal 
{\em non-reachability} probabilities for a given BMDP are captured
by the {\em greatest fixed point} (GFP),  $g^* \in [0,1]^n$ of a corresponding 
maxPPS (or minPPS) system of Bellman equations.   
We then show that one can approximate 
the GFP solution $g^* \in [0,1]^n$
of a maxPPS (or minPPS), $x=P(x)$, 
in time  polynomial in both the encoding size $|P|$ of the system of
equations and in 
$\log(1/\epsilon)$, where $\epsilon > 0$ is the desired additive error 
bound of the solution.\footnote{It is worth mentioning that
it follows already from results in \cite{rmc} 
that the quantitative {\em decision}
problem for the GFP of a PPS (or max/minPPS) is {\tt PosSLP}-hard.
In other words,  
the problem of deciding whether $g^*_i \geq p$, for a given
probability $p \in [0,1]$, where $g^*$ is the GFP of a given 
PPS, is {\tt PosSLP}-hard.  This follows immediately 
from the proof in \cite{rmc}
(Theorem 5.3)
of the {\tt PosSLP}-hardness of
deciding whether $q^*_i \geq p$, where $q^*$ is the LFP of a given PPS
(equivalently, the termination probabilities of a given 1-exit RMC).
The PPS constructed in that proof is ``acyclic'' and
has a {\em unique} 
fixed point, and thus its LFP is equal to its GFP,  
i.e., $q^*=g^*$.
Thus, we can not hope to obtain a P-time algorithm
in the Turing model for deciding $g^*_i \geq p$, for a given
PPS (or max/minPPS), without
a major breakthrough in the complexity of numerical computation.}
  (The model of 
computation is the standard Turing machine model.)
We also show that the qualitative analysis of determining
the coordinates of the GFP that are 0 and 1, can be done in P-time
(and hence the same holds for the optimal reachability probabilities of BMDPs).

More generally, we study {\em branching simple stochastic games} (BSSGs) 
with (non-)reachability objectives.  These are two player zero-sum
turn based stochastic games, where one player wishes to reach a target type
while the other player wants to avoid that.
These games generalize BPs and BMDPs.
Such games can potentially be used to model adversarially 
some unknown parts of the controlled stochastic
model.  For example, in the setting suggested above for modeling
injection of different drugs in cancer tumors, there could
be some cell types whose offspring generation behavior 
in the presence of the drugs 
is unknown, and these cell types could be modeled in a worst-case fashion
as types in 
the BSSG that are controlled by the adversary, 
where the adversary aims to {\em maximize}
the probability of reaching the bad (malignant) cell types,
whereas the controller wants a drug injection strategy
for the controllable cell types in order to {\em minimize}
this probability.

We show that, firstly, the {\em value} of BSSG (non-)reachability games
(the value exists, i.e., these games
are {\em determined})
is captured by the GFP, $g^*$, of a corresponding 
max-minPPS, $x=P(x)$.   A max-minPPS is a system of equations $x_i = P_i(x)$,
where $P_i(x)$ has either the form $\max_j \{p_j(x)\}$ or the form
$\min_j \{p_j(x)\}$, where $p_j(x)$ are probabilistic polynomials.
We show that the 
{\em quantitative} problem of approximating the value of a BSSG,
or equivalently the GFP of a max-minPPS, is in TFNP.
We also show that  
the {\em qualitative} problems associated with deciding
whether the value 
of a BSSG is $0$ or $1$  (as well as computing
optimal strategies that ``achieve'' these values if one or the other is the case)  are all solvable in polynomial time.  This should be contrasted
with a result  in \cite{rmdp} which shows that,
for a given BSSG {\em extinction} game, the qualitative problem
of deciding whether the value is equal to $1$ 
\footnote{Equivalently, the problem of deciding whether the value is 1 for the {\em termination}
game on a 1-exit Recursive simple stochastic game (1-RSSG).}
is  at least as hard as Condon's long standing open problem of
computing the value of finite state simple stochastic games (or deciding
whether this value is, say, $\geq 1/2$).

Our P-time algorithms for computing the GFP of minPPSs and maxPPSs
to desired precision
make use of a variant of Generalized Newton Method (GNM), adapted for
the computation of the GFP instead of the LFP,
with a key important difference in the preprocessing step
before applying GNM. We first identify and remove only the variables that
have value $1$ in the GFP $g^*$  (we do not remove the
variables with value $0$, unlike the LFP case). 
We show that for maxPPSs, once these variables
are removed, the remaining system with GFP $g^* < \vone$ has a unique fixed point in
$[0,1]^n$,
hence the GFP is equal to the LFP; applying GNM starting from the all-$0$ initial vector
converges quickly (in P-time, with suitable rounding) to the GFP 
(by \cite{esy-icalp12}).
For minPPSs, even after the removal of the variables $x_i$ with $g^*_i=1$,
the remaining system may have multiple fixed points, and we can have LFP $<$ 
GFP.
Nevertheless, we show that with the subtle change in the preprocessing step,
GNM, starting at the all-$0$ vector, remarkably ``skips over''  the LFP 
and converges to the GFP solution $g^*$, in P-time.

We note incidentally that for any monotone operator $P$ from $[0,1]^n$ to itself,
one can define another monotone operator $R:[0,1]^n \rightarrow [0,1]^n$,
where $R(y)= \vone - P(\vone - y)$, such that the GFP $g^*$ of $x=P(x)$
and the LFP $r^*$ of $y=R(y)$ satisfy $g^*=\vone-r^*$. (The second system is
obtained from the first by the change of variables $y=\vone-x$.)
Simple {\em value iteration} starting at $\vzero$ ($\vone$) on $P(x)$ corresponds 
1-to-1 to value iteration starting at $\vone$ ($\vzero$, respectively) on $R(y)$.
However, this does not imply that computing the GFP of a max/minPPS  
is P-time reducible to computing the LFP of a max/minPPS: even if 
$x=P(x)$ is a PPS, the polynomials of $R(y)$  in general have negative coefficients. 
Value iteration on $R$ provably can converge exponentially slowly
(starting at $\vzero$ or $\vone$). 
Moreover, naively applying Newton starting at $\vzero$ to $y=R(y)$ can fail because the Jacobians are no longer non-negative, and the iterates need not even be defined (even after qualitative preprocessing).

Comparing the properties of the LFP and GFP of max/minPPSs, we note that
a difference for the qualitative problems is
that for the GFP, both the value=0 and the value=1 question depend only
on the structure of the model and not on its probabilities (the values of the
coefficients), whereas in the LFP
case the value=1 question depends on the probabilities
while value=0 does not (see \cite{rmc,rmdp}).

It is also worth noting that for BMDPs and BSSGs there is a natural ``duality''
between the objectives of optimizing reachability probability and that of
optimizing extinction probability.  
Namely, we can view a BMDP or BSSG as a random/controlled process
that generates a node-labeled (not necessarily finite) tree.
The objective of optimizing the extinction probability (i.e., the
probability of generating a finite tree), starting from a
given type, can equivalently be rephrased as a 
``{\em universal reachability}'' objective on a slightly modified BMDP, 
where the goal is to
optimize the probability of eventually reaching the 
target type (namely ``death'')
on {\em all}  paths starting at the root of the tree. 
Likewise, the ``universal reachability'' objective for any BMDP
can equivalently be rephrases as the objective of optimizing 
extinction probability on a slightly modified BMDP.
(We will explain these in more detail in Section \ref{sec:defs}.) 
By contrast, the
{\em reachability} objective that we study in this paper is precisely the
``{\em existential reachability}'' objective for BMDPs and BSSGs, 
namely optimizing
the probability of reaching the target type on {\em some} path in the
generated tree.

We shall see that, despite this duality, there 
are some important differences between these two objectives,
in particular when 
it comes to the existence of optimal strategies.
Namely, we show that, unlike optimization of
extinction (termination) probabilities for BMDPs, for which there
always exists a static deterministic optimal strategy (\cite{rmdp}),
there need not exist any optimal strategy at all for maximizing reachability 
probability in a BMDP, i.e., the supremum probability may not
be attainable. If the supremum probability is 1 however (and likewise 
if the value of the BSSG game is 1),
we show that there does exist a strategy (for the player maximizing
reachability probability) that achieves it, although not necessarily any 
static strategy.
For the objective of minimizing reachability probability, 
we show there always exists an optimal deterministic 
and static strategy, both in BMDPs and BSSGs.
Regardless of what the optimal value is, we show that we can compute in P-time 
an $\epsilon$-optimal static (possibly randomized) policy,
for both maximizing and minimizing reachability probability in 
a BMDP.

\medskip
{\bf Related work:}
BMDPs have been previously studied in both operations research (e.g., \cite{pliska76,rotwhit82,denrot05a}) and computer science (e.g., \cite{rmdp,EGKS08,EWY08}).
We have already mentioned the results in \cite{rmdp,esy-icalp12}
concerning the computation of the extinction probabilities of BMDPs and
the computation of the LFP of max/minPPS. 
Branching processes are closely connected to stochastic context-free grammars,
1-exit Recursive Markov chains (1-RMC) \cite{rmc}, and the corresponding 
stateless probabilistic pushdown processes, pBPA \cite{EKM}; their extinction 
or termination
probabilities are interreducible, and they are all captured by the LFP of PPSs.
The same is true for their controlled extensions, for example
the extinction probability of BMDPs and the termination probabilities of 1-exit
Recursive Markov Decision processes (1-RMDP) \cite{rmdp}, are both captured by
the LFP of maxPPS or minPPS. A different type of objective of optimizing the
total expected reward for 1-RMDPs (and equivalently BMDPs) in a setting with 
positive rewards was studied in \cite{EWY08}; in this case the 
optimal values are rational and can be computed exactly in P-time.

The equivalence between BMDPs and 1-RMDPs however does not carry over
to the reachability objective. The {\em qualitative} reachability
problem for 1-RMDPs (equivalently BPA MDPs) and the extension to simple
2-person games 1-RSSGs (BPA games) were studied in \cite{BBFK08} and
\cite{BBKO11} by Brazdil et al.  It is shown in \cite{BBFK08} that
qualitative {\em almost-sure} reachability for 1-RMDPs can be decided
in P-time (both for maximizing and minimizing 1-RMDPs).  However, for
maximizing reachability probability, almost-sure and limit-sure
reachability are {\em not} the same: in other words, the supremum
reachability probability can be $1$, but it may not be achieved by any
strategy for the 1-RMDP.  By contrast, for BMDPs we show that if the
supremum reachability probability is 1, then there is a strategy that
achieves it. This is one illustration of the fact that the equivalence
between 1-RMDP and BMDP does not hold for the reachability objective.
The papers \cite{BBFK08,BBKO11} do not address the limit-sure
reachability problem, and in fact even the decidability of limit-sure
reachability for 1-RMDPs remains open.

Chen et. al. \cite{CDK12} studied model checking of branching
processes with respect to properties expressed by
deterministic parity tree automata and showed that the qualitative
problem is in P (hence this holds in particular for reachability
probability in BPs), and that the quantitative problem of comparing the
probability with a rational is in PSPACE. Although not explicitly
stated there, one can use Lemma 20 of \cite{CDK12} and our algorithm
from \cite{ESY12} to show that the reachability probabilities of BPs
can be approximated in P-time. Bonnet et. al. \cite{BKL14} studied a model of 
``probabilistic Basic Parallel Processes'', which are syntactically
close to Branching processes, except reproduction is
asynchronous and the entity that reproduces in each step is chosen
randomly (or by a scheduler/controller). None of the previous results have 
a
bearing on the reachability problems for BMDPs.

\medskip
{\bf Organization of the paper:}
Section \ref{sec:defs} gives basic definitions and background. 
Section \ref{sec:gfp} characterizes the (non-)reachability
problem for BMDPs, and more general BSSGs, in terms of 
the GFP computation problem for max-minPPS equations,
and discusses the existence of optimal strategies for BMDPs. 
Section \ref{sec:qual-gfp-equal-1} gives a P-time algorithm for 
determining those variables with value $=$ 1 in the GFP of a max-minPPS. 
Section \ref{sec:bp-ld} analyses the GFP of PPSs, and shows we can approximate it in P-time. 
Section \ref{sec:maxpps-gfp-approx} solves the GFP value approximation problem for maxPPSs in P-time,
and also shows how to compute an $\epsilon$-optimal deterministic static strategy for maxPPS in P-time. 
Section \ref{sec:minPPS-gfp-approx} solves the GFP value approximation problem for minPPSs in P-time. 
Section \ref{sec:eps-opt-minpps} concerns the construction, in P-time, of $\epsilon$-optimal 
strategies
for the GFP of a minPPS (this is substantially harder than the maxPPS case).
Section \ref{sec:zero-detect} gives a P-time algorithm for 
determining those variables with value $=$ 0 in the GFP of a max-minPPS (this is substantially
harder than the $=$ 1 case done in Section \ref{sec:qual-gfp-equal-1}).
Section \ref{sec:bssg} shows that we can approximate the value of a BSSG (non-)reachability game,
and the GFP of a max-minPPS, in TFNP.

\section{Definitions and Background}

\label{sec:defs}

We start by providing unified definitions of
multi-type Branching processes (BPs), Branching MDPs (BMDPs), and 
Branching Simple
Stochastic Games (BSSGs).
Although most of our results are focused on BMDPs,
since BSSGs provide the most general of these models we start by 
defining BSSGs, and then specializing them to obtain BMDPs and BPs.
Throughout we use $\vzero$ and $\vone$ to denote
all-$0$ and all-$1$ vectors, respectively, of the appropriate dimensions.

A {\bf Branching Simple Stochastic Game} (BSSG),
consists of a finite set 
$V=\{T_1, \ldots, T_n\}$ of types, a finite non-empty set 
$A_i \subseteq \Sigma$ 
of actions for each type $T_i$ ($\Sigma$ is some finite action alphabet),
and a finite set $R(T_i,a)$ of probabilistic rules associated
with each pair $(T_i,a)$, $i \in [n]$, where $a \in A_i$. 
Each rule $r \in  R(T_i,a)$ 
is a triple $(T_i,p_r, \alpha_r)$, which we denote by
$T_i \stackrel{p_r}{\rightarrow} \alpha_r$, 
where $\alpha_r \subseteq \nat^n$ 
is a $n$-vector of natural numbers that denotes a finite multi-set 
over the set $V$, and where $p_r \in (0,1] \cap \rat$ is the probability of the rule $r$ (which we assume is given by a rational number, for computational purposes),
where we assume that for all $i \in V$ and $a \in A_i$, 
the rule probabilities  in $R(T_i,a)$ sum to $1$, i.e.,  
$\sum_{r \in R(T_i,a) } p_r = 1$. 
For BSSGs, the types are partitioned into two sets: 
$V = V_{\max} \cup V_{\min}$,   $V_{\max} \cap V_{\min} = \emptyset$,
where $V_{\max}$ contains those types ``belonging'' to player $\max$,
and  $V_{\min}$ containing those
belonging to player $\min$.

A {\bf Branching Markov Decision Process} (BMDP) is a BSSG where one
of the two sets $V_{\max}$ or $V_{\min}$ is empty.  Intuitively, a BMDP
(BSSG) describes the stochastic evolution of a population of entities
of different types in the presence of a controller (or two players)
that can influence the evolution.  We can define a 
{\bf multi-type Branching
  Process} (BP), by imposing a further restriction,
namely that all action sets $A_i$ must be singleton sets.  
Hence in a BP, players have no choice of actions, and we can 
simply assume players don't exist: a BP defines a purely stochastic
process.

A play (or trajectory) of a BSSG operates as follows:
starting from an
initial population (i.e., set of entities of given types) $X_0$ at 
time (generation) 0,
a sequence of populations $X_1, X_2, \ldots$ is generated, where $X_{k+1}$ is obtained from
$X_{k}$ as follows. 
Player $\max$ ($\min$) selects for each entity $e$ in set 
$X_{k}$ that belongs to $\max$ (to $\min$, respectively) 
an available action $a \in A_i$ for the type $T_i$ of entity $e$; 
then for each such entity $e$ in $X_{k}$
a rule $r \in R(T_i,a)$ is chosen randomly and independently 
according to the rule probabilities $p_r$,
where $a \in A_i$ is the action selected for that particular entity $e$. 
Every entity is then replaced by a set of entities with the types 
specified by the right-hand side multiset $\alpha_r$ of that chosen rule $r$.
The process is repeated as long as the current population $X_k$ is nonempty, 
and it is said to {\em terminate} (or become {\em extinct})  if there is
some $k \geq 0$ such that 
$X_k = \emptyset$.  When there are $n$ types, we can view a population $X_i$ as a vector $X_i \in \nat^n$, specifying the number of objects of each type.
We say that the process {\em reaches} a type $T_j$, if there is some $k \geq 0$
such that $(X_k)_j > 0$.

We can consider different objectives by the players.
For example, in \cite{rmdp,esy-icalp12} the 
objective considered was that
the two players wish to maximize and minimize, respectively,
the probability of termination (i.e., extinction of
the population). It was shown in \cite{rmdp} that such
BSSG games indeed have a value, and in \cite{esy-icalp12}
a P-time algorithm was developed for approximating this
value in the case of max-BMDPs and min-BMDPs with
the termination objective.

In this paper we consider the {\em reachability} objective:  namely where the
goal of the two players, starting from a given population,
is to maximize/minimize the probability of reaching a population
which contains {\em at least} one entity of a given special type, $T_{f^*}$.
It is perhaps not immediately clear that a BSSG with such a reachability
objective has a {\em value}, but we shall show that this is indeed the case.

Regarding strategies,
at each stage, $k$, each player is allowed, in principle, to select the actions for the entities in $X_k$ that belong to it 
based on the whole past history, may use randomization (a mixed strategy), and may make different choices 
for entities of the same type. 
The ``history'' of the process 
up to time $k-1$ includes not only the populations $X_0, X_1, \ldots, X_{k-1}$,
but also the information on all the past actions
and rules applied and the parent-child
relationships between all the entities up to the generation $X_{k-1}$.
The history can be represented by a forest of depth $k-1$, with internal nodes
labelled by rules and actions, and whose leaves at level $k-1$
form the population $X_{k-1}$.
Thus, a strategy of a player is a function that maps every finite history
(i.e., labelled forest of some finite depth as above)
to a probability distribution on the set of tuples of actions
for the entities in the current population
(i.e. at the bottom level of the  forest) that are controlled by that player. 
Let $\Psi_1, \Psi_2$ be the set of all strategies of players 1, 2.
We say that a strategy is {\em deterministic} if for every history it chooses 
one tuple of actions with probability 1.  We say that a 
strategy is 
{\em static} if 
for each type $T_i$ controlled by that player the strategy always chooses 
the same action $a_i$, or the same probability distribution on actions,
for all entities of type $T_i$ in all histories.\footnote{In \cite{esy-icalp12} we called 
a strategy ``static'' if it was both deterministic and static.  In this paper
we will refer to these as ``deterministic static'' strategies, because
we will also need ``randomized static'' strategies, and want to differentiate
between them.} 
Our notion of an arbitrary strategy is quite general (it can depend on 
 all the details of the entire history, and be randomized, etc.).
However,  it was shown  in \cite{rmdp}
that for the objective of optimizing extinction probability,
both players have optimal static strategies in BSSGs.
We shall see that this is not the case for BMDPs or BSSGs with
the reachability objective.  

Let us now observe, as mentioned in the Introduction, 
a natural ``duality'' between the objective of
optimizing extinction probability and that of optimizing
reachability probability.
A BMDP or BSSG can also be viewed as a random/controlled
process for generating a node-labeled, not necessarily finite, tree
(or a forest, in case the process is started with a population larger than 1).
The nodes of the tree denote objects, nodes are labeled by their type,
and the edges in the tree denote the parent-child relationships:
when a rule $T_i \rightarrow \alpha_r$ is applied to some node $v$ of
type $T_i$ in the tree, the children of node $v$ will be  
in 1-1 correspondence with the multi-set of types given by $\alpha_r$.
For a given BSSG, optimizing the extinction probability (i.e., the
probability of generating a finite tree), starting from an object of a
given type, can be rephrased as a ``{\em universal reachability}''
objective on a slightly modified BSSG, where the objective is to
optimize the probability of eventually reaching a target type
on {\em all}  paths starting at the root of
the generated tree.
Specifically, the target type is a newly introduced type, 
called ${\mathtt{death}}$, and for all types $T_i$, 
every rule $T_i \rightarrow \emptyset$
in the original BSSG 
is replaced by the rule $T_i \rightarrow {\mathtt{death}}$
in the modified BSSG (with the same probability).
Likewise, the ``universal reachability''
objective for any BSSG can be rephrased  as the objective of optimizing 
extinction probability in a slightly modified BSSG.
Namely, for all types $T_i$,  every rule  
$T_i \rightarrow \alpha_r$ in
the original BSSG,  where the multiset $\alpha_r$ is nonempty, is replaced by the rule $T_i \rightarrow \alpha'_r$
(with the same probability) in the revised BSSG,
where the multiset $\alpha'_r$ is the same as 
$\alpha_r$ except that all copies of
the target type have been removed from $\alpha'_r$;
moreover for any non-target type $T_i$,  a rule in the original BSSG of the form
$T_i \rightarrow \emptyset$ is replaced by the rule $T_i \rightarrow 
{\mathtt{dead}}$
(with the same probability) in the revised BSSG, where ${\mathtt{dead}}$
is a new type having only one associated rule: ${\mathtt{dead}} \rightarrow {\mathtt{dead}}$, with probability 1.

By contrast, the
{\em reachability} problem that we study in this paper is precisely the
``{\em existential reachability}'' objective for BMDPs, namely optimizing
the probability of reaching the target type on {\em some} path in the
generated tree.  

Let us now consider in more detail the (non-)reachability objective.
 For a given initial population $\mu \in \nat^n$, with $(\mu)_{f^*} = 0$,
and given
 integer $k \geq 0$, and strategies $\sigma \in \Psi_1$,
 $\tau \in \Psi_2$, we denote by $g^{k}_{\sigma,\tau}(\mu)$
 the probability that the process with initial population $\mu$,
 and strategies $\sigma, \tau$ does {\em not} reach a 
population with an object of type $T_{f^*}$ in at most $k$ steps.
In other words, this is the probability that for all $0 \leq d \leq k$,
we have $(X_d)_{f^*} = 0$.
Let us denote by $g^{*}_{\sigma,\tau}(\mu)$ the probability
that $(X_d)_{f^*} = 0$ for all $d \geq 0$.

We let $g^k(\mu) = \sup_{\sigma \in \Psi_1} \inf_{\tau \in \Psi_2} g^{k}_{\sigma,\tau}(\mu)$,
 and $g^*(\mu) = \sup_{\sigma \in \Psi_1} \inf_{\tau \in \Psi_2} g^{*}_{\sigma,\tau}(\mu)$;
 the last quantity is the {\em value} of the non-reachability game 
for the initial population $\mu$.
Likewise $g^k(\mu)$ is the value of the $k$-step non-reachability game.
We will show that  determinacy holds for these games, 
 i.e. $g^*(\mu) = \sup_{\sigma \in \Psi_1} \inf_{\tau \in \Psi_2} g^{*}_{\sigma,\tau}(\mu)
 = \inf_{\tau \in \Psi_2} \sup_{\sigma \in \Psi_1} g^{*}_{\sigma,\tau}(\mu)$,
 and similarly for $g^k(\mu)$.
However, unlike the case for extinction probabilities (\cite{rmdp}),
it does {\em not} hold that
both players have optimal static strategies.

If $\mu$ has a single entity of type $T_i$, we will write
 $g^*_i$ and $g^k_i$ instead of $g^*(\mu)$ and $g^k(\mu)$.
 
Given a BMDP (or BSSG), the goal is to compute the vector $g^*$
 of the $g^*_i$'s, i.e. the vector of non-reachability values
 of the different types.
 As we will see, from the $g^*_i$'s, we can compute
 the value $g^*(\mu)$ for any initial population $\mu$,
 namely $g^*(\mu) = f(g^*,\mu) :=  \Pi_i (g^*_i)^{\mu_i}$.
 The vector of reachability values $r^*$ is of course $r^* = \vone -g^*$,
 where $\vone$ is the all-1 vector; the reachability value
 for initial population $\mu$ is $r^*(\mu) = \vone- g^*(\mu)$.

We shall associate a system of min/max probabilistic polynomial Bellman
equations, $x=P(x)$, to each given BMDP 
or BSSG, that contains one variable $x_i$ and one equation
$x_i = P_i(x)$ for each type $T_i$,
 such that the vector $g^*$ of values of the BSSG non-reachability 
game for the different starting types is given by
the greatest fixed point (GFP) solution of $x=P(x)$ in $[0,1]^n$.
We need some notation first in order to introduce these Bellman equations.

For an $n$-vector of variables $x = (x_1,\ldots,x_n)$, and 
a vector $v \in \nat^n$,  we use the shorthand notation $x^v$ to denote
the monomial $x_1^{v_1}\ldots x^{v_n}_{n}$.
Let $\langle \alpha_r \in \nat^n \mid r \in R \rangle$ be a finite set
of $n$-vectors of natural numbers, indexed by the set $R$.
Consider a multi-variate polynomial $P_i(x) = \sum_{r \in R} p_r x^{\alpha_r}$,
for some rational-valued coefficients $p_r$, $r \in R$.
We shall call $P_i(x)$ a {\bf\em probabilistic polynomial} if  
$p_r \geq 0$ for all $r \in R$, and $\sum_{r \in R} p_r \leq 1$.

\begin{definition} A {\bf probabilistic polynomial system of equations}, 
$x = P(x)$, which we shall call
a  {\bf PPS},  
is a system of $n$ equations, $x_i = P_i(x)$,  in $n$ variables $x = (x_1,x_2,\ldots,x_n)$,
  where for all $i \in \{ 1,2,\ldots,n\}$,  $P_i(x)$ is a probabilistic polynomial.

A {\bf maximum-minimum probabilistic polynomial system of equations}, $x=P(x)$,
called a  {\bf max-minPPS} 
is a system of $n$ equations in $n$ variables $x= (x_1,x_2,\ldots,x_n)$,  
where for all $i \in \{1,2,\ldots,n\}$, either: 
\begin{itemize}
\item {\tt Max-polynomial}: $P_i(x) = \max  \{q_{i,j}(x): j \in \{1,\ldots,m_i\}\}$,   Or: 

\item {\tt Min-polynomial}: $P_i(x) = \min  \{q_{i,j}(x): j \in \{1,\ldots,m_i\}\}  \ $
\end{itemize}

\noindent where each $q_{i,j}(x)$ is a probabilistic polynomial, for every 
$j \in \{1,\ldots,m_i\}$.

We shall call such a system a {\bf maxPPS} (respectively, a {\bf minPPS})
if for every $i \in \{1,\ldots,n\}$, $P_i(x)$ is a {\tt Max-polynomial} (respectively, a {\tt Min-polynomial}).

Note that we can view a PPS in n variables 
as a maxPPS, or as a minPPS, where $m_i =1$ for every $i \in \{1,\ldots,n\}$.
\end{definition}

For computational purposes we assume that all the coefficients are rational.
We assume that the polynomials in a system are given in sparse form,
i.e., by listing only the nonzero terms, with the coefficient and the nonzero exponents 
of each term given in binary. 
We let $|P|$ denote the total bit encoding length of a system $x=P(x)$
under this representation.

We use  {\bf max/minPPS} to refer to a system of equations, $x=P(x)$, 
that is either a maxPPS or a minPPS. 
We refer to systems of equations containing both max and min equations
as {\bf max-minPPSs}.

It was shown in \cite{rmdp} that any max-minPPS, $x=P(x)$, 
has a {\bf least fixed point} ({\bf LFP}) 
solution, $q^* \in [0,1]^n$, i.e.,  $q^* = P(q^*)$ and if $q = P(q)$ 
for some $q \in [0,1]^n$ then $q^* \leq q$
(coordinate-wise inequality).  
In fact, $q^*$ corresponds to the vector of {\em values} of
a corresponding {\em Branching Simple Stochastic Game} with 
the objective of {\em extinction}, starting at each type.
As observed in \cite{rmc, rmdp},  $q^*$ may in general
contain irrational values, even in the case of pure PPSs
(and the corresponding multi-type Branching process).

In this paper we shall observe that any
max-minPPS, $x=P(x)$, 
also has a {\bf greatest fixed point} ({\bf GFP}) 
solution, $g^* \in [0,1]^n$, i.e.,  such that
$g^* = P(g^*)$ and if $q = P(q)$ 
for some $q \in [0,1]^n$ then $q \leq g^*$
(coordinate-wise inequality).  
In fact, in this case $g^*$ corresponds to the vector of {\em values} of
a corresponding {\em branching simple stochastic game} where
the objective of the two players is
to maximize/minimize the probability of
{\em \underline{not} reaching an undesired type (or set of types)} 
starting at each type.
Again, $g^*$ may contain irrational coordinates, so we in
general want to approximate its coordinates (and the coordinates
of $(\mathbf{1}- g^*)$ which constitute {\em reachability} values)
to desired precision.    For a countable set $S$, let $\Delta(S)$
denote the set of probability distributions on $S$, i.e., the set of
functions $f:S \rightarrow [0,1]$ such that $\sum_{s \in S} f(s) = 1$.

\begin{definition} We define a (possibly \underline{randomized}) {\bf policy} 
for max (min) in a max-minPPS, $x=P(x)$, to be a function 
$\sigma: \{1,\ldots,n\} \rightarrow \Delta(\mathbb{N})$
that assigns a probability distribution to each variable $x_i$
for which $P_i(x)$ is a max- (respectively, min-) polynomial, 
such that the support of $\sigma(i)$ is
a subset of $\{1,\ldots, m_i\}$, the possible 
$m_i = |A_i|$ different actions (i.e., choices of polynomials) available 
in $P_i(x)$.
\end{definition}

Intuitively, policies are akin to static strategies for BMDPs and BSSGs.
For each variable, $x_i$, a policy 
selects 
a probability
distribution over the probabilistic polynomials, $q_{i,\sigma(i)}(x)$,
that appear on the RHS of the equation $x_i = P_i(x)$, and
which $P_i(x)$ is the maximum/minimum over.

\begin{definition} 
For a max-minPPS, $x=P(x)$, and policies $\sigma$ and $\tau$
for the $\max$ and $\min$ players, respectively,
we write $x=P_{\sigma,\tau}(x)$ for the
PPS obtained by fixing both these policies.
 We write
$x=P_{\sigma,*}(x)$ for the minPPS obtained by fixing $\sigma$ for the 
max player, and
$x=P_{*,\tau}(x)$ for the maxPPS obtained by fixing $\tau$ for the min player. 
More specifically, note that for  
policy $\sigma$ for player $\max$,
we define the minPPS $x=P_{\sigma,*}(x)$
by $(P_{\sigma,*})_i(x) := \sum_{a \in A_i} \sigma(i)(a) \cdot q_{i,a}$,
for all $i$ that belong to player $\max$, and 
otherwise $(P_{\sigma,*})_i(x) := P_i(x)$.
We similarly define $x=P_{*,\tau}(x)$ and $x=P_{\sigma,\tau}(x)$.

For a maxPPS (or minPPS), $x=P(x)$, and policy $\sigma$ for
$\max$  (for $\min$),  we shall use the abbreviated notation
$x=P_\sigma(x)$   instead of $x=P_{\sigma,*}(x)$
(instead of $x=P_{*,\sigma}(x)$, respectively).

For a max-minPPS, $x =P(x)$, and a (possibly randomized) policy, $\sigma$
for $\max$,
we use $q^*_{\sigma,*}$ and $g^*_{\sigma,*}$ to denote the LFP and GFP solution vectors
for the corresponding minPPS $x=P_{\sigma,*}(x)$, respectively.
Likewise we use $q^*_{*,\tau}$ and $g^*_{*,\tau}$ to define the LFP and 
GFP solutions
of the maxPPS $x=P_{*,\tau}(x)$.
Similarly, for a maxPPS (or minPPS), $x=P(x)$, and a policy, $\sigma$,
we use $q^*_\sigma$ and $g^*_\sigma$ to denote the LFP and GFP of $x=P_\sigma(x)$.
\end{definition}

\begin{definition} For a max-minPPS, $x=P(x)$, a policy $\sigma^*$ 
is called {\bf optimal} for $\max$ 
for the LFP (respectively, the GFP) if 
$q^*_{\sigma^*,*} = q^*$  (respectively $g^*_{\sigma^*,*} = g^*$).

An optimal policy $\tau^*$ for $\min$ for the LFP and GFP, respectively,
is defined similarly. 

For $\epsilon > 0$, a policy $\sigma'$ 
for $\max$ is called {\bf $\epsilon$-optimal} for the LFP 
(respectively GFP) , if
$||q^*_{\sigma',*} - q^*||_{\infty} \leq \epsilon$
(respectively, $||g^*_{\sigma',*} -g^*||_{\infty} \leq \epsilon$).
An $\epsilon$-optimal policy $\tau'$ for $\min$ is defined
similarly.
\end{definition}

It is convenient to put max-minPPSs in the following simple form.

\begin{definition} A max-minPPS in {\bf simple normal form (SNF)}, $x= P(x)$, 
is a system of $n$ equations in $n$ variables $x_1,x_2,\ldots,x_n$  where each $P_i(x)$ for $i = 1,2,\ldots,n$ 
is in one of three forms:
\begin{itemize}
\item {\tt Form L}: $P_i(x) = a_{i,0} + \sum_{j=1}^n a_{i,j} x_j$,
where  $a_{i,j} \geq 0$ for all $j$, and  such that $\sum_{j=0}^n a_{i,j} \leq 1$
\item {\tt Form Q}:  $P_i(x) = x_j x_k$ for some $j,k$
\item {\tt Form M}: $P_i(x) =\max  \{ x_j ,  x_k \}$ 
or $P_i(x) = \min  \{x_j, x_k \}$, for some $j,k$;\\
we sometimes differentiate these two cases as {\tt Form M$_{\max}$} and 
{\tt M$_{\min}$}, respectively.
\end{itemize}
We define {\bf SNF form} for max/minPPSs analogously: only the definition of ``{\tt Form M}'' changes
(restricting to $\max$ or $\min$, respectively).
\end{definition}

In the setting of a max-minPPSs in SNF form, we will often say that a variable 
has form or type L, Q, or M, to mean
that $P_i(x)$ has the corresponding form.
Also, for simplicity in notation, when 
we talk about a deterministic policy, if $P_i(x)$ has form $M$, 
say $P_i(x) \equiv \max \{ x_j, x_k\}$,
then when it is clear from the context we will use 
$\sigma(i) = k$ to mean that the 
policy $\sigma$ chooses $x_k$ among the two choices $x_j$ and $x_k$ 
available in $P_i(x) \equiv \max \{x_j,x_k\}$.

\begin{proposition}[cf. Proposition 7.3 \cite{rmc}]
\label{prop:snf-form}
Every max-minPPS, $x = P(x)$, can be transformed in P-time
to an ``{\em equivalent}''  max-minPPS,  $y=Q(y)$  in SNF form,
such that  $|Q| \in O( |P|  )$.
More precisely, the variables $x$ are 
a subset of the variables $y$, and both the LFP 
and GFP of  $x = P(x)$
are, respectively, the projection of the LFP and GFP of  $y=Q(y)$,
onto the variables $x$, and furthermore
an optimal policy (respectively, $\epsilon$-optimal
policy) for the LFP (respectively, GFP) of 
 $x = P(x)$ can be obtained in P-time from an
optimal (resp., $\epsilon$-optimal) policy 
for the LFP (respectively, GFP) of $y=Q(y)$.
\end{proposition}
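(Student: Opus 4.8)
The plan is to reduce $x=P(x)$ to SNF form by repeatedly introducing fresh auxiliary variables, each set equal to a subexpression occurring on the right-hand sides, until every equation has one of the three allowed forms, and then to argue that this transformation is transparent to fixed points, to the LFP/GFP, and to policies. First I would handle the max/min structure: for each equation $x_i=\max\{q_{i,1}(x),\ldots,q_{i,m_i}(x)\}$ (and symmetrically for $\min$), introduce a fresh variable $w_{i,j}$ with equation $w_{i,j}=q_{i,j}(x)$, so that $x_i=\max\{w_{i,1},\ldots,w_{i,m_i}\}$ is now a $\max$ over variables; then replace this $m_i$-ary $\max$ by a balanced binary tree of {\tt Form M} equations over fresh variables, using $m_i-1$ new variables. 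This turns each branching choice into a cascade of binary $\max$ (or $\min$) nodes.

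Next I would reduce each probabilistic polynomial $q(x)=\sum_r p_r x^{\alpha_r}$ to {\tt Form L} plus {\tt Form Q}. Introduce one variable $z_r$ per monomial with $z_r=x^{\alpha_r}$; the equation defining $q$ then becomes $a_0+\sum_r p_r z_r$, which is {\tt Form L} since the $p_r$ are nonnegative and sum to at most $1$ (the constant monomial contributes to $a_0$, and degree-one monomials are folded in directly without a $z_r$). Each monomial equation $z_r=x^{\alpha_r}$ is then built from {\tt Form Q} equations. The one point needing care for the size bound is that exponents are written in binary and can be exponentially large, so I would compute each power $x_j^m$ by repeated squaring: set $s_0=x_j$ and $s_{k+1}=s_k\cdot s_k$ (each a {\tt Form Q} equation, which permits the two factors to coincide), then multiply together the $s_k$ for the bits set in $m$. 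This uses $O(\log m)$ auxiliary {\tt Form Q} variables per power, so the whole construction introduces a number of variables and equations linear in $|P|$, giving $|Q|\in O(|P|)$ in P-time.

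For correctness I would exhibit a monotone bijection between the fixed-point sets. Order the auxiliary variables so that each is defined (via a {\tt Form L}, {\tt Form Q}, or {\tt Form M} equation) only in terms of the original variables and previously listed auxiliaries; this dependency is acyclic, so any assignment to the original variables $x$ extends uniquely, by a composition of monotone operations ($+$, $\times$, $\max$, $\min$ on $[0,1]$), to a full vector $y=e(x)\in[0,1]^{n'}$ satisfying all auxiliary equations. Writing $\pi$ for the projection onto the $x$-coordinates, one checks that $y=e(x)$ is a fixed point of $y=Q(y)$ precisely when $x$ is a fixed point of $x=P(x)$ (the $Q$-equation for $x_i$ evaluates to $P_i(x)$ once the auxiliaries equal $e(x)$), so $e$ and $\pi$ are mutually inverse order-preserving bijections of the two fixed-point sets. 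Because they are inverse \emph{monotone} bijections, $e$ carries the least (greatest) fixed point of $P$ to the least (greatest) fixed point of $Q$ and $\pi$ carries them back: if $g^*_Q$ is the GFP of $Q$ then $e(g^*_P)$ is a fixed point of $Q$, hence $e(g^*_P)\le g^*_Q$, while $\pi(g^*_Q)\le g^*_P$ gives $g^*_Q=e(\pi(g^*_Q))\le e(g^*_P)$, so $e(g^*_P)=g^*_Q$; the LFP case is identical. Thus the LFP and GFP of $x=P(x)$ are exactly the $x$-projections of those of $y=Q(y)$.

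Finally, for policies I would exploit the tree structure from the first step. A (deterministic or randomized) policy $\hat\sigma$ for $Q$ makes a choice at each {\tt Form M} node; tracing these choices down the binary tree that replaced the $m_i$-ary $\max$ (or $\min$) at $x_i$ induces a probability distribution over its leaves $w_{i,1},\ldots,w_{i,m_i}$, and this distribution is exactly a policy choice $\sigma(i)$ for the original system, computable in P-time by a path computation. Fixing $\hat\sigma$ in $Q$ turns the {\tt Form M} nodes into {\tt Form L} (convex combinations), yielding a PPS $Q_{\hat\sigma}$ that still satisfies the acyclic monotone-subexpression property, so the previous bijection argument applies verbatim and shows that the LFP and GFP of $Q_{\hat\sigma}$ project to those of $P_\sigma$, namely to $q^*_\sigma$ and $g^*_\sigma$. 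Since projection onto a subset of coordinates cannot increase the $\infty$-norm, $\|g^*_\sigma-g^*\|_\infty\le\|g^*_{\hat\sigma}-g^*_Q\|_\infty$ and likewise for the LFP, so optimal policies map to optimal policies and $\epsilon$-optimal policies to $\epsilon$-optimal policies. The main obstacle is the interaction of the two demands in the first two steps: forcing the linear size bound requires the compressed repeated-squaring gadget, and one must then confirm that this gadget still yields the clean monotone bijection and the policy correspondence above. Here the essential new point beyond \cite{rmc} is that preservation of the \emph{greatest} fixed point (not only the least) rests solely on the monotonicity of $e$ and $\pi$, rather than on any argument special to least fixed points.
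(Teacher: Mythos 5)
Your proposal is correct and follows essentially the same route as the paper's own proof: the same gadget constructions (fresh variables for the $\max/\min$ arguments, binary trees/chains of {\tt Form M} equations, one variable per monomial, repeated squaring driven by the binary representation of exponents, and binary product chains), yielding the same $O(|P|)$ size bound. Your explicit monotone-bijection argument (the mutually inverse order-preserving maps $e$ and $\pi$) and the path-product policy correspondence are careful spellings-out of exactly the correctness claims that the paper asserts as ``clear'' and ``obvious,'' so there is no substantive difference in approach.
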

\begin{proof}
We can easily convert, in P-time, any max-minPPS  into SNF form, 
using the following procedure.
\begin{itemize}
\item For each equation $x_i = P_i(x) = \text{max } \{p_1(x), \ldots ,p_m(x)\}$, 
for each $p_j(x)$ on the right-hand-side that is not a variable, add a new variable 
$x_k$, replace $p_j(x)$ with $x_k$ in $P_i(x)$, 
and add the  new equation $x_k = p_j(x)$. 
Do similarly if $P_i(x) = \min \{p_1(x), \ldots, p_m(x) \}$.

\item If $P_i(x) = \text{max } \{x_{j_1},\ldots,x_{j_m}\}$ with $m >2$, then add $m-2$
new variables $x_{i_1}, \ldots, x_{i_{m-2}}$, 
set $P_i(x) =  \text{max } \{x_{j_1},x_{i_1}\}$,
and add the equations $x_{i_1} = \text{max } \{x_{j_2},x_{i_2}\}$, 
$x_{i_2} = \text{max } \{x_{j_3},x_{i_3}\}$, $\ldots$, 
$x_{i_{m-2}} = \text{max } \{x_{j_{m-1}},x_{j_m}\}$. 
Do similarly if $P_i(x) = \min \{x_{j_1},ldots,x_{j_m}\}$ with $m >2$.

\item For each equation $x_i =P_i(x) = \sum_{j=1}^m p_j x^{\alpha_j}$,
where $P_i(x)$ is a probabilistic polynomial that is not just a 
constant or a single monomial, 
replace every (non-constant) monomial $x^{\alpha_j}$ on the right-hand-side
that is not a single variable by a new variable $x_{i_j}$ 
and add the equation  $x_{i_j}=x^{\alpha_j}$.

\item For each variable $x_i$ that occurs in some polynomial with
exponent higher than 1, introduce new variables 
$x_{i_1}, \ldots, x_{i_k}$ where $k$ is the logarithm of
the highest exponent of $x_i$ that occurs in $P(x)$,
and add equations $x_{i_1} = x_i^2$,  $x_{i_2} = x_{i_1}^2$, $\ldots$,
$x_{i_k} = x_{i_{k-1}}^2$.
For every occurrence of a higher power $x_i^l$, $l>1$, of $x_i$ in $P(x)$,
if the binary representation of the exponent $l$ is $a_k \dots a_2 a_1 a_0$,
then we replace $x_i^l$ by the product of the variables $x_{i_j}$  such that the
corresponding bit $a_j$ is 1, and $x_i$ if $a_0=1$.
After we perform this replacement for all the higher powers of
all the variables, every polynomial of total degree >2 is just a product of variables.

\item If a polynomial $P_i(x) = x_{j_1} \cdots x_{j_m}$ in the current system
is the product of $m>2$ variables, then add $m-2$
new variables $x_{i_1}, \ldots, x_{i_{m-2}}$, 
set $P_i(x) = x_{j_1}x_{i_1}$,
and add the equations $x_{i_1} = x_{j_2} x_{i_2}$, 
$x_{i_2} = x_{j_3} x_{i_3}$, $\ldots$, 
$x_{i_{m-2}} = x_{j_{m-1}} x_{j_m}$. 
\end{itemize}
Now all equations are of the form L, Q, or M.

The above procedure allows us to 
convert any max-minPPS into one in SNF form by introducing $O(|P|)$ new variables and 
blowing up the size of $P$ by a constant factor $O(1)$. 
It is clear that both the LFP and the GFP of $x=P(x)$ arise as the projections
of the LFP and GFP of $y=Q(y)$ onto the $x$ variables.
Furthermore, there is an obvious (and easy to compute) bijection between  
policies for the resulting SNF form max-minPPS and the original max-minPPS.
  \end{proof}

Thus from now on, and for the
rest of this paper {\em we may assume, without loss of generality, that all max-minPPSs are in  SNF normal form. }

A non-trivial fact established in \cite{rmdp} is that
for the LFP of a max-minPPS,
both players always  have an optimal deterministic policy:

\begin{theorem}[\cite{rmdp}, Theorem 2] \label{optexist} 
For any max-minPPS, $x=P(x)$,  
for both the maximizing and minimizing player
there always exists an optimal {\em deterministic} policy,
for the LFP.
\end{theorem}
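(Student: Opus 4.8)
The plan is to establish the two halves of the statement -- an optimal deterministic policy $\tau^*$ for $\min$ and an optimal deterministic policy $\sigma^*$ for $\max$, both with respect to the LFP -- separately, because with respect to the LFP the two players behave very differently. Throughout I will use two standard facts about a max-minPPS $x=P(x)$ in SNF form: the operator $P$ is monotone and continuous on $[0,1]^n$, so by Kleene's theorem $q^*$ is the limit of the value-iteration sequence $\vzero, P(\vzero), P^2(\vzero),\dots$; and if two monotone operators satisfy $F \le G$ pointwise on $[0,1]^n$ then $\textrm{LFP}(F) \le \textrm{LFP}(G)$. For a vector $v$, I call a deterministic policy \emph{greedy with respect to $v$} if at every {\tt Form M} variable $i$ (say $P_i \equiv \max\{x_j,x_k\}$ or $\min\{x_j,x_k\}$) it selects a child attaining the $\max$ (resp.\ $\min$) of $v_j, v_k$.

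The $\min$ player is the easy case. Let $\tau^*$ be greedy with respect to $q^*$. At each $\min$-variable $i$ we have $(P_{*,\tau^*})_i(x) = x_{\tau^*(i)} \ge \min\{x_j,x_k\} = P_i(x)$ for all $x\ge \vzero$, and the two operators agree at all other variables, so $P_{*,\tau^*} \ge P$ pointwise and hence $g$-free monotonicity gives $q^*_{*,\tau^*} = \textrm{LFP}(P_{*,\tau^*}) \ge q^*$. Conversely, evaluating at $q^*$ gives $(P_{*,\tau^*})_i(q^*) = q^*_{\tau^*(i)} = \min\{q^*_j,q^*_k\} = q^*_i$, so $q^*$ is itself a fixed point of $P_{*,\tau^*}$ and therefore $q^*_{*,\tau^*} \le q^*$. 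Combining, $q^*_{*,\tau^*} = q^*$, i.e.\ $\tau^*$ is optimal.

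The $\max$ player is the crux, and the symmetric argument breaks down. Choosing $\sigma^*$ greedy with respect to $q^*$ again makes $q^*$ a fixed point of $P_{\sigma^*,*}$, but now this only yields the one-sided bound $q^*_{\sigma^*,*} = \textrm{LFP}(P_{\sigma^*,*}) \le q^*$ (and indeed $q^*_{\sigma,*}\le q^*$ holds for \emph{every} $\sigma$, since $P_{\sigma,*}\le P$); a matching lower bound does \emph{not} follow, because a fixed point need not bound an LFP from below. In fact arbitrary tie-breaking can fail: for $x_1 = \max\{x_1, 1/2\}$ we have $q^*_1 = 1/2$, yet the greedy policy that at the tie selects $x_1$ produces $x_1 = x_1$, whose LFP is $0$. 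So the genuine task is to produce a single deterministic $\sigma^*$ for which $q^*$ is the \emph{least} fixed point of the minPPS $P_{\sigma^*,*}$. I would obtain such a $\sigma^*$ by policy iteration: start from any $\sigma_0$, and given $\sigma_t$ with value $v_t := q^*_{\sigma_t,*}$, let $\sigma_{t+1}$ be greedy with respect to $v_t$, breaking ties so as to keep $\sigma_t(i)$ whenever $\sigma_t(i)$ already attains the $\max$. Granting the monotone improvement step $q^*_{\sigma_{t+1},*} \ge q^*_{\sigma_t,*}$, the values strictly increase whenever the policy actually changes: a change forces $P_{\sigma_{t+1},*}(v_t) \gneq v_t$, whence $v_{t+1} \ge P_{\sigma_{t+1},*}(v_t) \gneq v_t$ by monotonicity and $v_{t+1}\ge v_t$. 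As there are only finitely many deterministic policies, the process halts at some $\sigma^*$ that is greedy with respect to its own value $v^* := q^*_{\sigma^*,*}$. For such $\sigma^*$ one checks $v^* = P_{\sigma^*,*}(v^*) = P(v^*)$, so $v^*$ is a fixed point of $P$; hence $q^* = \textrm{LFP}(P) \le v^* \le q^*$, giving $q^*_{\sigma^*,*} = v^* = q^*$, so $\sigma^*$ is optimal.

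The main obstacle is exactly the policy-improvement step $\textrm{LFP}(P_{\sigma_{t+1},*}) \ge \textrm{LFP}(P_{\sigma_t,*})$, which is the true source of the $\max/\min$ asymmetry. From $P_{\sigma_{t+1},*}(v_t) \ge v_t$ one only extracts \emph{some} fixed point of $P_{\sigma_{t+1},*}$ above $v_t$, not a lower bound on its \emph{least} fixed point; and the naive domination tricks (e.g.\ trying to show $\min(\textrm{LFP}(P_{\sigma_{t+1},*}),\, v_t)$ is a pre-fixed point of $P_{\sigma_t,*}$ and invoking Knaster--Tarski) break precisely at the {\tt Form M} variables where $\sigma_t$ and $\sigma_{t+1}$ disagree. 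What rescues the lemma is that $v_t$ is not an arbitrary vector but an LFP, so every positive coordinate of $v_t$ is supported, through $\sigma_t$'s choices, by a well-founded derivation reaching a constant source in the value iteration $\vzero, P_{\sigma_t,*}(\vzero),\dots$; switching to children of no-smaller $v_t$-value can only preserve such support, and a ``sourceless'' mutually referential positive cycle is impossible (if two variables of such a $2$-cycle both had value $p>0$ while every alternative child had value $<p$, then $\sigma_t$ would be forced to route both into the cycle, forcing their value to $0$, a contradiction). I would therefore prove the improvement lemma by formalizing this support as a finite rank extracted from the value-iteration approximants of $v_t$ and showing that greedy improvement never destroys a variable's finite rank. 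This well-founded-support analysis is the technical heart of the $\max$ case, and I expect it to be the hardest part of the whole argument.
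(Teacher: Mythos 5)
Two preliminary remarks: this paper never proves Theorem \ref{optexist} --- it is imported wholesale from \cite{rmdp} (Theorem 2 there), so there is no internal proof to compare against, and your attempt must be judged on its own merits. On those merits, your proposal is half a proof. The $\min$ half is correct and complete: greedy choice w.r.t.\ $q^*$ gives $P_{*,\tau^*} \geq P$ pointwise (hence $q^*_{*,\tau^*} \geq q^*$ by comparing Kleene iterations), while $q^*$ is a fixed point of $P_{*,\tau^*}$ (hence $q^*_{*,\tau^*} \leq q^*$). Your diagnosis of the $\max$ half is also accurate: greedy choice w.r.t.\ $q^*$ only yields $q^*_{\sigma^*,*} \leq q^*$, your tie-breaking counterexample $x_1 = \max\{x_1, 1/2\}$ is valid, and the policy-iteration skeleton (switch only on strict improvement, keep current choice on ties, conclude from a greedy-w.r.t.-own-value policy that its value is a fixed point of $P$ and hence equals $q^*$) would indeed finish the proof. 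But that skeleton rests entirely on the improvement lemma $q^*_{\sigma_{t+1},*} \geq q^*_{\sigma_t,*}$, which you explicitly assume (``Granting the monotone improvement step'') and never prove; your closing paragraph is a plan, not an argument. That is a genuine gap, at exactly the step you yourself call the technical heart.

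Moreover, the plan as described would not close the gap. The well-founded-support/``rank'' idea captures a qualitative property --- which coordinates of an LFP are positive, via finite derivations to constant sources --- but the improvement lemma is quantitative: from $P_{\sigma_{t+1},*}(v_t) \geq v_t$ you must conclude $v_t \leq \mathrm{LFP}(P_{\sigma_{t+1},*})$, and preservation of positivity does not by itself prevent the LFP from dropping, say, from $0.5$ to $0.3$ at some coordinate. What actually blocks a pre-fixed point from exceeding an LFP is linear degeneracy: in this paper's terminology the needed fact is Lemma \ref{lem:ldf-uniquefp} --- for a linear-degenerate-free PPS and $y < \vone$, $P(y) \geq y$ implies $y \leq q^*$ --- whose proof is analytic (spectral radius of the Jacobian, Perron--Frobenius, as in Lemma \ref{lem:ldf-Ndefined}), not combinatorial. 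A complete argument therefore needs two pieces you do not supply: (a) a proof that after the switch, the relevant PPS (obtained by also fixing an optimal min policy, which exists by your own $\min$ half) has no linear-degenerate bottom SCC --- this is the general form of your ``no sourceless cycle'' claim, and your $2$-cycle argument does not cover cycles passing through {\tt L}, {\tt Q}, and min variables or unswitched max variables; the correct argument is that in a linear-degenerate bottom SCC each coordinate's equation is a convex combination of the others, which forces all values in the SCC to be equal and contradicts the strict improvement at a switched variable; and (b) an invocation of Lemma \ref{lem:ldf-uniquefp}, with separate handling of coordinates where $v_t$ equals $1$ or $0$. This combination of policy improvement with maintenance of linear-degeneracy-freeness is precisely what this paper carries out (for the GFP) inside Lemmas \ref{lem:gnm-ldf} and \ref{lem:max-minpps-cases-g-less-1}; your intuition points in the right direction, but without that machinery the $\max$ half of the theorem remains unproven.
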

 
As we shall show,
while in general for a max-minPPS $x=P(x)$ there
does exist an optimal deterministic policy $\sigma^*$
for the {\em maximizing} player,
for the GFP,
in general there does {\em not} exist
any optimal policy at all for 
the {\em minimizing} player for the GFP of a 
minPPS $x=P(x)$.

Nevertheless, we shall show that for any $\epsilon > 0$,
there always exists an $\epsilon$-optimal {\em randomized} policy
for the GFP for the minimizing player in 
any max-minPPS. Furthermore, we shall show how to compute such a
policy in P-time for minPPS.

\begin{definition}
The {\bf dependency graph} of a max-min PPS $x=P(x)$,
is a directed graph that has one node for each variable $x_i$,
and contains an edge $(x_i,x_j)$ if $x_j$ appears in $P_i(x)$.
The dependency graph of a BSSG has one node for each type,
and contains an edge $(T_i,T_j)$ if there is an action $a \in A_i$
and a rule $T_i \stackrel{p_r}{\rightarrow} \alpha_r$ in $R(T_i,a)$ such that $T_j$
appears in $\alpha_r$.
\end{definition}

\subsection{Generalized Newton's Method}

The problem of approximating efficiently the LFP of a PPS was solved 
in \cite{ESY12},
by using Newton's method (combined with suitable rounding), applied 
after elimination of the variables with LFP value $0$ and $1$.
We first recall the definition of Newton iteration for PPSs.

\begin{definition} For a PPS $x=P(x)$ we use 
$B(x)$ to denote the Jacobian matrix of partial derivatives of $P(x)$,
i.e., $B(x)_{i,j} := \frac{\partial P_i(x)}{\partial x_j}$. 
For a point $x \in \mathbb{R}^n$, if $(I-B(x))$ is 
non-singular, then we define one Newton iteration at $x$ via the operator:
$$\mathcal{N}(x) = x + (I-B(x))^{-1}(P(x) -x)$$
Given a max/minPPS, $x=P(x)$, and a policy $\sigma$,  
we use $\mathcal{N}_\sigma(x) $ to denote the Newton operator
of the PPS $x=P_{\sigma}(x)$; 
i.e.,  letting $B_\sigma(x)$ denote the Jacobian of $P_\sigma(x)$, 
if $(I-B_\sigma(x))$ is 
non-singular at a point $x \in \mathbb{R}^n$,
then $\mathcal{N}_\sigma(x) = x + (I-B_\sigma(x))^{-1}(P_\sigma(x) -x)$.
\end{definition}

\begin{definition}  For a max/minPPS, $x=P(x)$, with $n$ variables (in SNF form),
the {\bf linearization of $P(x)$ at a point ${\mathbf y} \in \mathbb{R}^n$},   
is a system of max/min linear functions denoted by $P^{y}(x)$, which has the
following form:

if $P_i(x)$ has form L or M,  then $P^y_i(x) = P_i(x)$, and

if $P_i(x)$ has form Q, i.e., $P_i(x) =x_jx_k$ for some $j$,$k$,
then $$P^y_i(x) = y_jx_k + x_jy_k - y_jy_k$$
\end{definition}

We can consider the linearization of a PPS, $x=P_\sigma(x)$, 
obtained as the result of fixing a policy, $\sigma$, for a 
max/minPPS, $x =P(x)$.

\begin{definition} $P^y_\sigma(x) := (P_\sigma)^y(x)$.\end{definition}

Note than the linearization $P^y(x)$ only changes equations of
form Q, and using a policy $\sigma$ only changes equations of form M, so these
operations are independent in terms of the effects they
have on the underlying equations, and thus
$P^y_\sigma(x) \equiv (P_\sigma)^y(x) = (P^y)_\sigma(x)$.

We now recall and adapt from \cite{esy-icalp12} 
the definition of distinct iteration operators for a maxPPS and a minPPS,
both of which we shall refer to with the overloaded notation $I(x)$. 
These operators serve as the basis for
{\em Generalized Newton's Method} (GNM) to be applied to maxPPSs and minPPSs, respectively.
We need to slightly adapt the definition of operator $I(x)$, specifying the conditions
on the GFP $g^*$ under which the operator is well-defined:

\begin{definition}
For a maxPPS, $x=P(x)$, with GFP $g^*$, such that $\vzero \leq g^* < \vone$,
and for a real vector $y$ such
that $\vzero \leq y \leq g^*$, we define the operator $I(y)$ to be the 
{\em unique} optimal solution, $a \in \real^n$, to the following 
mathematical program: $\quad \quad \mbox{\em Minimize:} \  \ \sum_i a_i \ ; \quad  
 \mbox{\em Subject to:} \quad   P^y(a) \leq a$.

For a minPPS, $x=P(x)$, with GFP $g^*$, such that $\vzero \leq g^* < \vone$, 
and for a real vector $y$ such 
that $\vzero \leq y \leq g^*$, we define the operator $I(y)$ to be the 
{\em unique} optimal solution  
$a \in \real^n$ 
to the following mathematical program:
$ \quad \quad \mbox{\em Maximize:} \  \ \sum_i a_i \ ; \quad 
 \mbox{\em Subject to:} \quad   P^y(a) \geq a$.

\end{definition}

\noindent
In both cases, the mathematical programs can be solved using Linear Programming.
In the case of a maxPPS, the constraint $P^y_i(a) \leq a_i$
for each variable $x_i$ of form L or Q is linear,
and the constraint for a variable $x_i$ of form M with $P_i(x)= \max(x_j,x_k)$
can be replaced by the two inequalities $a_j \leq a_i$ and $a_k \leq a_i$.
Similarly, in the case of a minPPS, the constraints for variables of form
L and Q are linear, and the constraint $P^y_i(a) \geq a_i$
for a variable $x_i$ of form M with $P_i(x)= \min(x_j,x_k)$
can be replaced by the two inequalities $a_j \geq a_i$ and $a_k \geq a_i$.
A priori, it is not clear whether the mathematical programs
have a unique solution, and hence whether the  above ``definitions'' of $I(x)$ for
maxPPSs and minPPSs are well-defined.  We will see that they are
(again, adapting facts for GNM applied to LFP computation from \cite{esy-icalp12}). 

We require a {\em rounded} version of GNM,  
defined in \cite{esy-icalp12} as follows. \\
{\bf\em GNM, with
rounding parameter $h$}:  Starting
at $x^{(0)} := {\mathbf{0}}$,  For $k \geq 0$, compute
$x^{(k+1)}$ from $x^{(k)}$ as follows: first calculate $I(x^{(k)})$,
then  for each coordinate $i=1,2,\ldots,n$, set $x_i^{(k+1)}$ to be the 
maximum (non-negative) multiple of $2^{-h}$ which is $\leq 
\text{max}\{0,I(x^{(k)})_i\}$. 
(In other words, round $I(x^{(k)})$ {\em down} to the nearest $2^{-h}$ and ensure it is non-negative.)

\section{Greatest Fixed Points capture non-reachability values}

\label{sec:gfp}

For any given BSSG, $\mathcal{G}$, with a specified special target type $T_{f^*}$, 
we will construct
a max-minPPS, $x=P(x)$, and show that the vector $g^*$ of non-reachability
values for  $(\mathcal{G}, T_{f^*})$ is precisely the {\em greatest fixed point} $g^* \in [0,1]^n$
of $x=P(x)$.

The system $x=P(x)$ will have one variable $x_i$ and one equation 
$x_i = P_i(x)$, for 
each type $T_i \neq T_{f^*}$.
For each $i \neq f^*$,  the min/max probabilistic polynomial $P_i(x)$
is constructed as follows.
For all $j \in A_i$, let 
$R'(T_i,j) := \{r \in R(T_i,j) : (\alpha_r)_{f^*} = 0 \}$
denote the set of rules for type $T_i$ and action $j$ 
that generate a multiset $\alpha_r$ not containing any 
element of type $T_{f^*}$.
$P_i(x)$
contains one probabilistic polynomial
$q_{i,j}(x)$ for each action $j \in A_i$, with 
$q_{i,j}(x) = \sum_{r \in R'(T_i,j)} p_r x^{\alpha_r}$.
In particular, note that we {\em do not} include, in the sum that defines
$q_{i,j}(x)$, any monomial $p_{r'} x^{\alpha_{r'}}$ 
associated with a rule $r'$ which generates at least one object of the special
type $T_{f^*}$.  
Then, if type $T_i$ belongs to the $\max$ player, who aims to {\em minimize}
the probability of {\em not} reaching an object of type 
$T_{f^*}$, we define $P_i(x) \equiv  \min_{j \in A_i} q_{i,j}(x)$.
Likewise, if type $T_i$ belongs to the $\min$ player, whose aim is to {\em maximize}
the probability of {\em not} reaching an object of type $T_{f^*}$, then we define
$P_i(x) \equiv \max_{j \in A_i} q_{i,j}(x)$.

Note the swapped roles that $\max$ and $\min$
play in the equations, versus the goal of the corresponding player in terms of the reachability objective.  
This swap is necessary because, whereas the objectives
of the players are to maximize or minimize reachability probabilities,  the equations we have
constructed will capture, in their {\em greatest fixed point} (GFP) solution, 
the optimal {\em non-reachability} values $g^*$.

The following theorem, which  is key, is analogous to a theorem proved in
\cite{rmdp} which proves a similar relationship between the LFP
of a max-minPPS and the extinction values of a BSSG:

 \begin{theorem}
 \label{bssg-equations}
 The non-reachability value vector
 $g^* \in [0,1]^n$ of the BSSG is equal to the Greatest Fixed Point (GFP) of the operator 
$P(\cdot)$ in $[0,1]^n$. Thus,
$g^* = P(g^*)$, and for all fixed points $g'=P(g')$, $g' \in [0,1]^n$,
$g' \leq g^*$.
 Furthermore, for any initial population $\mu$,
 the optimal non-reachability values satisfy
 $g^*(\mu) = \Pi_i (g^*_i)^{\mu_i}$
 and $g^*(\mu) = \sup_{\sigma \in \Psi_1} \inf_{\tau \in \Psi_2} g^{*,\sigma,\tau}(\mu)
 = \inf_{\tau \in \Psi_2} \sup_{\sigma \in \Psi_1} g^{*,\sigma,\tau}(\mu)$.
In particular, such games are {\em determined}.
 \end{theorem}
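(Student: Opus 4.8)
The plan is to pin down $g^*$ simultaneously as the GFP and as the game value by squeezing both between the finite-horizon values. First I would record that each $P_i$ is monotone and continuous on $[0,1]^n$ (a $\max$/$\min$ of probabilistic polynomials, which have non-negative coefficients), so $P:[0,1]^n\to[0,1]^n$ is monotone and continuous; hence by Knaster--Tarski the GFP $g^*$ exists, and since $P(\vone)\le\vone$ the iterates $P^k(\vone)$ are non-increasing with $g^*=\lim_k P^k(\vone)$. Next I would prove, by induction on $k$, the finite-horizon identity together with finite-horizon determinacy: for every population $\mu$ with $\mu_{f^*}=0$, $g^k(\mu)=\prod_i (P^k(\vone)_i)^{\mu_i}$, and the $k$-step game is determined. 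The base case $k=0$ is $g^0(\mu)=1=\prod_i \vone_i^{\mu_i}$. For the step, the key point is that control is separated by type and that distinct sub-populations evolve independently, so the event ``$T_{f^*}$ is not generated within $k+1$ steps'' factorizes as a product over the individuals of $\mu$ and, after the first step, over the independent offspring sub-populations. Because each entity is controlled by exactly one player, the first-step choice is not a genuine simultaneous-move game but a separable family of single-player optimizations, so $\sup_\sigma\inf_\tau=\inf_\tau\sup_\sigma$ at this stage; evaluating a single entity of type $T_i$ gives $\max_j q_{i,j}(P^k(\vone))$ or $\min_j q_{i,j}(P^k(\vone))$ according to the owner of $T_i$, i.e.\ exactly $P_i(P^k(\vone))$, and independence yields the product form.

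With the finite-horizon facts in hand I would obtain the two ``upper'' inequalities cheaply. For any fixed pair $(\sigma,\tau)$ the non-reachability events within $k$ steps decrease to the infinite-horizon non-reachability event, so by continuity of measure $g^*_{\sigma,\tau}(\mu)=\inf_k g^k_{\sigma,\tau}(\mu)\le g^k_{\sigma,\tau}(\mu)$ for every $k$. Pushing $\inf_\tau$ and $\sup_\sigma$ through, and using finite determinacy $\sup_\sigma\inf_\tau g^k_{\sigma,\tau}(\mu)=\inf_\tau\sup_\sigma g^k_{\sigma,\tau}(\mu)=g^k(\mu)$, this gives both $\sup_\sigma\inf_\tau g^*_{\sigma,\tau}(\mu)\le \lim_k g^k(\mu)=\prod_i (g^*_i)^{\mu_i}$ and $\inf_\tau\sup_\sigma g^*_{\sigma,\tau}(\mu)\le \prod_i (g^*_i)^{\mu_i}$.

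The crux is the matching lower bound $\sup_\sigma\inf_\tau g^*_{\sigma,\tau}(\mu)\ge \prod_i (g^*_i)^{\mu_i}$, which needs a single strategy for the non-reachability maximizer that is good over the whole infinite horizon. Here I would exhibit the deterministic static policy $\sigma^*$ defined at the fixed point by letting $\sigma^*(i)$ choose an action attaining $P_i(g^*)=\max_j q_{i,j}(g^*)$ for each type $T_i$ owned by the maximizer. Fixing $\sigma^*$ yields a minPPS $x=P_{\sigma^*,*}(x)$ with $P_{\sigma^*,*}\le P$ pointwise (restricting a $\max$ to one choice only lowers it), so monotonicity of the GFP in the operator gives $g^*_{\sigma^*,*}\le g^*$; conversely $g^*$ is a fixed point of $P_{\sigma^*,*}$, so $g^*\le g^*_{\sigma^*,*}$, whence $g^*_{\sigma^*,*}=g^*$. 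Fixing $\sigma^*$ leaves a min-BMDP, whose value is an infimum over a single player: using $g^*_{\sigma^*,\tau}(\mu)=\inf_k g^k_{\sigma^*,\tau}(\mu)$ and commuting the two infima (both over the minimizer), $\inf_\tau g^*_{\sigma^*,\tau}(\mu)=\inf_k\inf_\tau g^k_{\sigma^*,\tau}(\mu)=\inf_k\big(\prod_i (P_{\sigma^*,*}^k(\vone)_i)^{\mu_i}\big)=\prod_i (g^*_i)^{\mu_i}$. This supplies the lower bound, and together with the previous paragraph and the trivial $\sup_\sigma\inf_\tau\le\inf_\tau\sup_\sigma$ it sandwiches all three quantities to $\prod_i(g^*_i)^{\mu_i}$, proving both the value/GFP identity and determinacy; the product formula itself falls out of the finite-horizon product form in the limit.

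I expect the main obstacle to be exactly the limit-interchange for the maximizer: unlike the minimizer, whose value commutes with the horizon limit since $\inf_\tau\inf_k=\inf_k\inf_\tau$, the maximizer confronts $\sup_\sigma\inf_k$, where $\sup$ and $\inf$ need not commute, so the naive ``take finite-horizon $\epsilon$-optimal strategies and pass to the limit'' argument fails (such strategies avoid $T_{f^*}$ only up to a fixed depth). The resolution above circumvents this by reading off an explicit infinite-horizon-optimal maximizer policy $\sigma^*$ directly from the fixed point $g^*$ and reducing to the single-player case. A secondary delicate point is the finite-horizon induction, where I must justify carefully that correlating choices or outcomes across independent sub-populations confers no advantage, so that the value genuinely factorizes into the product form.
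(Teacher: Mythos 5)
Your proposal is correct and takes essentially the same route as the paper's own proof: the same finite-horizon induction establishing the product form $g^k(\mu)=\prod_i (P^k(\vone)_i)^{\mu_i}$ and finite determinacy, the same passage to the limit for the upper bounds on $\sup_\sigma\inf_\tau$ and $\inf_\tau\sup_\sigma$, and the same key device for the lower bound---a static deterministic policy $\sigma^*$ chosen to attain $P_i(g^*)$ at the fixed point, which makes $g^*$ the GFP of $P_{\sigma^*,*}$ and reduces the problem to a single-player min-BMDP where the horizon limit and the infimum over strategies commute. The only differences are cosmetic: you phrase the upper bounds via monotonicity of $\sup$/$\inf$ applied to the pointwise inequality $g^*_{\sigma,\tau}(\mu)\le g^k_{\sigma,\tau}(\mu)$, where the paper instantiates explicit finite-horizon optimal strategies $\tau_{k(\epsilon)}$ and an $\epsilon$-argument.
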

 \begin{proof}
 Let $x^k$ denote the $k$-fold application of $P$ on the all-1 vector,
 i.e. $x^0= \mathbf{1}$, and $x^k =P(x^{k-1})$ for $k >0$.
$P(\cdot)$ defines a monotone operator,
 $P : [0,1]^n \rightarrow [0,1]^n$, that maps $[0,1]^n$ to itself.
 Thus, the sequence $x^k$ is (component-wise) monotonically
 non-increasing as a function of $k$, bounded from below by the all-$0$ vector,
 and thus by Tarski's theorem %
it converges to the GFP, $x^* \in [0,1]^n$,
of the monotone operator $P(\cdot)$, as $k \rightarrow \infty$.
 We will first show the following lemma. 

 \begin{lemma}
\label{claim:k-step-value-reach}
 For any integer $k \geq 0$ and
 any finite non-empty initial population $\mu$ 
(expressed as an $n$-vector)
which does 
not contain any element of
type of $T_{f^*}$, the {\em value} 
 $g^k(\mu) := \sup_{\sigma \in \Psi_1} \inf_{\tau \in \Psi_2} g^{k}_{\sigma,\tau}(\mu)$
 of not reaching an element of type $T_{f^*}$ in  $k$ steps
is $g^k(\mu) = f(x^k, \mu) := \Pi_{i=1}^n (x^k_i)^{(\mu)_i}$.
 Furthermore, there are strategies of the two players
 (in fact deterministic strategies), $\sigma_k \in \Psi_1$ and $\tau_k \in \Psi_2$,
 that achieve this value, i.e,
 $g^k(\mu) = \inf_{\tau \in \Psi_2} g^{k}_{\sigma_k,\tau}(\mu) = 
 \sup_{\sigma \in \Psi_1} g^{k}_{\sigma,\tau_k}(\mu) $.
 \end{lemma}
 \begin{proof}
 We show the claim by induction on $k$.
 The basis, $k=0$, is trivial: namely we only have 
variables $x_i$ for each type $T_i \neq T_{f^*}$.
Thus, clearly starting with any finite non-empty 
population of objects of types $T_i \neq T_{f^*}$ the (optimal) probability
of not reaching an object of type $T_{f^*}$ within $0$ steps is $1$. 
 For the induction part, consider the generation of population $X_1$
 from $X_0$ in step 1. We show first that $g^k(\mu) \geq f(x^k,\mu) := 
\Pi_{i=1}^n (x^k_i)^{(\mu)_i}$.
 Consider the following strategy $\sigma_k$ for the max player
(the player trying to maximize the probability of {\em not} reaching the type $T_{f^*}$).
 For each entity in the initial population $X_0 = \mu$ of a max type $T_i$,
 the max player selects in step 1 (deterministically) an action $a \in A_i$
 that maximizes the expression $\sum_{r \in R'(T_i,a)}  p_r f(x^{k-1},\alpha_r)$
 on the right side of the equation $x^k_i =P_i(x^{k-1})$. 
 Once the min player also selects actions for the entities of min type
 in $X_0$, and rules for all the entities are chosen probabilistically
 to generate the  population $X_1$ for time 1, 
 the max player thereafter follows an optimal $(k-1)$-step strategy
 $\sigma_{k-1}$ starting from $X_1$. If we assume inductively that
 $\sigma_{k-1}$ is deterministic, then $\sigma_k$ is also
 deterministic. (It is not static however; the action chosen
 for an entity of a given type in a population $X_i$ 
 in the process may depend on the time $i$.)

 Let $\tau$ be any strategy of the min player.
 Consider a combination of actions chosen with nonzero probability
 by the min player in step 1 for the entities of min type in $X_0=\mu$.
 After this, a combination of rules is chosen 
randomly and independently for all the
 entities of $\mu$ and the population $X_1$ is generated accordingly
 with probability that is the  product of the rule probabilities that were
 applied (because the rules are chosen independently).
 By the induction hypothesis, the  value with which the population $X_1$ 
does not reach a type $T_{f^*}$ in the  next $k-1$ steps (i.e. by time $k$) 
is  $g^{k-1}(X_1) = f(x^{k-1},X_1)$. If, 
for each possible set $X_1$ 
(there are finitely many possibilities), 
we multiply $f(x^{k-1},X_1)$ with the probability of
the combination of rules that can be used in step 1 to generate $X_1$
from $X_0$, and 
we sum this over all possible $X_1$, 
we can write the result as a product
 of $|\mu|$ terms, one for each entity in $\mu$.
 The  term for an entity of max or min type $T_i$
 is $\sum_{r \in R'(T_i,a)}  p_r f(x^{k-1},\alpha_r)$, where $a$ is the action
 selected for this entity by the min or max player in step 1.
 For the max player, we selected an action $a \in A_i$ that maximizes 
 this expression, therefore the term for a $\max$ entity is equal 
to $P_i(x^{k-1})= x^k_i$.
 
For an entity that belongs to the min player, no matter which
 action the player chose, the term is greater than or equal to the minimum
 value over all available actions, which is $P_i(x^{k-1})= x^k_i$.
 Hence, for any combination of actions chosen by the
 min player in step 1, the probability that the process does not reach
an object of type $T_{f^*}$
 by step $k$ under the strategies $\sigma_k, \tau$ is at least $f(x^k, \mu)$.
 Therefore, this holds also if $\tau$ makes a randomized selection
 in step 1, i.e., assigns nonzero probability to more than one
 combinations of actions for the min entities in $\mu$.
 Thus, $\inf_{\tau \in \Psi_2} g^{k}_{\sigma_k,\tau}(\mu) \geq f(x^k, \mu)$
 and hence $g^k(\mu) \geq f(x^k, \mu)$.

 We can give a symmetric argument for the min player to
 prove the reverse inequality.
 Define strategy $\tau_k$ for the min player as follows.
 In step 1, the min player chooses for each entity of min type $T_i$
 in the initial population $\mu$, an action $a \in A_i$ that
 minimizes the expression $\sum_{r \in R'(T_i,a)}  p_r f(x^{k-1},\alpha_r)$ 
 on the right side of the equation $x^k_i =P_i(x^{k-1})$,
 and then, once the max player has chosen actions
 for the max entities of $\mu$, and rules are selected and
 applied to generate the population $X_1$, the  min player follows the
 optimal deterministic strategy $\tau_{k-1}$ starting from $X_1$ (assumed
to exist by induction).
 By a symmetric argument to the max player case,
 it is easy to see that 
 $\sup_{\sigma \in \Psi_1} g^{k,\sigma,\tau_k}(\mu) \leq  f(x^k, \mu)$
 and hence $g^k(\mu) \leq f(x^k, \mu)$.
 It follows that 
 $g^k(\mu) = \inf_{\tau \in \Psi_2} g^{k}_{\sigma_k,\tau}(\mu) = 
 \sup_{\sigma \in \Psi_1} g^{k}_{\sigma,\tau_k}(\mu) = f(x^k, \mu)$.
 
 \end{proof}

 In particular, for singleton initial populations, 
 the Lemma implies that $ g^k_i = x^k_i$ for all types $T_i \neq T_{f^*}$,
 and for all $k \geq 0$.

 Let $x^* = \lim_{k \rightarrow \infty} x^k$ denote 
the Greatest Fixed Point (GFP) 
of the equation $x=P(x)$.
 We will show that for any initial population $\mu$,
 the ``value''  $g^*(\mu) := \sup_{\sigma \in \Psi_1} \inf_{\tau \in \Psi_2} g^{*}_{\sigma,\tau} (\mu)$
of not ever reaching a population containing an object of type 
$T_{f^*}$ satisfies
$g^*(\mu) =
\inf_{\tau \in \Psi_2} \sup_{\sigma \in \Psi_1} g^{*}_{\sigma,\tau}(\mu) = f(x^*, \mu) $.
 In particular, these games are indeed determined.
For singleton populations, 
this  implies that $g^*_i = x^*_i$ for all types $T_i \neq T_{f^*}$.

 Since $x^k$ converges to $x^*$ from above as $k \rightarrow \infty$,  
 the sequence $f(x^k, \mu)$ converges to $f(x^*, \mu)$ from above.
 Thus, for every $\epsilon >0$ there is a $k(\epsilon)$ such that
 $f(x^*,\mu) \leq f(x^{k(\epsilon)}, \mu) < f(x^*, \mu) +\epsilon$.
 
From the proof of Lemma 
\ref{claim:k-step-value-reach},
the strategy $\tau_{k(\epsilon)}$ of the $\min$ player
(who is 
 minimizing the probability of {\em not} reaching $T_{f^*}$ in $k(\epsilon)$ 
rounds), 
satisfies, for all strategies $\sigma \in \Psi_1$,
$g^{*}_{\sigma,\tau_{k(\epsilon)}}(\mu) \leq g^{k(\epsilon)}_{\sigma, \tau_{k(\epsilon)}}(\mu)
 \leq \sup_{\sigma \in \Psi_1} g^{k(\epsilon)}_{\sigma,\tau_{k(\epsilon)}}(\mu) = 
f(x^{k(\epsilon)}, \mu) < f(x^*, \mu)  + \epsilon$.
 Since this holds for every 
$\epsilon >0$,
 it follows that 
 $g^*(\mu) = \sup_{\sigma \in \Psi_1} \inf_{\tau \in \Psi_2} g^{*}_{\sigma,\tau} (\mu)
\leq \inf_{\tau \in \Psi_2} \sup_{\sigma \in \Psi_1} g^{*}_{\sigma,\tau}(\mu)
 \leq f(x^*, \mu)$.

 For the converse inequality,
 let $\sigma^*$ be the static deterministic strategy for the max player
(who is trying to maximize the probability of {\em not}
reaching $T_{f^*}$),
 which always chooses for each
 entity of max type $T_i$ an action $a \in A_i$ that maximizes
 the expression $\sum_{r \in R(T_i,a)}  p_r f(x^{*},\alpha_r)$.
 If we fix the actions for all the $\max$ types according to $\sigma^*$, 
 the BSSG $G$ becomes
 a minimizing BMDP $G'$ where all the max types of $G$ become
 now choice-less or ``random'' types (meaning that no choice is available to 
the max player: it has only one action it can take at every type that
belongs to it). 
Let $x=P'(x)$ be the set of equations for $G'$;
 for the min types $T_i$ of $G'$, the equation is the same,
 i.e., $P'_i=P_i$; whereas for max types $T_i$ 
the function on the right-hand side
 changes from $P_i(x) =  \max_{a \in A_i} \sum_{r \in R(T_i,a)} p_r  f(x,\alpha_r)$
 to $P'_i(x) =  \sum_{r \in R(T_i,a_i)} p_r  f(x,\alpha_r)$, for some specific 
action $a_i \in A_i$.
 Thus, $P'(x) \leq P(x)$ for all $x \in [0,1]^n$.
 Let $y^k, k=0,1,\ldots$ be the vector resulting from the $k$-fold application
 of the operator $P'$ on the all-1  vector.
 Then $y^k \leq x^k$ for all $k$, and therefore the GFP $y^*$ of $P'$
satisfies $y^* \leq x^*$, where
 $x^*$ is the GFP of $P$.
 However, $x^*$ is a fixed point of $P'$, since we have chosen actions
 for all the max types $T_i$ that achieve the maximum in $P_i(x^*)$.
 Therefore, $x^*=y^*$, and both $x^*$ and $y^*$ are 
the GFP of both $P'$ and $P$.

Consider any fixed strategy $\tau$ of the min player starting from 
initial population $\mu$.
Applying Lemma \ref{claim:k-step-value-reach} to the BMDP $G'$, 
we know that for every $k$, the
probability, using strategy $\tau$ in $G'$,
of {\em not} reaching the type $T_{f^*}$ in $k$ steps,
starting in population $\mu$ is at least
$f(y^k,\mu)$.
 Therefore, the optimal (infimum) probability of not reaching 
a type $T_{f^*}$ in any number of  steps
is  at least $\lim_{k \rightarrow \infty} f(y^k,\mu) = f(y^*,\mu) =f(x^*,\mu)$.
 That is, $\inf_{\tau \in \Psi_2} g^{*}_{\sigma^*,\tau} (\mu) \geq f(x^*,\mu)$.
 Combining with the previously established inequality,
 $g^*(\mu) \leq f(x^*, \mu)$, 
 and since
 clearly $g^*(\mu) = \sup_{\sigma \in \Psi_1} \inf_{\tau \in \Psi_2} g^{*}_{\sigma,\tau} (\mu)
 \geq \inf_{\tau \in \Psi_2} g^{*}_{\sigma^*,\tau} (\mu)$, 
 we conclude that $\sigma^*$ is actually an optimal (static) strategy
for the player maximizing the non-reachability probability of $T_{f^*}$, and
that
 $f(x^*,\mu) = \inf_{\tau \in \Psi_2} g^{*}_{\sigma^*,\tau}(\mu) =
 \sup_{\sigma \in \Psi_1} \inf_{\tau \in \Psi_2} g^{*}_{\sigma,\tau} (\mu)= 
g^*(\mu) = 
 \inf_{\tau \in \Psi_2} \sup_{\sigma \in \Psi_1} g^{*}_{\sigma,\tau}(\mu)$.
 
 \end{proof}

\noindent A direct corollary of the proof of Theorem \ref{bssg-equations}
is that the player maximizing non-reachability probability 
always has an optimal static strategy:

\begin{corollary}
\label{cor:opt-maxPPS}
In any Branching Simple Stochastic Game, $G$,
where the objective of the players is to maximize and minimize,
respectively, the probability of {\em not} reaching a
type $T_{f^*}$, the player trying to {\em maximize}
this probability always has a deterministic static optimal strategy $\sigma^*$.

In particular, for any max-minPPS, $x=P(x)$, with GFP $g^*$,
the max player has an optimal deterministic policy, $\sigma^*$,
for the GFP,
such that $g^* = g^*_{\sigma^*,*}$  (where, recall, $g^*_{\sigma^*,*}$ is the
GFP of $x = P_{\sigma^*,*}(x)$).
\end{corollary}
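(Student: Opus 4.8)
The plan is to simply harvest the objects already built inside the proof of Theorem~\ref{bssg-equations}, since both assertions are witnessed there. Recall that in establishing the converse inequality $g^*(\mu) \geq f(x^*,\mu)$, the proof introduced a concrete strategy $\sigma^*$ for the player maximizing non-reachability: the deterministic static strategy that, for every entity of a max type $T_i$, always plays an action $a \in A_i$ maximizing $\sum_{r \in R(T_i,a)} p_r f(x^*,\alpha_r)$, where $x^* = g^*$ is the GFP. The first thing I would do is record that this $\sigma^*$ is by construction both \emph{deterministic} (it plays a single action with probability $1$) and \emph{static} (the chosen action depends only on the type $T_i$, not on the history), so it qualifies as a deterministic static strategy, equivalently a deterministic policy for the (equation) max player in $x=P(x)$.

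Next I would assemble the optimality argument out of the two inequalities already proved. Theorem~\ref{bssg-equations} shows $g^*(\mu) = f(x^*,\mu)$ and, in its last paragraph, that fixing $\sigma^*$ yields $\inf_{\tau \in \Psi_2} g^{*}_{\sigma^*,\tau}(\mu) \geq f(x^*,\mu)$ for every initial population $\mu$. On the other hand, by the definition of the value as a supremum, $\inf_{\tau \in \Psi_2} g^{*}_{\sigma^*,\tau}(\mu) \leq \sup_{\sigma \in \Psi_1}\inf_{\tau \in \Psi_2} g^{*}_{\sigma,\tau}(\mu) = g^*(\mu) = f(x^*,\mu)$. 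Combining the two gives $\inf_{\tau \in \Psi_2} g^{*}_{\sigma^*,\tau}(\mu) = g^*(\mu)$, i.e. $\sigma^*$ attains the value against every (arbitrary, history-dependent, randomized) min strategy. This is exactly the statement that $\sigma^*$ is an optimal deterministic static strategy for the player maximizing non-reachability, proving the game-theoretic claim.

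For the ``in particular'' statement about the GFP of the equations, I would translate the above through the system $x = P_{\sigma^*,*}(x)$ obtained by fixing $\sigma^*$; this is precisely the minPPS describing the minimizing BMDP $G'$ of the proof, so $P_{\sigma^*,*}$ coincides with the operator $P'$ there. The proof already establishes the two facts I need: first, $P'(x) \leq P(x)$ pointwise on $[0,1]^n$, whence iterating from $\vone$ gives $y^k \leq x^k$ and so the GFP $g^*_{\sigma^*,*} = y^*$ satisfies $y^* \leq x^* = g^*$; second, since $\sigma^*$ selects at $g^*$ an action attaining the max in each coordinate, $g^*$ is itself a fixed point of $P'$, so by maximality of the GFP we get $g^* \leq y^*$. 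Together these force $g^*_{\sigma^*,*} = g^*$, which is exactly the definition of $\sigma^*$ being optimal for the GFP of the max-minPPS.

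Because every ingredient is already present in the proof of Theorem~\ref{bssg-equations}, I expect no genuine obstacle here; the only points needing care are bookkeeping ones. First, I must keep straight the role swap between the ``max''/``min'' labels of the game players and the $\max$/$\min$ operators in the equations, so that the player for whom $\sigma^*$ is built is indeed the one whose types carry the $\max$ operator and for whom $g^*_{\sigma^*,*}$ is the relevant fixed point. Second, I should make explicit that the lower bound $\inf_{\tau \in \Psi_2} g^{*}_{\sigma^*,\tau}(\mu) \geq f(x^*,\mu)$ was derived against \emph{all} strategies $\tau \in \Psi_2$, not merely static ones, since it is this universality that upgrades ``$\sigma^*$ does at least as well as the value'' to genuine optimality.
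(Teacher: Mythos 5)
Your proposal is correct and takes essentially the same route as the paper: the paper's proof of Corollary \ref{cor:opt-maxPPS} likewise just invokes the deterministic static strategy $\sigma^*$ constructed in the proof of Theorem \ref{bssg-equations} (choosing for each max-controlled type an action maximizing $\sum_{r} p_r f(x^*,\alpha_r)$), whose optimality and the identity $g^* = g^*_{\sigma^*,*}$ were already established there via the fixed operator $P' = P_{\sigma^*,*}$. You merely make explicit the bookkeeping (the two-sided inequality argument and the max/min label conventions) that the paper's one-line proof leaves implicit.
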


\begin{proof}
Just use the deterministic static optimal strategy $\sigma^*$ 
for the maximizing player defined 
in the proof 
of Theorem \ref{bssg-equations}, which for each type $T_i$
controlled by the max player chooses an action $a \in A_i$ 
which maximizes the expression $\sum_{r \in R'(T_i,a)} p_r f(x^*,\alpha_r)$.

Clearly, this also implies the existence of a deterministic optimal policy,
$\sigma^*$, 
for the max player, for the GFP $g^* = g^*_{\sigma^*,*}$ in any max-minPPS $x=P(x)$.
\end{proof}

The same is {\em not} true for the player
trying to {\em minimize} this non-reachability probability.
In other words, the same is not true for
the player trying to {\em maximize} the
probability of reaching a type $T_{f^*}$.  
This is illustrated by the
following two examples:

\noindent 
\begin{example}[{\em In general, 
there is no randomized static optimal strategy for maximizing 
the reachability probability
in BMDPs, even when the supremum probability is 1.}]
\label{example1}
{\rm
Consider a BMDP with three types: $\{A,B,C\}$.
Type $C$ is the goal type (i.e., $C= T_{f^*}$).
The BMDP is described by the following rules for types $A$ and $B$.
The only controlled type is $A$.  The type $B$ is purely ``random''.
The symbol ``$\emptyset$'' denotes that one of the rules
for type $B$ generates, with probability $1/2$, the empty
set, containing no objects, 
from an object of type $B$.

\begin{eqnarray*}
A  & \rightarrow &   AA\\
A  & \rightarrow &   B\\ \vspace*{0.1in}
B  &\stackrel{1/2}{\rightarrow} & C\\
B  &\stackrel{1/2}{\rightarrow} & \emptyset
\end{eqnarray*}

It is easy to see that for this BMDP, the controller who
wishes to maximize the probability of reaching type $C$,
starting with one object of type $A$, can do so with probability
$1-\epsilon$, for any $\epsilon > 0$.  The strategy for
doing so is the following:  first create sufficiently
many copies of $A$, 
namely $k = \lceil \log( 1/\epsilon) \rceil$ copies, 
by using the rule $A \rightarrow AA$. 
Then, for each of the created copies, choose the ``lottery'' $B$.
Each ``lottery'' $B$ will, independently,  with 1/2 probability, reach $C$.
This assures that the total probability of {\em not} reaching
a $C$ is  $\frac{1}{2^{k}} \leq \epsilon$.

Thus, the supremum {\em value} of reaching $C$ in this BMDP is clearly
$1$.  However, it is also easy to see that there is no 
randomized static optimal
strategy that achieves this supremum value of $1$.  
This is because any randomized static strategy 
which places positive probability on the rule $A \rightarrow B$ 
would with
positive probability $p^*$ bounded away from $0$ go extinct starting
from a bounded population of $A$'s (without hitting $C$).

The minPPS for this BMDP has two variables $a, b$ and two equations
$a=\min(a^2,b)$ and $b=1/2$. This system has clearly only one fixed point:
$a^*=0, b^*=1/2$. However, there is no policy (whether
deterministic or randomized) that gives
$(0,1/2)$ as the GFP of the resulting PPS, for the same reason given above
that the BMDP does not have any optimal static strategy. 
Note in particular, that if a policy
selects for $a$ the first choice, $a^2$, then the resulting PPS is
$a=a^2, b=1/2$, and $a$ has value 1 in its GFP, not 0.

On the other hand, for this BMDP there is a \underline{non-static}
optimal strategy that achieves the reachability value $1$, namely, do as follows:
starting from one $A$, first use $A \rightarrow AA$ to create two $A$'s.
Then apply $A \rightarrow B$ to the ``left'' $A$ and apply 
$A \rightarrow AA$ to the ``right'' $A$. 
Now we have two $A$'s and a $B$.  The $B$ gives us a chance
to reach $C$.  On the two $A$'s, we again take the left $A$ to $B$
and the right $A$ to $AA$.  Repeat.
This way, the population will repeatedly contain two $A$'s and one 
$B$ forever, and each time $B$ is created it gives us a positive
chance to reach $C$, so we reach $C$ with probability $1$.

It turns out, as we will show later,
that for any BSSG, {\em if the reachability value is $1$},
then the player maximizing the probability of reachability
always has a {\em not necessarily static}, optimal strategy that
achieves this value 1.

This is {\em not} the case if the reachability value is strictly less than $1$,
as we shall show in the next example, Example \ref{example:no-opt-strat}.

On the other hand, if the goal was to {\em minimize} the probability
of reaching $C$, then starting from $A$ 
there is a simple strategy in this BMDP that achieves this:  deterministically
choose the rule $A \rightarrow AA$ from all copies of $A$.
This ensures that the process never reaches $C$, i.e.,
reaches $C$ with probability $0$.
This is clearly an optimal strategy.  Indeed, this holds in general:
as shown in Corollary \ref{cor:opt-maxPPS},
there always exists a deterministic static optimal strategy for
minimizing the probability of reaching a given type
(i.e., maximizing the probability of not reaching it),  in a BMDP or BSSG.}
\qed 
\end{example}

\begin{example}[{\em No optimal strategy at all for maximizing reachability 
probability in
a BMDP}]
\label{example:no-opt-strat}
{\rm 
We now give an example of a BMDP where the supremum reachability
probability of the designated type $T_{f^*}$ is $< 1$, and such
that there does not exist {\em any} optimal strategy (regardless of
the memory or randomness used) that achieves the value.

Consider the following BMDP, where the goal is to 
maximize the probability of reaching type $D$:

\begin{eqnarray*}
A  & \stackrel{2/3}{\rightarrow} &   BB\\
A  & \stackrel{1/3}{\rightarrow} &   \emptyset\\ \vspace*{0.1in}
B  & \rightarrow &   A \\
B  & \rightarrow &   C\\
C  &\stackrel{1/3}{\rightarrow} & D\\
C  &\stackrel{2/3}{\rightarrow} & \emptyset
\end{eqnarray*}

We claim that:

\begin{enumerate}
\item  The supremum probability, starting with one $A$, of
eventually reaching an object of type $D$ is $1/2$.

\item There is no strategy of any kind that achieves probability $1/2$.
\end{enumerate}

\begin{proof}
\begin{enumerate}
\item First, to see that the supremum probability starting at $A$
is $1/2$, consider the following sequence of strategies:
strategy $\tau^k$, for $k \geq 1$,  chooses  $B \rightarrow A$
for all objects in every multiset $X_i$ until
a multiset is reached in which there are at least $k$ B's. 
Then, in the next step, $\tau^k$ chooses $B \rightarrow C$
for all copies of $B$. 
In other words,  the strategy waits until there are ``enough''
$B$'s, and then switches to $B \rightarrow C$ for all $B$'s.
Note firstly that, with probability at least
$1/2$ we will eventually have a population of $B$'s exceeding $k$,
for any $k$.
Thereafter the probability of not hitting $D$ will be at most $(2/3)^k$.
We can make $k$ as large as we like, and thus we can make the probability
of not hitting $D$, conditioned on reaching population $k$, as small
as possible.   So we can make the probability of hitting $D$ as close
as we like to $1/2$.

This can be seen also from the corresponding minPPS using Theorem \ref{bssg-equations}.
The minPPS has three variables $a,b ,c$ and equations
$a= \frac{2}{3} b^2 + \frac{1}{3}$, $b=\min(a,c)$, $c=\frac{2}{3}$.
It is easy to see that the system has only one fixed point, 
$a^* = b^* = \frac{1}{2}, c^*=\frac{2}{3}$, which is thus the GFP.
Hence, by Theorem \ref{bssg-equations}, the reachability value of the BMDP is
$1-a^* = 1/2$. However, there is no policy of the minPPS 
(and correspondingly, no static strategy of the BMDP) that achieves this value.
In particular, note that the policy that selects for $b$ the first choice $a$,
yields a PPS $\{a= \frac{2}{3} b^2 + \frac{1}{3}, b=a, c=\frac{2}{3} \}$ 
with a GFP in which $a$ has value 1, instead of 1/2.

\item To see that in fact there is no strategy (whether static or not) of the BMDP
that achieves probability $1/2$ , assume, for contradiction, that there
does exist a strategy $\sigma$ that achieves probability $1/2$. 

Consider any occurrence of $B$ in the history $X_0, X_1, \ldots$ of
configurations,  such that the rule $B \rightarrow C$ is applied
with positive probability
to that occurrence of $B$ by the strategy $\sigma$.  
It is without loss of generality to assume that such a $B$ exists,
because otherwise the probability of reaching $D$ would be $0$.

We claim that the total probability of
reaching type $D$ would strictly increase if, instead of 
applying action $B \rightarrow C$ with positive
probability $p'$ on that copy of $B$,  
the strategy $\sigma$ instead is changed to a strategy $\sigma'$ where
that positive probability $p'$ on action $B \rightarrow C$ 
is shifted entirely to the pure action
$B \rightarrow A$, and thereafter, in the next step, if on that resulting $A$ 
the random rule $A \stackrel{2/3}{\rightarrow} BB$ happens to get chosen,
the strategy $\sigma'$ then 
(with the shifted probability $p'$) immediately applies the rule 
$B \rightarrow C$  to both resulting copies of $B$.

To see why this switch to strategy $\sigma'$ 
would strictly increase the probability of
reaching $D$, note that for any given $B$ by choosing $B \rightarrow C$
deterministically 
the probability of reaching $D$ from that copy of $B$ becomes exactly $1/3$.
On the other hand, by choosing $B \rightarrow A$ from
that copy of $B$ and
thereafter (with $2/3$ probability) choosing $B \rightarrow C$
on the resulting two copies of $B$, the new probability of hitting
$D$ is  $2/3 \cdot (1- (2/3)^2) = 10/27 > 1/3$.   
The same analysis shows that even if the original
strategy $\sigma$ only chose $B \rightarrow C$ with positive 
probability $p > 0$ then shifting
that probability over to the two-step strategy,
first choosing $B \rightarrow A$, achieves strictly greater
probability of reaching $D$.
Since this analysis holds for {\em any} copy of $B$ that occurs in 
the trajectory $X_0, X_1, \ldots $ of the process, we see that
we can always strictly increase the probability of reaching $D$
by {\em indefinitely delaying} the application of the rule $B \rightarrow C$.
However, note that we can not delay application of the rule 
$B \rightarrow C$ forever:
if we do so then the probability of reaching $D$ is actually $0$.

Thus, the supremum probability of reaching $D$ is only achieved 
in the limit by a sequence
of strategies, which delay the use of $B \rightarrow C$ longer and longer,
but is never attained by any single strategy.

We have already seen that the supremum probability 
of reaching $D$ is at least $1/2$, using the sequence
of strategies described
in part (1.) above.
Now, to see why the supremum value is indeed $1/2$, 
note that if we do indeed delay forever
using $B \rightarrow C$, then starting with one $B$ or one $A$ 
the process becomes extinct
with probability $1/2$ (without ever seeing a $D$).
Thus, if we delay using $B \rightarrow C$ for ``long enough'',
then the process becomes extinct with probability $1/2 - \epsilon$
without seeing $D$,
for an arbitrarily small positive $\epsilon > 0$.
So, the supremum value of the reachability probability can be at
most $1/2$, and thus is equal to $1/2$.   
Moreover, we have already argued that this supremum value
is not achieved by any strategy, because we can always achieve
strictly higher probability of reaching $D$ by delaying the use
of $B \rightarrow C$ one step further.  Thus, $1/2$ is the
supremum value, but is not achieved by any strategy.
\end{enumerate}
\end{proof}
}
\end{example}

\section{P-time detection of  GFP  $g^*_i=1$   for max-minPPSs and BSSGs}

\label{sec:qual-gfp-equal-1}

In this section we will show that there are (easy) P-time algorithms
to compute for a given max-minPPS the variables that have value 1 in
the GFP, and thus also for deciding, for a given BSSG (or BMDP),
whether $g^*_i = 1$ (i.e., whether the {\em non}-reachability value,
starting from a given type $T_i$ is $1$).  The algorithm does not
require looking at the precise values of the coefficients of the
polynomials in the max-minPPS (respectively, it does not depend on
probabilities labelling the transitions of the BSSG): it only depends
on the qualitative ``structure'' of the max-minPPS (the BSSG).
As we show, it reduces to an AND-OR graph reachability problem.

Recall that in the AND-OR graph reachability problem, we are given
a directed graph $G$, whose nodes are partitioned into a set $T$ of target nodes,
a set $V_1$ of OR nodes and a set $V_2$ of AND nodes.
The set of nodes that can {\em AND-OR reach} $T$ is defined to be
the (unique) smallest set $S$ of nodes that
includes $T$ and which has the property that (i) an OR-node $v$ is in $S$ iff at least
one of its immediate successors is in $S$, and (ii) an AND-node $v$ is in $S$ iff all
its immediate successors are in $S$.
This set can be computed easily by an iterative algorithm that initializes
$S$ to $T$, and then repeatedly adds to $S$ any OR-node $v$ that
has an immediate successor already in $S$, and any AND-node all of whose
immediate successors are already in $S$, until there are no more changes to $S$.
As is well-known, the algorithm can be implemented in linear time.
Equivalently, the AND-OR reachability problem can be viewed as
a two-person zero-sum reachability game, where the OR-nodes belong to
player 1 who wants to reach some node in the target set $T$, and the AND-nodes
belong to player 2 who wants to avoid this.
The set of winning nodes for player 1 is precisely the set $S$ of nodes that
can AND-OR reach $T$; a winning strategy $\tau$ for player 1 from each 
OR-node in $S$
is to pick an immediate successor that was added earlier to $S$.
The complementary set of nodes is winning for player 2; a winning strategy $\sigma$ for
player 2 from each AND-node that is not in $S$ is to pick an immediate successor
that is not in $S$ 
(there must be one, otherwise the AND-node would have been added to $S$).

\begin{proposition}
\label{lem:prob1-ptime}
There is a P-time algorithm that given a max-minPPS
(and thus also a maxPPS or minPPS), $x=P(x)$,
with $n$ variables, and
with GFP $g^* \in [0,1]^n$, and given $i \in [n]$,
decides whether $g^*_i = 1$, or $g^*_i < 1$.
The same result holds for determining for a given BSSG with non-reachability
objective, whether the value of the game is 1. 
Moreover, in the case where $g^*_i = 1$ the algorithm
computes a deterministic policy 
(i.e., deterministic static strategy in the BSSG case) $\sigma$,
for the max player which forces $g^*_i = 1$,
Likewise,
if $g^*_i < 1$, the algorithm computes a deterministic static policy $\tau$
for the min player which forces $g^*_i < 1$.
\end{proposition}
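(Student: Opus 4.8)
The plan is to reduce the entire question to an \textsc{AND-OR} graph reachability computation on the dependency structure of the system. By Theorem~\ref{bssg-equations} and Proposition~\ref{prop:snf-form} it suffices to work with a max-minPPS $x=P(x)$ in SNF form, since its GFP is the projection of the SNF system's GFP and policies transfer back in P-time; for the BSSG the relevant graph is directly the dependency graph on types. I would classify each variable $i$ as a node as follows: a \textsf{Form L} equation $P_i(x)=a_{i,0}+\sum_j a_{i,j}x_j$ is a \emph{target} if $\sum_{j=0}^n a_{i,j}<1$ and otherwise (sum exactly $1$) an \emph{OR} node with successors $\{j:a_{i,j}>0\}$; a \textsf{Form Q} equation $x_jx_k$ is an \emph{OR} node with successors $\{j,k\}$; a \textsf{Form M$_{\min}$} equation $\min\{x_j,x_k\}$ is an \emph{OR} node; and a \textsf{Form M$_{\max}$} equation $\max\{x_j,x_k\}$ is an \emph{AND} node. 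Let $S$ be the set of nodes that can \textsc{AND-OR} reach the targets, computed in linear time by the standard iterative algorithm. The core claim is $S=\{i:g^*_i<1\}$; equivalently its complement $C$ is exactly $\{i:g^*_i=1\}$.

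\textbf{Forward direction.} I would first show $i\in S \implies g^*_i<1$ by induction on the order in which $i$ enters $S$, using only the fixed-point equation $g^*=P(g^*)$ and monotonicity. For a target \textsf{Form L} node, $g^*_i=P_i(g^*)\le\sum_{j=0}^n a_{i,j}<1$. For an OR \textsf{Form L} node added because some successor $j_0$ with $a_{i,j_0}>0$ already lies in $S$ (so $g^*_{j_0}<1$), a one-line estimate gives $g^*_i\le 1-a_{i,j_0}(1-g^*_{j_0})<1$. For \textsf{Form Q} with $g^*_{j_0}<1$, $g^*_i=g^*_jg^*_k\le g^*_{j_0}<1$; for \textsf{M$_{\min}$}, $g^*_i=\min(g^*_j,g^*_k)\le g^*_{j_0}<1$; and for an AND \textsf{M$_{\max}$} node both successors are in $S$, so $g^*_i=\max(g^*_j,g^*_k)<1$.

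\textbf{Converse direction.} The implication $i\in C \implies g^*_i=1$ is the step I expect to be the main obstacle, because it is precisely here that the \emph{greatest} (rather than least) fixed point is essential: I would use that the GFP is the downward limit $g^*=\lim_k x^k$ of value iteration from $x^0=\vone$, and prove by induction on $k$ that $x^k_i=1$ for every $i\in C$. The closure properties of $C$ (the complement of the \emph{smallest} closed set $S$) are exactly what make each step go through: an OR node in $C$ has \emph{all} its successors in $C$, so \textsf{Form L} with sum $1$ evaluates to $a_{i,0}+\sum_j a_{i,j}=1$, \textsf{Form Q} to $1\cdot 1$, and \textsf{M$_{\min}$} to $\min(1,1)$; an AND node in $C$ has \emph{some} successor in $C$, so $\max(x^k_j,x^k_k)\ge 1$. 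Since $x^k_i=1$ for all $k$ forces $g^*_i=1$, this gives $C\subseteq\{g^*_i=1\}$ and completes the characterization. (Running the same induction from $\vzero$ would fail, which is the formal reason the value$=1$ question for the GFP differs from the LFP analysis of \cite{rmc,rmdp}, and why only the \emph{presence} of a leak—not the precise probabilities—matters.)

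\textbf{Policies.} Finally, I would read the strategies off the two-player reachability game underlying the \textsc{AND-OR} computation. The $\max$ player owns the \textsf{M$_{\max}$} (AND) nodes and, being the avoider of $S$, wins on $C$ by choosing at each such node a successor lying in $C$; fixing this deterministic $\sigma$ and rerunning the converse-direction induction on $P_{\sigma,*}$ shows $g^*_{\sigma,*,i}=1$ for $i\in C$, and since $g^*_i\ge g^*_{\sigma,*,i}$ this reconfirms $g^*_i=1$. Symmetrically the $\min$ player owns the \textsf{M$_{\min}$} (OR) nodes and, being the reacher, wins on $S$ by choosing at each such node a successor added earlier to $S$; fixing this $\tau$ and rerunning the forward direction on the maxPPS $P_{*,\tau}$ shows its GFP already has $i$-th coordinate $<1$, and since the $\min$ player only commits itself we have $g^*\le g^*_{*,\tau}$, witnessing $g^*_i<1$. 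All steps—SNF conversion, \textsc{AND-OR} reachability, and extraction of the two deterministic policies—run in P-time, yielding the claimed algorithm for both the max-minPPS and the BSSG.
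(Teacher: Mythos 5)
Your proposal is correct and follows essentially the same route as the paper: the identical AND-OR reachability reduction (targets = deficient {\tt L} variables; {\tt L}, {\tt Q}, {\tt M}$_{\min}$ as OR nodes; {\tt M}$_{\max}$ as AND nodes), the same characterization $S = \{i : g^*_i < 1\}$, and the same extraction of the witnessing policies $\sigma$ and $\tau$ from the winning strategies of the two players. The only cosmetic differences are that you run the forward induction on $g^*$ itself before rerunning it on $P_{*,\tau}$, and you prove the converse inclusion by value iteration from $\vone$ rather than by observing (as the paper does) that the all-one vector is a fixed point of the subsystem of $x = P_{\sigma,*}(x)$ induced by $\bar S$; both arguments are sound and interchangeable.
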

\begin{proof}
For simplicity, we assume w.l.o.g., that the max-min PPS, $x=P(x)$
is in SNF form.
Consider the dependency graph $G = (V,E)$ on the variables $V =
\{x_1,\ldots, x_n\}$ of $x=P(x)$.
The edges $E$ are defined as follows:   $(x_i, x_j) \in E$ if and
only if $x_j$ appears in one of the monomials with positive
coefficient that appear on the right hand side of $P_i(x)$.

Let us call a variable $x_i$ {\em deficient} if  $P_i(x)$
has form {\tt L} and the coefficients and
constant term in $P_i(x)$ sum to strictly less than
$1$; equivalently, $x_i$ is deficient iff $P_i(\vone) <1$.  
Let $\mathcal{Z} \subseteq \{x_1,\ldots,x_n\}$ denote the set of
deficient variables.

Let $X = V \setminus \mathcal{Z}$, denote the remaining set of non-deficient
variables.  
We partition the remaining variables $X = \mathtt{L} \cup
\mathtt{Q} \cup \mathtt{M}$
according to the form of the 
corresponding SNF-form equation $x_i = P_i(x)$.   
In fact, we further partition the variables  
$\mathtt{M}$  as $\mathtt{M} = \mathtt{M_{\max}} \cup \mathtt{M_{\min}}$,
according to whether the corresponding RHS for that variable
has the form $\max\{x_j,x_k\}$ or  $\min\{x_j, x_k\}$.

We can now view the dependency graph $G$  as a (non-probabilistic) 
AND-OR game graph, namely a 2-player {\em reachability game graph}, in which
the goal of player 1 is to reach a node in $\mathcal{Z}$,
whereas the goal of player 2 is to avoid this.
The nodes of the game graph belonging to player 1 are
$\mathtt{L} \cup \mathtt{Q} \cup \mathtt{M_{\min}}$ (these are the OR nodes),
the nodes of the game graph belonging to player 2 are
$\mathtt{M_{\max}}$ (these are the AND nodes),  and finally the nodes in $\mathcal{Z}$ are
the {\em target} nodes (from which player 1 wins automatically).

Let $S$ be the set of nodes that can AND-OR reach $\mathcal{Z}$,
i.e. the set of nodes from which player 1 can win, let ${\bar S}$
be the complementary set of nodes from which player 2 wins,
and let $\tau, \sigma$ be winning (deterministic, static)
strategies for the two players
from their respective winning sets, as described before
the proposition (the definition of the strategies on their
sets of losing nodes is irrelevant).
As we mentioned earlier, the sets $S, {\bar S}$ and
the strategies $\tau, \sigma$  can be computed
in P-time (in fact, in linear time).

We claim that for every variable $x_i$, we have $g^*_i < 1$
if and only if $x_i \in S$.

For the one direction, we can show that $g^*_i < 1 $, 
and in fact $(g^*_{*,\tau})_i <1$,
for all $x_i \in S$, by induction on the time that $x_i$ was added to $S$
in the iterative algorithm.
For the basis case, $x_i \in \mathcal{Z}$ is a deficient node, i.e. $P_i(\vone) <1$,
and hence clearly $g^*_i \leq (g^*_{*,\tau})_i =P_i(g^*_{*,\tau}) \leq P_i(\vone)<1$.
For the induction step, if $x_i$ is of type $\mathtt{M_{\min}}$
and $\tau$ chooses $x_j \in P_i(x)$ for $x_i$, then $x_j$ was added earlier to $S$,
thus $g^*_i \leq (g^*_{*,\tau})_i = (g^*_{*,\tau})_j < 1$.
The other cases when $x_i$ is of type $\mathtt{L}, \mathtt{Q}, \mathtt{M_{\max}}$
are similar.

To see the other direction, $g^*_i = 1$,
and in fact $(g^*_{\sigma,*})_i=1$, for all $x_i \in {\bar S}$,
note that the dependency graph of the minPPS $x=P_{\sigma,*}(x)$ has
no edges from ${\bar S}$ to $S$: all variables of
type $\mathtt{L} \cup \mathtt{Q} \cup \mathtt{M_{\min}}$ of ${\bar S}$
depend only on variables in ${\bar S}$ (otherwise, they would have been added to $S$),
and for variables of type $\mathtt{M_{\max}}$, policy $\sigma$ selected a successor in ${\bar S}$.
Furthermore, ${\bar S}$ does not contain any deficient node,
thus $P_i(\vone)=1$ for all $x_i \in {\bar S}$.
Therefore, the subsystem of $x=P_{\sigma,*}(x)$ induced by
${\bar S}$ has the all-1 vector as a fixed point,
hence $(g^*_{\sigma,*})_i=1$ (and thus $g^*_i=1$), for all $x_i \in {\bar S}$.
  \end{proof}

We will consider detection of $g^*_i=0$ for max-minPPSs
with GFP $g^*$  later in the paper.   We shall see that
for maxPPSs, after detection and removal of
variables $x_i$ such that $g^*_i =1$, so that $g^* < \vone$,
the GFP $g^*$ of the residual maxPPS is equal to the LFP $q^*$ of
the residual maxPPS, and thus detecting whether $g^*_i = q^*_i = 0$
can be done in P-time via simple AND-OR graph analysis
using the algorithm given in \cite{rmdp}.

For minPPSs, however, the above reduction does not hold,
and in fact the P-time algorithm for detecting whether $g^*_i = 0$
is substantially more complicated (but still does not
involve knowing the actual coefficients of the polynomials
in the minPPS, or the probabilities labeling rules of the BMDP,
only its structure).
We provide such a P-time algorithm for deciding whether $g^*_i = 0$,
not only for minPPSs, but also for the more general max-minPPSs, in Section 
\ref{sec:zero-detect}.

\section{Reachability for BPs, and 
linear degeneracy}

\label{sec:bp-ld}

In this section we study the reachability problem for purely stochastic
BPs.   Along the way, we establish several Lemmas which will be crucial
for our analysis of BMDPs.     
We start by defining the notion of
linear degeneracy.

A PPS $x=P(x)$ is called {\em linear degenerate} 
if every polynomial  $P(x)$ is 
linear, with {\em no} constant term, and all coefficients sum to $1$. 
Thus $x=P(x)$ is linear degenerate if $P_i(x) \equiv \sum^n_{j=1} p_{ij} x_j$,
where $p_{ij} \in [0,1]$ for all $i \in [n]$, and $\sum_j p_{ij} = 1$.
We refer to a linear degenerate PPS as an LD-PPS.

Note that for any LD-PPS, $x=P(x)$, we have  
$P(\vzero) = \vzero$ and $P(\vone) = \vone$, so the LFP is
$q^*=\vzero$ and the GFP is $g^* = \vone$. The Jacobian $B(x)$ of an LD-PPS is
a constant stochastic matrix B (independent of $x$), where every row of $B$ is non-negative and 
sums to 1.   During the evolution of the
associated BP, the size of the population remains constant. Thus, if
we start with a single object, the MT-BP trajectory $X_0, X_1, \ldots$ 
is simply the trajectory of a finite-state
Markov chain whose states correspond to types, and where the
singleton set $X_i$ corresponds to the one object in
the population at time $i$.   Note that the Jacobian $B(x)= B$ is the 
transition matrix of the corresponding finite-state Markov chain.
Furthermore, observe that for any LD-PPS we have $P(x) = B x$.

Given a PPS, we can construct its dependency graph and decompose it
into strongly connected components (SCCs). 
A bottom SCC is an SCC that has no
outgoing edges.
The following Lemma is immediate:

\begin{lemma} \label{lem:ld-or-ldf} For any PPS,  $x=P(x)$, 
exactly one of the following two cases holds:
\begin{itemize}
\item[(i)] $x=P(x)$ contains a 
linear degenerate bottom strongly-connected component
(BSCC), $S$, i.e., $x_S = P_S(x_S)$ is a LD-PPS,
and $P_S(x_S) \equiv B_S x_S$, for a stochastic matrix $B_S$.  

\item[(ii)] every variable $x_i$ either is, or depends 
(directly or indirectly) on, a variable $x_j$ where $P_j(x)$ has one of
the following properties:
\begin{enumerate}
\item $P_j(x)$ has a term of degree 2 or more,
\item $P_j(x)$ has a non-zero constant term i.e. $P_j(\vzero) > 0$ or
\item $P_j(\vone) < 1$.
\end{enumerate}
\end{itemize}
\end{lemma}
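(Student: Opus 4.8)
The plan is to prove this as a clean dichotomy based on the structure of the bottom strongly-connected components (BSCCs) of the dependency graph. First I would observe that the two cases (i) and (ii) are mutually exclusive: case (i) exhibits a BSCC $S$ that is linear degenerate, and every variable in such a BSCC has form $P_j(x) \equiv B_S x_S$, which by definition of linear degeneracy has no constant term, no higher-degree term, and satisfies $P_j(\vone) = 1$; moreover every variable it depends on stays within $S$ (since $S$ is a \emph{bottom} SCC with no outgoing edges). Hence none of the three ``bad'' properties in (ii) can hold for any variable in $S$, so (i) and (ii) cannot hold simultaneously. The real content is to show that at least one of them always holds.

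The approach for the existence direction is to decompose the dependency graph into SCCs and argue by considering whether any BSCC is linear degenerate. If some BSCC $S$ is linear degenerate, we are in case (i) and done. So suppose \emph{no} BSCC is linear degenerate; I would then show that case (ii) holds for every variable. The key observation is that a BSCC $S$ fails to be linear degenerate precisely when it contains some variable $x_j$ for which $P_j$ has a degree-$\geq 2$ term, a nonzero constant term, or deficiency $P_j(\vone) < 1$ --- that is, exactly one of the three listed properties. I would verify this claim by unpacking the definition: within a BSCC, all dependencies point back into $S$, so if no $x_j$ in $S$ has a higher-degree term (property 1) and none has a constant term (property 2) and none is deficient (property 3), then each $P_j$ restricted to $S$ is linear, has no constant term, and has coefficients summing to exactly $1$, which is exactly the definition of an LD-PPS on $S$ --- contradicting our assumption. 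Thus each non-linear-degenerate BSCC contains a ``witness'' variable $x_j$ with one of the three properties.

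It then remains to propagate this to \emph{all} variables via the standard reachability-to-a-BSCC argument. Every node in the dependency graph of a finite directed graph can reach at least one BSCC (follow any maximal path; it must eventually loop within some SCC with no outgoing edges). Given any variable $x_i$, pick a BSCC $S$ it can reach; by the previous paragraph $S$ contains a witness $x_j$ satisfying one of properties 1--3, and $x_i$ either equals $x_j$ or depends (directly or indirectly) on $x_j$ through the path in the dependency graph. This is precisely the statement of case (ii), completing the proof.

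The main obstacle I anticipate is purely bookkeeping rather than conceptual: namely making the equivalence ``BSCC $S$ is linear degenerate $\iff$ no variable of $S$ has any of properties 1--3'' fully precise. One must be careful that linear degeneracy of the sub-PPS $x_S = P_S(x_S)$ requires the coefficients within $S$ to sum to exactly $1$, and to check that for a variable $x_j$ inside a BSCC all of its positive-coefficient successors indeed lie in $S$ (so that no ``mass'' leaks outside $S$); this is where the \emph{bottom} property of the SCC is essential. Once that local characterization is nailed down, the global argument is the routine ``every node reaches a bottom SCC'' fact, and the mutual exclusivity is immediate, so I would spend the care on the BSCC characterization and treat the rest briskly.
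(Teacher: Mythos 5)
Your proof is correct. The paper actually gives no proof of this lemma at all (it is introduced with ``The following Lemma is immediate''), and your argument --- mutual exclusivity via the bottom property of $S$, the local characterization that a bottom SCC is linear degenerate iff none of its variables has one of the three listed properties, and propagation to all variables via the standard ``every node reaches a bottom SCC'' fact --- is exactly the routine argument the authors are implicitly invoking, written out with the right care at the one delicate point (that dependencies from a bottom SCC cannot leak outside it).
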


A PPS $x=P(x)$ is called
a {\em linear-degenerate-free} PPS (LDF-PPS) if it satisfies condition {\em (ii)} of 
Lemma \ref{lem:ld-or-ldf}.

\begin{lemma} \label{lem:inbetween-ldf} If a PPS, $x=P(x)$, has either GFP $g^* < \vone$, or LFP $q^* > \vzero$, 
then $x=P(x)$ is a LDF-PPS. 
\end{lemma}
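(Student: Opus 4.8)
The plan is to prove the contrapositive, exploiting the fact that Lemma \ref{lem:ld-or-ldf} gives a clean dichotomy: either $x=P(x)$ is already an LDF-PPS (case (ii)), or it contains a linear degenerate bottom SCC $S$ (case (i)), and exactly one of these holds. So it suffices to show that if $x=P(x)$ is \emph{not} an LDF-PPS, then we can have neither $g^* < \vone$ nor $q^* > \vzero$. Assuming case (i), we get a bottom SCC $S$ with $P_S(x_S) \equiv B_S x_S$ for a stochastic matrix $B_S$. The two facts I intend to use about $S$ are, first, that $B_S$ being stochastic forces $B_S \vone_S = \vone_S$ and $B_S \vzero_S = \vzero_S$ (rows sum to $1$), and second, that because $S$ is a \emph{bottom} SCC, the subsystem $x_S = P_S(x_S)$ is self-contained: the right-hand sides $P_i(x)$ for $x_i \in S$ mention only variables in $S$.

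The core of the argument is a value-iteration computation that pins the $S$-coordinates. For the GFP, I would use the iteration $x^{(0)} := \vone$, $x^{(k+1)} := P(x^{(k)})$, which (as already used in the proof of Theorem \ref{bssg-equations}) decreases monotonically to $g^*$. Projecting onto $S$ and using self-containedness, $x^{(k+1)}_S = P_S(x^{(k)}_S) = B_S x^{(k)}_S$; since $x^{(0)}_S = \vone_S$ and $B_S \vone_S = \vone_S$, induction gives $x^{(k)}_S = \vone_S$ for every $k$, hence $g^*_S = \vone_S$. This shows $g^*$ has a block of coordinates equal to $1$, so $g^* < \vone$ fails. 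The completely symmetric computation with the LFP iteration $x^{(0)} := \vzero$, $x^{(k+1)} := P(x^{(k)})$, which increases to $q^*$, gives $x^{(k)}_S = B_S^{k} \vzero_S = \vzero_S$ for all $k$, so $q^*_S = \vzero_S$ and $q^* > \vzero$ fails. Taking the contrapositive then yields the Lemma.

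I expect the only delicate point to be justifying that these iterations from $\vone$ and from $\vzero$ converge to the GFP and LFP respectively, and that projection commutes with the iteration on a bottom SCC; both are routine consequences of monotonicity of $P$ on $[0,1]^n$, continuity of the polynomial $P$ (Kleene/Tarski), and the structural fact that a bottom SCC has no outgoing dependency edges. There is no real computational obstacle here: once the dichotomy of Lemma \ref{lem:ld-or-ldf} is in hand, the whole proof reduces to the observation that a stochastic matrix fixes both $\vone_S$ and $\vzero_S$, so the extremal fixed points of the full system are forced to take the values $1$ (resp.\ $0$) on $S$.
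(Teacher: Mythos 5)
Your proof is correct and follows essentially the same route as the paper: invoke the dichotomy of Lemma \ref{lem:ld-or-ldf}, and observe that a linear degenerate bottom SCC $S$ satisfies $P_S(\vone)=\vone$ and $P_S(\vzero)=\vzero$, forcing $g^*_S=\vone$ and $q^*_S=\vzero$. The paper states this conclusion directly, while you justify it via Kleene value iteration from $\vone$ and $\vzero$ on the self-contained subsystem; this is just an explicit spelling-out of the same step.
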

\begin{proof} Suppose that for a PPS, $x=P(x)$ condition $(i)$ of 
Lemma \ref{lem:ld-or-ldf} holds, i.e., there is a bottom SCC $S$ with $P_S(x_S)=B_Sx_S$ for 
a stochastic matrix $B_S$. Then $P_S(\vzero)=\vzero$ and $P_S(\vone)=\vone$. So $g^*_S =\vone$ and $q^*_S=\vzero$, 
which contradicts the assumptions. So, condition $(ii)$ must hold, i.e. $x=P(x)$ is a LDF-PPS.
  \end{proof}

We use $\rho(A)$ to denote the spectral radius of a matrix $A$.
A basic property that we use is that, if $A$ is a non-negative matrix and $\rho(A)<1$,
then the matrix $I-A$ is nonsingular, and
its inverse $(I-A)^{-1}=\sum_{k=0}^{\infty} A^k$ is non-negative (see e.g. \cite{HornJohnson85}).

We will often use also the following lemma from \cite{ESY12} (stated there more
generally for monotone polynomial systems).

\begin{lemma} \label{lem:3.3ESY12}
(\cite{ESY12}, Lemma 3.3.) 
Let $x=P(x)$ be a PPS, with $n$ variables, in SNF form,
and let $a, b \in \real^n$. Then:
$P(a)-P(b) = B(\frac{a + b}{2})(a-b)$. 
\end{lemma}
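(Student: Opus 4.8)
The plan is to verify the claimed matrix identity one coordinate (row) at a time, exploiting the fact that SNF form caps the total degree of each $P_i$ at $2$. That is, for each $i$ I would prove
\[
P_i(a) - P_i(b) = \sum_{j=1}^n B\left(\tfrac{a+b}{2}\right)_{i,j}(a_j - b_j),
\]
which is exactly the $i$-th row of $P(a)-P(b) = B\left(\frac{a+b}{2}\right)(a-b)$, so establishing it for all $i$ finishes the proof. Since $x=P(x)$ is a \emph{pure} PPS in SNF form, each equation is of Form L or Form Q only (Form M, which has no well-defined Jacobian at its kink, does not occur), so there are just two cases to check.

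For Form L, where $P_i(x) = a_{i,0} + \sum_j a_{i,j} x_j$, the Jacobian row is the constant vector $B(x)_{i,j} = a_{i,j}$, independent of $x$, and the identity reduces to the linearity $P_i(a) - P_i(b) = \sum_j a_{i,j}(a_j - b_j)$. For Form Q, where $P_i(x) = x_j x_k$, I would split into the subcases $j \neq k$ and $j = k$. When $j \neq k$ the only nonzero Jacobian entries are $B(x)_{i,j} = x_k$ and $B(x)_{i,k} = x_j$; evaluating at the midpoint and expanding,
\[
\tfrac{a_k + b_k}{2}(a_j - b_j) + \tfrac{a_j + b_j}{2}(a_k - b_k) = a_j a_k - b_j b_k = P_i(a) - P_i(b),
\]
the cross terms $-a_k b_j + b_k a_j$ and $-a_j b_k + b_j a_k$ cancelling in pairs. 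When $j = k$, so $P_i(x) = x_j^2$ and $B(x)_{i,j} = 2x_j$, the midpoint gives $(a_j + b_j)(a_j - b_j) = a_j^2 - b_j^2 = P_i(a) - P_i(b)$.

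Conceptually, all three computations are instances of a single fact: for any polynomial $q$ of total degree at most $2$, the gradient at the midpoint yields the exact secant, $q(a) - q(b) = \nabla q\!\left(\frac{a+b}{2}\right)\cdot(a-b)$, because the quadratic part contributes $\frac{1}{2}(a+b)^{\top} H (a-b) = \frac{1}{2}(a^{\top} H a - b^{\top} H b)$ once symmetry of the Hessian $H$ is used to cancel the cross terms. I would most likely present the explicit case analysis rather than this slicker version, since SNF makes each case a one-line expansion. There is no genuine obstacle here; the only point worth flagging is \emph{why SNF is needed}: the identity is special to total degree $\le 2$, and a monomial of degree $3$ or higher (e.g.\ $x_j^3$) has no exact midpoint secant, so the lemma would fail without first reducing the system to SNF form via Proposition~\ref{prop:snf-form}.
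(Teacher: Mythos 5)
Your proof is correct. Note, however, that the paper itself does not prove this statement at all: it is imported by citation as Lemma 3.3 of \cite{ESY12} (where it is stated and proved in the more general setting of monotone polynomial systems). Your coordinate-wise verification over Forms L and Q --- equivalently, the observation that any polynomial of total degree at most $2$ satisfies the exact midpoint-secant identity $q(a)-q(b)=\nabla q\bigl(\tfrac{a+b}{2}\bigr)\cdot(a-b)$ --- is precisely the elementary computation underlying the cited result, and your closing remark correctly identifies why the degree-$\le 2$ restriction (hence SNF form, with no Form M present in a pure PPS) is what makes the identity exact rather than merely approximate.
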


The following is a strengthened variant of Lemma 2.12 from \cite{esy-icalp12}.

\begin{lemma}[cf. Lemma 2.12 of \cite{esy-icalp12}] \label{lem:ldf-Ndefined} For any (w.l.o.g., quadratic) 
LDF-PPS, $x=P(x)$ with LFP $q^*$, and for $\vzero \leq y < \frac{1}{2}(\vone +q^*)$, 
we have $\rho(B(y)) < 1$ and so $(I-B(y))^{-1}$ exists and is non-negative, and 
thus $\mathcal{N}(y)$ is well-defined. \end{lemma}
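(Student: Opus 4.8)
The plan is to reduce everything to the single spectral claim $\rho(B(y)) < 1$. Once that is in hand, the rest is immediate: the Jacobian $B(y)$ is non-negative (each entry $\partial P_i / \partial x_j$ is, for a quadratic PPS, an affine function of $x$ with non-negative coefficients, and $B$ is monotone in its argument for the same reason), so the basic property recalled before Lemma \ref{lem:3.3ESY12} gives that $(I-B(y))^{-1} = \sum_{k \geq 0} B(y)^k$ exists and is non-negative, and hence $\mathcal{N}(y) = y + (I-B(y))^{-1}(P(y)-y)$ is well-defined. So the whole content lies in proving $\rho(B(y)) < 1$.

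The only non-tautological handle on $B$ at a point other than $y$ itself is the fixed point $q^*$. Writing $c := \frac{1}{2}(\vone + q^*)$, I would apply Lemma \ref{lem:3.3ESY12} with $a = \vone$ and $b = q^*$: the midpoint is exactly $c$, so this yields $P(\vone) - q^* = B(c)(\vone - q^*)$, and since $P(\vone) \leq \vone$ we get the \emph{anchor inequality} $B(c) w \leq w$, where $w := \vone - q^* \geq \vzero$. (Note that applying Lemma \ref{lem:3.3ESY12} at $y$ itself, with witness $\vone - y$, merely reproduces the target inequality tautologically; the anchor must be taken at $c$, where $P(q^*)=q^*$ is exploited.) Because $B(y)$ is block-triangular with respect to the strongly-connected-component (SCC) decomposition of the dependency graph, $\rho(B(y)) = \max_S \rho(B(y)_{SS})$ over SCCs $S$, so it suffices to bound each diagonal block. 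I would first reduce to strictly positive $y$: for the given $y < c$ pick $y'$ with $y \leq y' < c$ and $y' > \vzero$; monotonicity gives $\rho(B(y)) \leq \rho(B(y'))$, and for $y' > \vzero$ every structural edge of each SCC is active, so each block $B(y')_{SS}$ is irreducible and Perron--Frobenius applies. A short auxiliary observation shows $q^*$ is ``homogeneous'' on each SCC, i.e. either $q^*_i < 1$ for all $i \in S$ or $q^*_i = 1$ for all $i \in S$ (since $P_i(q^*)=1$ forces $q^*_k = 1$ for every successor $k$ of $i$); restricting the anchor to $S$ then gives $B(c)_{SS} w_S \leq w_S$.

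For each SCC $S$ I would then establish $\rho(B(c)_{SS}) \leq 1$ and upgrade it to $\rho(B(y)_{SS}) < 1$. If $q^*_S < \vone$, then $w_S > \vzero$, and pairing $B(c)_{SS} w_S \leq w_S$ with the (non-negative) left Perron eigenvector yields $\rho(B(c)_{SS}) \leq 1$. If $q^*_S = \vone$, then $c_S = \vone_S$ and $B(c)_{SS} = B(q^*)_{SS}$, so $\rho(B(c)_{SS}) \leq 1$ by the standard fact that the Jacobian at the LFP of a PPS has spectral radius at most $1$ (itself a consequence of the monotone convergence of value iteration to $q^*$). To get strictness I exploit $y < c$: if $S$ contains a variable of {\tt Form Q}, some within-block entry of $B(y)_{SS}$ is \emph{strictly} below the corresponding entry of $B(c)_{SS}$, so irreducibility plus strict monotonicity of the Perron root give $\rho(B(y)_{SS}) < \rho(B(c)_{SS}) \leq 1$; if $S$ is all-linear, then $B(y)_{SS} = B(c)_{SS}$, but the LDF hypothesis (no linear-degenerate bottom SCC, via Lemma \ref{lem:ld-or-ldf}) forces some within-$S$ row sum to be $< 1$, whence $\rho(B(y)_{SS}) < 1$ by the Perron row-sum bound for irreducible matrices. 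Taking the max over $S$ and tracing back through $y'$ gives $\rho(B(y)) < 1$.

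The main obstacle is the strict inequality, not the bound $\rho(B(c)) \le 1$. Two points carry the weight. First, the SCCs with $q^*_S = \vone$: here the anchor degenerates ($w_S = \vzero$) and one genuinely needs the separate fact $\rho(B(q^*)) \leq 1$, together with $y < c = \vone$ on $S$ to reactivate a strict decrease through a {\tt Form Q} variable. Second, promoting $\leq 1$ to $< 1$ uniformly requires both irreducibility (secured by passing to $y' > \vzero$) and the LDF hypothesis to exclude a critical bottom block; a naive perturbation argument fails precisely because quadratic rows of $B(y)$ can have row sums exceeding $1$, so the strictness must be extracted structurally, per SCC, rather than from a global row-sum or a single positive witness.
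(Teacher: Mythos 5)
Your proposal is correct and takes essentially the same route as the paper's proof: both reduce $\rho(B(y))<1$ to the SCC diagonal blocks, anchor the bound via Lemma \ref{lem:3.3ESY12} with $a=\vone$, $b=q^*$ to get $B(\tfrac{1}{2}(\vone+q^*))(\vone-q^*)\leq \vone-q^*$, split on whether $q^*_S=\vone$ (invoking the known fact $\rho(B(q^*))\leq 1$) or $q^*_S<\vone$ (Perron pairing with the positive witness $\vone-q^*_S$), and obtain strictness from a strictly decreased {\tt Form Q} entry under an irreducible dominating block, with all-linear SCCs handled through the LDF hypothesis forcing a deficient within-$S$ row sum. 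The only cosmetic differences are your detour through a strictly positive $y'$, which is unnecessary (the comparison argument only needs irreducibility of the dominating block $B_S(\tfrac{1}{2}(\vone+q^*))$, automatic since $\tfrac{1}{2}(\vone+q^*)>\vzero$, exactly as the paper uses it), and your not treating the trivial single-variable SCC, where the block is the zero matrix and the bound is immediate.
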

\begin{proof} The spectral radius $\rho(A)$ of a square non-negative
  matrix, $A$, is equal to the maximum of the spectral radii of its
  principal irreducible submatrices (see, e.g., \cite{HornJohnson85},
  Chapter 8).  Any principal irreducible submatrix of $B(y)$ is a
 principal irreducible submatrix of $B_S(y)$ for some SCC $S$ of the
 dependency graph of $x=P(x)$ ($B_S(y)$ itself might not be
 irreducible, since we do not assume $y > \vzero$).  So to show that
 $\rho(B(y)) < 1$, it suffices to show that for any SCC $S$,
 $\rho(B_S(y)) < 1$.

 For a trivial SCC, one where $S=\{x_i\}$ for a single variable $x_i$
 which does not appear in $P_i(x)$, $B_S(y)$ is the zero matrix so
 $\rho(B_S(y)) = 0 < 1$.
 
 Now we consider SCCs which are non-trivial and contain an equation of
 form Q, $x_i=P_i(x)$. Here $P_i(x) \equiv x_jx_k$ for some $j$,$k$
 must contain at least one term, say w.l.o.g., $x_j$ which is also in $S$
 or we would have the above trivial case. We have $B_S(y) \leq
 B_S(\frac{1}{2}(\vone +q^*))$ by monotonicity of $B(x)$. But
 $(B_S(y))_{i,j} = y_k < \frac{1}{2}(1 +q^*_k) =
 (B_S(\frac{1}{2}(\vone +q^*)))_{i,j}$.  So the inequality $B_S(y)
 \leq B_S(\frac{1}{2}(\vone +q^*))$ is strict in the $i,j$
 entry. Since the matrix $B_S(\frac{1}{2}(\vone +q^*))$ is
 irreducible, $\rho(B_S(y)) < \rho(B_S(\frac{1}{2}(\vone +q^*)))$ (again, see e.g., \cite{HornJohnson85}).  So
 it suffices to show that $\rho(B_S(\frac{1}{2}(\vone +q^*))) \leq
 1$. 
 
 There are two cases. Firstly suppose $q^*_S = \vone$. Then any
 SCC $D$ that $S$ depends on also has $q^*_D= \vone$.  So
 $B_S(\frac{1}{2}(\vone +q^*)) = B_S(\vone) = B_S(q^*)$.  But we know
 (\cite{rmc}, \cite{ESY12})  that $\rho(B(q^*)) \leq 1$ so we have that
 $\rho(B_S(\frac{1}{2}(\vone +q^*))) = \rho(B_S(q^*)) \leq
 \rho(B(q^*)) \leq 1$. 
 
 Secondly suppose that $q^*_S \not= \vone$. Then
 $q^*_S < \vone$. Applying Lemma \ref{lem:3.3ESY12}
 with $a=\vone$ and $b=q^*$, we have
 that $B(\frac{1}{2}(\vone +q^*)) (\vone-q^*)=P(\vone)-P(q^*) \leq (\vone-q^*)$. 
 Since
 $B(\frac{1}{2}(\vone +q^*))$ is non-negative and $\vone-q^* \geq \vzero$,
 we have that $B_S(\frac{1}{2}(\vone +q^*))(\vone-q^*_S) \leq
 (\vone-q^*_S)$.  By standard facts of Perron-Frobenius theory, since
 $\vone-q^*_S > \vzero$ and $B_S(\frac{1}{2}(\vone +q^*))(\vone-q^*_S) \leq
 (\vone-q^*_S)$, it follows that $\rho(B_S(\frac{1}{2}(\vone +q^*)) \leq 1$.  So in
 either case we have $\rho(B_S(y)) < \rho(B_S(\frac{1}{2}(\vone +q^*))
 \leq 1$.
 
 Finally we consider SCCs which contain only equations of form L. Here $B_S(y)$ is irreducible since $B_S(x)$ is a constant matrix and so if $i$ depends on $j$, $B_{i,j}(y) \not= 0$.
 $B_S(y)$ is also substochastic since  all the entries in the $i$'th row are coefficients in $P_i(x)$ and $x=P(x)$ is a PPS.
 Since $x=P(x)$, is a LDF-PPS, $B_S(y)$ is not stochastic since otherwise $S$ would be a bottom linear degenerate SCC.
 So there is an irreducible stochastic matrix $A$ with $B_S(y) \leq A$ with strict inequality in some entry. 
This implies  $\rho(B_S(y)) < \rho(A) = 1$.
   \end{proof}

\begin{lemma} \label{lem:ldf-uniquefp} For any LDF-PPS, $x=P(x)$, and
  $y < \vone$, if $P(y) \leq y$ then $y \geq q^*$ and if $P(y) \geq y$,
  then $y \leq q^*$.  In particular, if $q^* < \vone$, then $q^*$ is the
  only fixed-point $q$ of $x=P(x)$ with $q < \vone$. \end{lemma}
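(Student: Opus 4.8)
The plan is to handle both directions with a single linear-algebraic computation built on the Jacobian of $P$ evaluated at the midpoint of $y$ and $q^*$. I work with $\vzero \le y < \vone$, which is the intended range (the fixed points $q$ at issue lie in $[0,1]^n$, so in particular $q \ge \vzero$). Set $w := \tfrac{1}{2}(y + q^*)$. Because $y \ge \vzero$ and $q^* \ge \vzero$ we have $w \ge \vzero$, and because $y < \vone$ we have $w = \tfrac12(y+q^*) < \tfrac12(\vone + q^*)$. Thus $w$ sits precisely in the window covered by Lemma~\ref{lem:ldf-Ndefined}, so $\rho(B(w)) < 1$; hence $I - B(w)$ is nonsingular and $(I-B(w))^{-1} = \sum_{k \ge 0} B(w)^k$ is a non-negative matrix.

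Next I would apply the mean-value identity of Lemma~\ref{lem:3.3ESY12} with $a = y$ and $b = q^*$, giving $P(y) - P(q^*) = B(w)(y-q^*)$. Since $q^* = P(q^*)$, rearranging produces the single clean identity
\[
(I - B(w))(y - q^*) \;=\; (y - q^*) - B(w)(y - q^*) \;=\; (y - q^*) - (P(y) - q^*) \;=\; y - P(y).
\]
Now multiply on the left by the non-negative matrix $(I - B(w))^{-1}$, which preserves coordinate-wise inequalities. If $P(y) \le y$, then $y - P(y) \ge \vzero$, so $y - q^* = (I-B(w))^{-1}(y - P(y)) \ge \vzero$, i.e. $y \ge q^*$. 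Symmetrically, if $P(y) \ge y$, then $y - P(y) \le \vzero$, giving $y - q^* \le \vzero$, i.e. $y \le q^*$.

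For the final assertion, suppose $q^* < \vone$ and let $q < \vone$ be any fixed point, $P(q) = q$. Then both $P(q) \le q$ and $P(q) \ge q$ hold, so applying the two cases above with $y = q$ yields $q \ge q^*$ and $q \le q^*$, whence $q = q^*$; since $q^* < \vone$ is itself such a fixed point, it is the unique one below $\vone$. The crux of the argument---and the only place the hypotheses are genuinely used---is establishing $\rho(B(w)) < 1$ in the first paragraph: this is exactly where $y < \vone$ (which forces $w < \tfrac12(\vone + q^*)$) together with the LDF property (via Lemma~\ref{lem:ldf-Ndefined}) are indispensable. Without them the naive monotonicity/Tarski bound would only give $y \le g^*$ from $P(y) \ge y$, and indeed for a linear-degenerate bottom component one has $g^* = \vone$, so the sharper conclusion $y \le q^*$ really does depend on excluding that case.
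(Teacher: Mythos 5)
Your proof is correct and follows essentially the same route as the paper's: both apply Lemma~\ref{lem:ldf-Ndefined} at the midpoint $\frac{1}{2}(y+q^*)$ to get a non-negative inverse $(I-B(\frac{1}{2}(y+q^*)))^{-1}$, then use the identity of Lemma~\ref{lem:3.3ESY12} to write $q^*-y$ as that inverse applied to $P(y)-y$, and conclude by sign preservation, with the uniqueness claim following by applying both directions to a fixed point $q<\vone$. Your explicit restriction to $\vzero \leq y$ is a harmless (indeed slightly more careful) reading of the same implicit assumption in the paper's argument.
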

\begin{proof} Since $y < \vone$, $\frac{1}{2}(y+q^*) <
  \frac{1}{2}(\vone+q^*)$. By Lemma \ref{lem:ldf-Ndefined},
  $(I-B(\frac{1}{2}(y+q^*)))^{-1}$ exists and is non-negative. Lemma
  \ref{lem:3.3ESY12} yields that $P(y)- q^* =
  B(\frac{1}{2}(y+q^*))(y-q^*)$. Re-arranging this gives $q^*-y =
  (I-B(\frac{1}{2}(y+q^*)))^{-1}(P(y)-y)$. So when $P(y) -y \geq \vzero$
 we also have $q^*-y \geq \vzero$, and when $P(y)-y \leq \vzero$ we also have $q^*-y \leq \vzero$. That is if
  $P(y) \leq y$ then $y \geq q^*$ and if $P(y) \geq y$, then $y \leq
  q^*$.

Suppose $q < \vone$ is a fixed point, i.e. $P(q)=q$. Then both $P(q) \geq q$ and $P(q) \leq q$, 
so both $q \leq q^*$ and $q \geq q^*$. Thus $q=q^*$.   \end{proof}

We shall need the following fact about BPs later.

\begin{lemma}
\label{lem:either-extinct-or-infty}
For a BP, if the PPS associated with its extinction
probabilities (see \cite{ESY12}) is an LDF-PPS, $x=P(x)$,
and if all
types have extinction probability $q^*_i <1$, then
for any  population $z$ and any initial population, the
probability that $z$ occurs infinitely often is 0.
Consequently, starting with any initial population,
with probability 1 either the process becomes extinct
or the population goes to infinity.
\end{lemma}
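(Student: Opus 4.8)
The plan is to analyze the population sequence $X_0, X_1, \ldots$ as a time-homogeneous Markov chain on the countable state space $\nat^n$, for which the empty population $\vzero$ is absorbing, and to prove the sharper statement that every \emph{non-empty} population $z$ is visited only finitely often, almost surely. First I would show that this sharper statement implies the stated dichotomy. Since the set of non-empty populations is countable, if each is visited finitely often a.s. then, by a countable union, a.s. no non-empty population recurs infinitely often. On any trajectory in this full-measure event that does not become extinct (so $X_k \neq \vzero$ for all $k$), if the population did not tend to infinity we would have $\liminf_k |X_k| = c < \infty$, hence $|X_k| \le c$ for infinitely many $k$; as there are only finitely many non-empty populations of total size at most $c$, the pigeonhole principle would force some non-empty $z$ to recur infinitely often, a contradiction. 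Thus $|X_k| \to \infty$, yielding the extinct-or-infinite dichotomy.

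To prove that a fixed non-empty $z$ recurs finitely often a.s., it suffices, by the strong Markov property, to show that the return probability $r_z := \Pr_z[\exists k \ge 1 : X_k = z]$ is strictly less than $1$; then the successive excursions from $z$ are independent, the number of visits to $z$ is dominated by a geometric random variable, and it is finite a.s. The natural tool is the extinction probability: setting $e(w) := \prod_i (q^*_i)^{w_i}$, the fixed-point identity $q^* = P(q^*)$ and the independence of the subtrees rooted at distinct objects give that $e(X_k)$ is a bounded martingale and that extinction from the initial population $z$ occurs with probability exactly $e(z)$. Whenever every type $i$ in the support of $z$ has $q^*_i > 0$ we have $e(z) > 0$, and extinction precludes any return to the non-empty population $z$; hence $r_z \le 1 - e(z) < 1$, as required.

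The remaining, and main, difficulty is the case where $z$ contains an object of some type $i$ with $q^*_i = 0$ (an ``immortal'' type), for which $e(z)=0$ and the extinction escape is unavailable. Here I would exploit the following structural observation: if $q^*_i = 0$ then $P_i(q^*)=0$, so, since the coefficients of the probabilistic polynomial $P_i$ are non-negative, every monomial of $P_i$ must vanish at $q^*$ and its constant term is $0$; equivalently, every rule of type $T_i$ produces at least one offspring of an immortal type and no rule sends $T_i$ to $\emptyset$. Consequently the total number of immortal objects is \emph{non-decreasing} along every trajectory. Writing $m_0$ for the number of immortal objects in $z$, a return to $z$ requires this count to equal $m_0$; so if from $z$ the immortal count ever strictly exceeds $m_0$, the process can never return to $z$. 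It therefore suffices to produce, from $z$, a positive-probability finite event after which the immortal count has strictly increased, or to otherwise force unbounded growth.

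The hard part is to rule out the ``conservative'' scenario in which the immortal objects merely relabel themselves one-to-one forever, keeping the population bounded. This is exactly where the LDF hypothesis enters: by Lemma \ref{lem:ld-or-ldf}, an LDF-PPS has no linear-degenerate bottom SCC, so the immortal types cannot form a closed class all of whose rules are single-object relabelings $T_a \to T_b$. Hence some immortal type reachable along an immortal lineage either has a rule producing two or more immortal offspring---which permanently increases the immortal count, giving the escape---or has a rule that spawns a mortal object as a byproduct; and in the latter situation the hypothesis $q^*_j < 1$ for \emph{every} type $j$ ensures the spawned mortal subtrees survive with probability bounded below by $1 - \max_j q^*_j > 0$, which (via the classical fact that a supercritical single-type branching process tends to infinity on survival, used as a base case and propagated up the SCC DAG of the dependency graph) drives the total population to infinity with positive probability, again precluding recurrence of the bounded population $z$. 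Combining the cases gives $r_z < 1$ for every non-empty $z$. I expect this last step---handling the immortal/relabeling case using the absence of a linear-degenerate bottom SCC together with $q^* < \vone$---to be the main obstacle; the Markov-chain reduction and the mortal case are routine.
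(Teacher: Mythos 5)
Your overall frame---prove that every non-empty population $z$ is a transient state of the population Markov chain, then get the dichotomy by a countable union and pigeonhole---is sound, and your mortal case and case (a) of the immortal case are essentially workable. The genuine gap is in case (b) of the immortal case, exactly where you yourself flag the main obstacle. When the non-relabeling rule of an immortal type spawns a mortal offspring of type $l$, you argue that the spawned subtree survives with probability at least $1 - \max_j q^*_j > 0$ and then assert that survival ``drives the total population to infinity,'' citing a classical single-type fact ``propagated up the SCC DAG.'' Two problems. First, survival alone does not preclude later returns to $z$: a surviving subtree could a priori stay bounded (ruling that out is precisely the content of the lemma), and even a surviving subtree is compatible with the total population equaling $z$ again, since $z$ may contain mortal objects. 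So you truly need ``survival implies growth to infinity'' for the subtree rooted at $l$---but that is the very dichotomy being proved, applied to a sub-BP whose type set need not be any smaller than the original: if $l$ can reach immortal types (SCCs can mix mortal and immortal types), the subtree can involve every type including $j^*$ itself, so the appeal is circular and no induction ``up the SCC DAG'' applies as stated. The propagation step is not routine; it is the hard part, requiring a conditional Borel--Cantelli argument over the infinitely many spawned subtrees together with the immortal-count analysis inside a strongly connected component.

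The gap is repairable, but it needs an idea your sketch lacks. Either (i) split case (b): if $l$ can reach an immortal type in the dependency graph, then with positive probability $l$'s subtree produces an immortal object, raising the non-decreasing immortal count above $m_0$, and the case (a) argument finishes; if $l$ cannot reach any immortal type, then the types reachable from $l$ form a \emph{proper} reachability-closed subset consisting only of mortal types, and induction on the number of types legitimately applies to that strictly smaller LDF sub-BP. Or (ii) follow the paper's route, which avoids the issue entirely by decomposing into SCCs \emph{first}: in the strongly connected case, strong connectivity routes \emph{both} offspring of the guaranteed $\geq 2$-offspring rule back to immortal types with positive probability, so no claim about growth of mortal subtrees is ever needed; the general case is then reduced to bottom SCCs, where a recurring $z$ would spawn infinitely many objects of a bottom-SCC type $j$ with $q^*_j < 1$, each of whose disjoint subtrees grows to infinity with positive probability by the already-proved strongly connected case. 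Separately, a small but real slip in your mortal case: the inequality $r_z \leq 1 - e(z)$ is false, since extinction does not preclude returns to $z$ \emph{before} the extinction time; the correct deduction is that $r_z = 1$ would force infinitely many visits to $z$ almost surely, which is incompatible with extinction having probability $e(z) > 0$, hence $r_z < 1$.
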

\begin{proof}
Let $G$ be the dependency graph of the branching process.
Suppose first that $G$ is strongly connected.
We claim then that almost surely (with probability 1) the process either becomes
extinct or grows without bound (for any initial population). 
This can be shown easily using the results in \cite{Harris63}
in the so called positive regular (primitive) moment matrix case.
We give a direct proof. Suppose first that all types have positive
extinction probability, $q^*_i >0$. Let $X_k$ denote the population at time $k$,
for $k \geq 0$.
Then for every population $z \neq 0$, the probability
$P( X_k =0 | X_0=z) >0$ for some large enough $k$, and for all $k' \geq k$. 
Hence the population $z$ is a {\em transient} state of the underlying
countable-state Markov chain of the BP, that is, the probability
that $z$ occurs infinitely often is 0. Since this holds for every $z \neq 0$,
 the process almost surely either becomes extinct or grows without bound. 

Suppose now that there are some types $i$ with extinction probability $q^*_i =0$,
and let $Z$ be the set of all such types.
Then every rule of every type in $Z$ includes in the offspring at least one element of $Z$.
So the population of objects with type in $Z$ can never go down.
Since the process is not linear degenerate, at least one type $i^*$ of $Z$
has a rule $r^*$ with two or more offspring. Since $G$ is strongly connected,
if we start with an object of any type, with positive probability, the
process will generate within $n$ steps an object of type $i^*$, apply rule $r^*$, and within another $n$
steps, the (at least) two offspring can generate two objects with type in $Z$.
If the process does not go extinct, this happens infinitely often almost surely,
and since the number of objects with type $Z$ never goes down, this implies that
the size goes to infinity. 
Hence, with probability 1, the process either goes extinct or grows without bound.
Thus, the lemma holds if $G$ is strongly connected.

Consider now a branching process with a dependency graph $G$ that is not
strongly connected. Suppose that there is positive probability that a population $z$
occurs infinitely often. 
Let $i$ be the type of an object in $z$
and let $j$ be a type reachable from $i$ that is in a bottom strongly connected component $S$.
Every time there is an object of type $i$ in the population, there is positive
probability that it will generate later on an object of type $j$.
Since $z$ occurs infinitely often, 
almost surely the process will contain also infinitely often objects of type $j$.
Since $q^*_j <1$, the process starting with a single object of type $j$,
grows without bound with positive probability.
Since objects of type $j$ occur infinitely often, 
the probability that the process stays bounded is 0.
 
\end{proof}

\begin{lemma}
\label{lem:PPS:unique-fixpoint}
If $x=P(x)$ is a PPS with GFP $g^*$ such that $\vzero \leq g^* < \vone$,   then $g^*$ is
          the unique fixed point solution of $x=P(x)$ in $[0,1]^n$.   In
          other words, $g^* = q^*$, where $q^*$ is the LFP of $x=P(x)$.
\end{lemma}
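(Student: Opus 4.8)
The plan is to reduce everything to the structural dichotomy of Lemma \ref{lem:ld-or-ldf} together with the uniqueness statement already established for LDF-PPSs in Lemma \ref{lem:ldf-uniquefp}; almost all the real work has in fact been done by those two lemmas, so the argument should be short.

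First I would use the hypothesis $g^* < \vone$ to place the system in the linear-degenerate-free case. By Lemma \ref{lem:inbetween-ldf}, since the GFP satisfies $g^* < \vone$, the PPS $x = P(x)$ must be an LDF-PPS. This is the crucial structural step, because the uniqueness machinery of Lemma \ref{lem:ldf-uniquefp} is only available for LDF-PPSs: it is precisely the exclusion of a linear-degenerate bottom SCC that rules out the alternative fixed point $\vone$ on such a component.

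Next I would record that the LFP and GFP always satisfy $q^* \leq g^*$, so together with the hypothesis we obtain $q^* \leq g^* < \vone$; in particular $q^* < \vone$. Lemma \ref{lem:ldf-uniquefp} then applies in its strong form: for an LDF-PPS whose LFP satisfies $q^* < \vone$, the LFP $q^*$ is the \emph{only} fixed point $q$ of $x = P(x)$ with $q < \vone$. I would then close the argument by verifying that \emph{every} fixed point falls into this regime. If $q \in [0,1]^n$ is any fixed point, then $q \leq g^*$ because $g^*$ is the greatest fixed point, and since $g^* < \vone$ coordinatewise we get $q \leq g^* < \vone$, hence $q < \vone$; by the uniqueness clause, $q = q^*$. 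Applying this in particular to the fixed point $g^*$ itself gives $g^* = q^*$, and the same argument shows that $q^* = g^*$ is the unique fixed point in $[0,1]^n$.

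I do not expect a genuine obstacle here. The only point that requires a little care is the coordinatewise reading of the strict inequality $g^* < \vone$: it must hold in \emph{every} coordinate, since that is exactly what lets an arbitrary fixed point $q \leq g^*$ inherit $q < \vone$ and thereby fall under the uniqueness clause of Lemma \ref{lem:ldf-uniquefp}, rather than escaping it through some coordinate equal to $1$. Once this is noted, the proof is a direct chaining of Lemmas \ref{lem:inbetween-ldf} and \ref{lem:ldf-uniquefp} with the standard fact $q^* \leq g^*$.
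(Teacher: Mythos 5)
Your proposal is correct and follows essentially the same route as the paper: both first invoke Lemma \ref{lem:inbetween-ldf} to conclude from $g^* < \vone$ that the PPS is LDF, and then apply Lemma \ref{lem:ldf-uniquefp} to identify $g^*$ with $q^*$. The only (immaterial) difference is that the paper applies the implication ``$P(y) \geq y$ implies $y \leq q^*$'' directly to $y = g^*$, whereas you route through the ``in particular'' uniqueness clause after noting $q^* \leq g^* < \vone$; both are valid one-line closings of the same argument.
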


\begin{proof}
Since $g^* < \vone$, by Lemma \ref{lem:inbetween-ldf}, $x=P(x)$ is a LDF-PPS.
Thus, since $P(g^*) \geq g^*$, it follows by  Lemma \ref{lem:ldf-uniquefp} that
$q^* = g^*$.
  \end{proof}

\begin{proposition}{(cf. also \cite{CDK12}, Proposition 5\&6, and Lemma 20; and \cite{ESY12})}
 Given a PPS, $x=P(x)$, 
with GFP $g^*$, and given any integer $j>0$, there is an algorithm that
computes a rational vector $v \leq g^*$ with $\|g^* -v\|_\infty \leq 2^{-j}$,
in time polynomial in $|P|$ and $j$.
\end{proposition}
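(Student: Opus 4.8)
The plan is to reduce the problem to two ingredients already assembled in the excerpt: the qualitative detection of the coordinates where $g^*_i=1$, and the Generalized Newton Method / decomposed Newton iteration that is known (from \cite{esy-icalp12,ESY12}) to approximate the LFP of a PPS in P-time. First I would run the P-time algorithm of Proposition \ref{lem:prob1-ptime} to identify the set $U=\{i : g^*_i=1\}$. Substituting $x_i=1$ for all $i\in U$ into the remaining equations yields a residual PPS, $x'=P'(x')$, on the variables $x'$ indexed by $[n]\setminus U$. A short monotonicity argument shows that the GFP of the residual system equals the projection of $g^*$ onto the surviving coordinates, and that this residual GFP, call it $h^*$, satisfies $h^* < \vone$. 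By Lemma \ref{lem:PPS:unique-fixpoint}, since $h^* < \vone$, the residual PPS has a \emph{unique} fixed point in $[0,1]^{|x'|}$, so its GFP coincides with its LFP, i.e. $h^* = q'^*$. This is the key structural collapse that lets the GFP problem be attacked with LFP machinery.

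With $h^*=q'^*$ in hand, the second step is to invoke the known P-time LFP approximation algorithm on the residual LDF-PPS. Lemma \ref{lem:inbetween-ldf} guarantees the residual system is a LDF-PPS (since $h^* < \vone$), which is precisely the regularity condition under which Lemma \ref{lem:ldf-Ndefined} ensures that the Newton operator $\mathcal{N}(y)$ is well-defined (the relevant Jacobians $I-B(y)$ are nonsingular with nonnegative inverse) for $\vzero \leq y < \tfrac{1}{2}(\vone + q'^*)$. I would then apply the rounded Newton/GNM iteration of \cite{ESY12} starting at $\vzero$, which converges from below and, with rounding parameter $h$ chosen polynomially in $|P|$ and $j$, produces a rational vector $v' \leq q'^* = h^*$ with $\|h^* - v'\|_\infty \leq 2^{-j}$ in time polynomial in $|P|$ and $j$. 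Extending $v'$ by setting $v_i = 1$ exactly for all $i\in U$ gives a rational vector $v \leq g^*$ (the $U$-coordinates are exact, the others are underestimates) with $\|g^* - v\|_\infty \leq 2^{-j}$, as required.

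The main obstacle I anticipate is not the Newton convergence itself, which is cited, but rather verifying cleanly that the residual system behaves as claimed: that eliminating the $g^*_i=1$ coordinates produces a PPS whose GFP is the projection of $g^*$ and is strictly below $\vone$ in every coordinate. The direction $h^*\le$ (projection of $g^*$) is easy by monotonicity, but the reverse requires checking that plugging back the value $1$ on $U$ and the fixed point $h^*$ on the rest recovers a genuine fixed point of the original system, and that no surviving coordinate secretly equals $1$ (which is exactly what Proposition \ref{lem:prob1-ptime} rules out). The other point of care is bookkeeping the rounding: the Newton iterates must be kept nonnegative and rounded down so that the computed $v'$ stays $\le q'^*$ while still achieving additive error $2^{-j}$; this is standard from \cite{ESY12} but must be stated so the polynomial dependence on both $|P|$ and $j$ is transparent. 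Once these structural facts are pinned down, the result follows by assembling the cited convergence and precision bounds.
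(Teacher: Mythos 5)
Your proposal is correct and follows essentially the same route as the paper's own proof: preprocess with Proposition \ref{lem:prob1-ptime} to eliminate the coordinates with $g^*_i=1$, invoke Lemma \ref{lem:PPS:unique-fixpoint} on the residual system (where $g^* < \vone$) to conclude that its GFP coincides with its LFP, and then apply the P-time LFP approximation algorithm of \cite{ESY12}. The extra care you take about the projection/extension bookkeeping and the LDF structure is sound but is exactly what the paper's terser ``without loss of generality'' argument encapsulates.
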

\begin{proof}
By Proposition \ref{lem:prob1-ptime}, 
it is without loss of generality to assume that $g^* < \vone$,
because
we first preprocess $x=P(x)$, and remove
the variables $x_j$ such that $g^*_j = 1$, plugging in $1$ in their place on
RHSs of other equations.  So, we assume wlog that PPS $x=P(x)$ satisfies $g^* < \vone$.
By Lemma \ref{lem:PPS:unique-fixpoint},   $x=P(x)$ has a unique fixed point in
$[0,1]^n$, and $g^* = q^*$, where $q^*$ is the LFP.    We can then simply
apply the algorithm from \cite{ESY12}, to approximate the LFP $q^* = g^*$ 
of $x=P(x)$ within $j$ bits of precision in time polynomial in $|P|$ and $j$. 
  \end{proof}

\section{Approximating the GFP of a maxPPS in P-time}

\label{sec:maxpps-gfp-approx}

In this section, we will show that we can approximate the GFP of a maxPPS
and compute an $\epsilon$-optimal deterministic policy in polynomial time.
We show also that we can determine easily if the value is 0.

We call a policy $\sigma$ for a max/minPPS, $x=P(x)$, 
{\em linear degenerate free} (LDF)  if its 
associated PPS $x=P_\sigma(x)$ is an LDF-PPS.

\begin{lemma}
\label{lem:maxPPS:unique-fixpoint}
For any maxPPS, $x=P(x)$, if GFP $g^* < \vone$  then  $g^*$ is
          the unique fixed point of $x=P(x)$ in $[0,1]^n$.   In
          other words, $g^* = q^*$, where $q^*$ is the LFP of $x=P(x)$.
\end{lemma}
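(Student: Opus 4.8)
The plan is to reduce the statement to the pure-PPS case already handled in Lemma~\ref{lem:PPS:unique-fixpoint}, using the fact (Corollary~\ref{cor:opt-maxPPS}) that the max player has an \emph{optimal deterministic policy} for the GFP. First I would invoke Corollary~\ref{cor:opt-maxPPS} to obtain a deterministic policy $\sigma^*$ with $g^* = g^*_{\sigma^*}$, where $g^*_{\sigma^*}$ is the GFP of the pure PPS $x = P_{\sigma^*}(x)$. Since by hypothesis $g^*_{\sigma^*} = g^* < \vone$, Lemma~\ref{lem:inbetween-ldf} tells us $x = P_{\sigma^*}(x)$ is an LDF-PPS, and Lemma~\ref{lem:PPS:unique-fixpoint} tells us that $g^*_{\sigma^*}$ is the unique fixed point of $x = P_{\sigma^*}(x)$ in $[0,1]^n$, so that in particular $g^*_{\sigma^*} = q^*_{\sigma^*}$ equals the LFP of $P_{\sigma^*}$. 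Combining, $q^*_{\sigma^*} = g^*$.

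The remaining step is to show that \emph{every} fixed point $q'$ of the maxPPS in $[0,1]^n$ coincides with $g^*$. Since $g^*$ is the GFP, any such $q'$ satisfies $q' \leq g^* < \vone$. The key observation is that fixing the deterministic policy $\sigma^*$ can only decrease the operator pointwise: for every $x$ we have $(P_{\sigma^*})_i(x) = q_{i,\sigma^*(i)}(x) \leq \max_j q_{i,j}(x) = P_i(x)$, hence $P_{\sigma^*}(x) \leq P(x)$ on $[0,1]^n$. Applying this at $q'$ gives $P_{\sigma^*}(q') \leq P(q') = q'$. Now I would apply Lemma~\ref{lem:ldf-uniquefp} to the LDF-PPS $x = P_{\sigma^*}(x)$: since $q' < \vone$ and $P_{\sigma^*}(q') \leq q'$, it follows that $q' \geq q^*_{\sigma^*} = g^*$. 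Together with $q' \leq g^*$ this yields $q' = g^*$, establishing uniqueness. Taking $q'$ to be the LFP $q^*$ (which is a fixed point in $[0,1]^n$) then gives $q^* = g^*$, completing the proof.

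I do not expect a serious obstacle here, as the argument is essentially a one-line reduction once the right tools are assembled; the only point requiring care is the direction of the inequalities. The subtlety is that we must feed the \emph{sub}-fixed-point condition $P_{\sigma^*}(q') \leq q'$ (rather than a super-fixed-point condition) into the correct branch of Lemma~\ref{lem:ldf-uniquefp}, which needs $q' < \vone$ to guarantee that $(I - B(\tfrac{1}{2}(q'+q^*_{\sigma^*})))^{-1}$ exists and is non-negative; this is exactly where the hypothesis $g^* < \vone$ is used, since it forces $q' \leq g^* < \vone$. It is worth noting that the analogous statement fails for minPPSs (the optimal policy for the min player need not exist and the residual system can have $\mathrm{LFP} < \mathrm{GFP}$), which is precisely why the max case is the ``easy'' one and is treated first.
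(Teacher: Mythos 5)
Your proof is correct, and it shares its first half with the paper's: both invoke Corollary~\ref{cor:opt-maxPPS} to obtain a deterministic optimal policy $\sigma^*$ and then use Lemma~\ref{lem:PPS:unique-fixpoint} (via Lemma~\ref{lem:inbetween-ldf}) to conclude $q^*_{\sigma^*} = g^*_{\sigma^*} = g^* < \vone$. Where you diverge is the concluding comparison step. The paper runs value iteration from $\vzero$ simultaneously on $P_{\sigma^*}$ and on $P$, proves $P^i_{\sigma^*}(\vzero) \leq P^i(\vzero)$ by induction using the monotonicity inequality $P_{\sigma^*}(z) \leq P(z')$ for $z \leq z'$, and passes to the limit to get $g^* = q^*_{\sigma^*} \leq q^*$, which combined with $q^* \leq g^*$ finishes the argument. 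You instead take an \emph{arbitrary} fixed point $q'$ of the maxPPS, observe $q' \leq g^* < \vone$ and $P_{\sigma^*}(q') \leq P(q') = q'$, and feed this sub-fixed-point condition into Lemma~\ref{lem:ldf-uniquefp} for the LDF-PPS $x = P_{\sigma^*}(x)$ to get $q' \geq q^*_{\sigma^*} = g^*$, hence $q' = g^*$. Your route is shorter and yields the uniqueness claim directly for every fixed point at once (rather than deducing it from $q^* = g^*$); it also correctly identifies that the hypothesis $g^* < \vone$ is what licenses the application of Lemma~\ref{lem:ldf-uniquefp}. What the paper's route buys in exchange is that its comparison step is more elementary: it needs only monotonicity and Kleene/Tarski convergence of value iteration, not a second appeal to the non-singularity machinery ($(I-B(\cdot))^{-1}$ existing and being non-negative) that underlies Lemma~\ref{lem:ldf-uniquefp}. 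Both are sound; the difference is purely in which tool closes the gap between $q^*$ and $g^*$.
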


\begin{proof}
Suppose $x=P(x)$ is a maxPPS with  GFP $g^* < \vone$.

We know, by Corollary \ref{cor:opt-maxPPS}, that 
there is a deterministic optimal policy for achieving the GFP
for $x=P(x)$,  i.e., there is a deterministic policy $\sigma$  such that
$g^* = g^*_{\sigma}$, where $g^*_{\sigma}$ is the GFP of the PPS 
$x= P_{\sigma}(x)$.
(Namely, $\sigma$ just picks, from each type, an
 action that maximizes the RHS of the corresponding equation
   evaluated at $g^*$.)

Let  $\sigma$ be such an optimal policy.
Then $\vzero \leq g^*_\sigma = g^* < \vone$.
By Lemma \ref{lem:PPS:unique-fixpoint} this implies   
$\vzero \leq q^*_\sigma = g^*_\sigma < \vone$.
Next, we observe the following easy fact:

\begin{lemma}
\label{claim:ineq-maxPPS}
For all $z , z' \in [0,1]^n$,  if $z \leq z'$ then
         $P_\sigma(z)  \leq  P(z')$.
\end{lemma}
\begin{proof}
This holds because   $P(z) \leq P(z')$ by monotonicity of $P(x)$, and
because each expression $(P(z))_i$ in $P(z)$ consists of the $\max$ 
operator applied to a set of monotone polynomial terms,
which include among them the monotone polynomial $(P_\sigma(z))_i$,
and thus $P_\sigma(z) \leq P(z)$.
  \end{proof}

Now we consider ``value iteration'' starting from the all-0 vector,
on {\em both}  the PPS $P_\sigma(x)$ and  the maxPPS $P(x)$.  
Let $x^0  := y^0 := {\mathbf 0}$. 
For $i \geq 1$, let  $x^i :=  P^i_\sigma(\vzero)$ and let $y^i :=  P^i(\vzero)$.
Note that  $x^i \leq x^{i+1}$  and  $y^i \leq y^{i+1}$, for all $i \geq 0$.

We claim that $x^i \leq y^i$  for all $i \geq 0$.
This holds by induction on $i$: base case $i=0$ is by definition.
For $i \geq 0$, assuming $x^i \leq y^i$,
we have $x^{i+1} = P_\sigma(x^i) \leq P(y^i)  = y^{i+1}$, where
the middle inequality follows by Lemma \ref{claim:ineq-maxPPS}.

By Lemma
\ref{lem:PPS:unique-fixpoint},  and since $\sigma$ is optimal,
we know that 
 $(\lim_{i \rightarrow \infty}  x^i) = q^*_\sigma = g^*_\sigma = g^*$.
We also have that $(\lim_{i \rightarrow \infty}  y^i) =  q^*$,
where $q^*$ is the LFP of the maxPPS $x=P(x)$.
But then since  $x^i \leq y^i$ for all $i$,  
it follows that $g^* \leq q^*$.  But since we always have 
$q^* \leq g^*$, this implies  $g^* = q^*$.
  \end{proof}

\begin{theorem}
 Given a maxPPS, $x=P(x)$, 
with GFP $g^*$,
\begin{enumerate}
\item  Given $i \in [n]$, 
there is an algorithm that determines in P-time whether $g^*_i=0$,
and if $g^*_i > 0$ computes a deterministic policy
for the max player that achieves this.

\item Given any integer $j>0$, there is an algorithm that
computes a rational vector $v \leq g^*$ with $\|g^* -v\|_\infty \leq 2^{-j}$,
and also computes a deterministic policy $\sigma$,
such that $\|g^* - g^*_\sigma\| \leq 2^{-j}$, both 
in time polynomial in $|P|$ and $j$.
\end{enumerate}
\end{theorem}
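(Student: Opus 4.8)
The plan is to reduce, after one preprocessing step, both parts to the already-solved problem of approximating the \emph{least} fixed point (LFP) of a maxPPS via Generalized Newton's Method (GNM). First I would run the P-time algorithm of Proposition~\ref{lem:prob1-ptime} to compute $U=\{i: g^*_i=1\}$, and form the \emph{residual} maxPPS $x_{\bar U}=P'(x_{\bar U})$ over $\bar U=[n]\setminus U$ by substituting the constant $1$ for each $x_i$, $i\in U$, on the right-hand sides. A short monotonicity argument shows that the GFP $g'^*$ of $P'$ equals the restriction $g^*_{\bar U}$ and, because $U$ already contains every coordinate equal to $1$, that $g'^*<\vone$. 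By Lemma~\ref{lem:maxPPS:unique-fixpoint}, $P'$ then has a \emph{unique} fixed point, so $g'^*=q'^*$ where $q'^*$ is its LFP. This identity is the crux: it converts the GFP problem into an LFP problem to which the results of \cite{esy-icalp12} apply.

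For Part~1, observe that $g^*_i=1>0$ for $i\in U$, whereas for $i\in\bar U$ we have $g^*_i=q'^*_i$; thus deciding whether $g^*_i=0$ is exactly deciding whether the LFP coordinate $q'^*_i$ of the residual maxPPS is $0$. This is solved in P-time by the AND-OR graph analysis of \cite{rmdp}, which moreover returns a deterministic policy on $\bar U$ under which every coordinate with $q'^*_i>0$ remains positive. Combined with the value-$1$-forcing policy on $U$ supplied by Proposition~\ref{lem:prob1-ptime}, this yields one deterministic policy witnessing $g^*_i>0$ whenever that holds.

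For Part~2, I would further delete (fix to $0$) the residual variables with $q'^*_i=0$ found in Part~1, so that the remaining maxPPS has LFP strictly between $\vzero$ and $\vone$; then rounded GNM from \cite{esy-icalp12}, started at $\vzero$, computes in time polynomial in $|P|$ and $j$ a rational $v'\le q'^*=g'^*$ with $\|g'^*-v'\|_\infty\le 2^{-j}$ together with a $2^{-j}$-optimal deterministic policy for the LFP. Padding $v'$ with $1$'s on $U$ gives $v\le g^*$ with $\|g^*-v\|_\infty\le 2^{-j}$. The main obstacle is to certify the \emph{GFP} guarantee $\|g^*-g^*_\sigma\|_\infty\le 2^{-j}$ for the extracted policy $\sigma$, since GNM only controls the policy's LFP. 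Here I would let $\sigma$ take the value-$1$-forcing choices on $U$ and the GNM choices on $\bar U$; the value-$1$ closure property established in the proof of Proposition~\ref{lem:prob1-ptime} (the subsystem on $U$ depends only on itself and contains no deficient node) forces $(g^*_\sigma)_U=\vone$ irrespective of the choices on $\bar U$, so $(g^*_\sigma)_{\bar U}$ is the GFP of the PPS obtained by fixing $\sigma$ on the residual system. Since for a maxPPS any fixed policy has LFP $\le q'^*<\vone$, Lemma~\ref{lem:PPS:unique-fixpoint} shows this residual PPS has a unique fixed point, so its GFP equals its LFP, which GNM placed within $2^{-j}$ of $q'^*=g^*_{\bar U}$. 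Hence $\|g^*-g^*_\sigma\|_\infty\le 2^{-j}$, as required.
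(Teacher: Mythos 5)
Your overall route is exactly the paper's: preprocess with Proposition~\ref{lem:prob1-ptime} to remove the coordinates with $g^*_i=1$, invoke Lemma~\ref{lem:maxPPS:unique-fixpoint} to conclude that the residual maxPPS satisfies $g^{\prime *}=q^{\prime *}$, and then hand Part~1 to the AND-OR (LFP-zero) analysis of \cite{rmdp} and Part~2 to the LFP-approximation machinery (rounded GNM) of \cite{esy-icalp12}. You are in fact more careful than the paper on the one delicate point in Part~2: GNM certifies $\|q^{\prime *}-q^{\prime *}_\sigma\|_\infty\le 2^{-j}$, i.e., near-optimality of $\sigma$ for the \emph{LFP}, whereas the theorem demands $\|g^*-g^*_\sigma\|_\infty\le 2^{-j}$, a statement about the \emph{GFP} of the policy-fixed system; the paper's own proof passes over this silently, and your plan of bridging it by showing the policy-fixed residual PPS has a unique fixed point is the right idea.

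However, the justification you give for that bridging step is invalid as written. You argue that since any fixed policy of a maxPPS has LFP $\le q^{\prime *}<\vone$, Lemma~\ref{lem:PPS:unique-fixpoint} yields a unique fixed point for $x=P^{\prime}_\sigma(x)$. That lemma's hypothesis is that the \emph{GFP} is $<\vone$, not the LFP, and LFP $<\vone$ does not imply GFP $<\vone$: the one-variable PPS $x_1=x_1^2$ (which is even LDF) has LFP $0$ and GFP $1$. The repair is short and uses only material already in your argument: by Lemma~\ref{claim:ineq-maxPPS}, $P^{\prime}_\sigma(z)\le P^{\prime}(z')$ whenever $z\le z'$, so value iteration started from $\vone$ (rather than $\vzero$) gives $g^{\prime *}_\sigma\le g^{\prime *}<\vone$, and \emph{now} Lemma~\ref{lem:PPS:unique-fixpoint} applies to $x=P^{\prime}_\sigma(x)$. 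Indeed, once you have this monotonicity you can skip uniqueness altogether: $\vzero\le g^{\prime *}-g^{\prime *}_\sigma\le g^{\prime *}-q^{\prime *}_\sigma = q^{\prime *}-q^{\prime *}_\sigma\le 2^{-j}\vone$, using only $q^{\prime *}_\sigma\le g^{\prime *}_\sigma\le g^{\prime *}$. The same monotonicity also disposes of the coordinates you fixed to $0$ before running GNM (any choice of $\sigma$ there keeps their GFP value at $0$). With that one line added, your proof is complete, and it makes explicit a step the paper leaves implicit.
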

\begin{proof}
\begin{enumerate}
\item First apply Proposition \ref{lem:prob1-ptime},
to remove variables $x_k$ with $g^*_k =1$,
and record the partial strategy for max on those types $T_k$ that achieves
 $g^*_k=1$. The residual maxPPS has $q^*=g^*$ by Lemma \ref{lem:maxPPS:unique-fixpoint}.
Thus, in order to decide whether $g^*_i = q^*_i = 0$,
we only need to apply the P-time algorithm from \cite{rmdp}
to decide whether the extinction probability $q^*_i > 0$.
And the AND-OR graph algorithm for this from \cite{rmdp} also 
 supplies a deterministic policy
 to achieve $q^*_i > 0$, if this is the case.
\item Again, we first apply Proposition \ref{lem:prob1-ptime},
so that, wlog, we can assume $g^* < \vone$.
Then by Lemma \ref{lem:maxPPS:unique-fixpoint},  $g^* = q^*$,
so that we only need to approximate the LFP $q^*$ 
of a maxPPS, $x=P(x)$, to within $j$ bits of
precision, and compute a $(2^{-j})$-optimal deterministic policy,
 in time polynomial in $|P|$ and $j$.
Algorithms that achieve precisely these two things were given in \cite{esy-icalp12}.  
\end{enumerate}
 \end{proof}

\section{Approximating the GFP of a minPPS in P-time}

\label{sec:minPPS-gfp-approx}

In this section we will show the following.

\begin{theorem} 
\label{thm:minPPS-approx-gfp}
Given a minPPS, $x=P(x)$ with $g^* < \vone$. If we use
  Generalized Newton's method, starting at $x^{(0)} := \vzero$, 
  with rounding parameter $h=j+2+4|P|$, then after $h$
  iterations, we have $\|g^*-x^{(h)} \|_\infty \leq 2^{-j}$.
\end{theorem}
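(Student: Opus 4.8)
The plan is to mirror the convergence analysis of Generalized Newton's Method (GNM) for the LFP from \cite{esy-icalp12}, but carried out relative to the GFP $g^*$, with the essential new feature that there need not be any optimal policy for the $\min$ player. Throughout I assume (Proposition \ref{prop:snf-form}) that $x=P(x)$ is in SNF form, and, as given, that $g^* < \vone$. First I would record the basic properties of the (un-rounded) operator $I(\cdot)$ on the domain $\vzero \le y \le g^*$, adapting the facts about $I(\cdot)$ from \cite{esy-icalp12} (now using $g^* < \vone$, together with Lemmas \ref{lem:ldf-Ndefined} and \ref{lem:ldf-uniquefp}, to guarantee that the relevant matrices $I-B_\tau(y)$ are nonsingular with nonnegative inverse): namely that $I(y)$ is well-defined, that $I(y) \le \mathcal{N}_\tau(y)$ for \emph{every} policy $\tau$ (since fixing $\tau$ only enlarges the feasible region of the min-LP), and that $I(y)=\mathcal{N}_\tau(y)$ for the $\min$-policy $\tau$ that is greedy (tight) at $I(y)$. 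Using the linearization-gap inequality $P(a)\ge P^y(a)$ for $a\ge y$ (at a form-Q node the gap is $(a_j-y_j)(a_k-y_k)\ge 0$), I would check that feasibility $P(x^{(k)})\ge x^{(k)}$ is preserved along the iteration and, exactly as in \cite{esy-icalp12}, that the iterates stay in $[\vzero,g^*]$ and are monotonically nondecreasing, $\vzero=x^{(0)}\le x^{(1)}\le\cdots\le g^*$. (Coordinates with $g^*_i=0$ are pinned at $0$ and contribute no error.)

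The heart of the proof is a per-iteration contraction toward $g^*$. Fix $\vzero\le y\le g^*$, set $a=I(y)=\mathcal{N}_\tau(y)$ for the greedy $\tau$, and write the linearization as $P^y_\tau(x)=B_\tau(y)x+c_\tau(y)$, so that $a=(I-B_\tau(y))^{-1}c_\tau(y)$. Since $\tau$ is only greedy and not optimal, $g^*$ need not be a fixed point of $P_\tau$, but it is always a \emph{super}-fixed point: $P_\tau(g^*)\ge P(g^*)=g^*$. Expanding $P_\tau(g^*)=P^y_\tau(g^*)+G$, where the gap is $G_i=(g^*_j-y_j)(g^*_k-y_k)\ge 0$ at form-Q nodes and $0$ otherwise, the inequality $P_\tau(g^*)\ge g^*$ rearranges to $c_\tau(y)\ge (I-B_\tau(y))g^*-G$; multiplying by the nonnegative matrix $(I-B_\tau(y))^{-1}$ yields the key bound
\[ g^* - I(y)\ \le\ (I-B_\tau(y))^{-1}G. \]
Writing $G=[\,B_\tau(\tfrac{g^*+y}{2})-B_\tau(y)\,](g^*-y)$ (a direct computation at the Q-nodes) exhibits the error $g^*-I(y)$ as a Newton-type product of $(g^*-y)$, which I would turn into a linear contraction $\|g^*-I(y)\|\le c\,\|g^*-y\|$ with $c\le\tfrac12$ in a norm weighted by $\vone-g^*$, precisely because the midpoint satisfies $\tfrac{g^*+y}{2}<\tfrac12(\vone+g^*)$, mirroring the use of $\tfrac12(\vone+q^*)$ in Lemma \ref{lem:ldf-Ndefined}. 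Crucially, this contraction is toward $g^*$ rather than toward any smaller fixed point: since it forces $g^*-x^{(k)}\to\vzero$, it is exactly what makes GNM ``skip over'' the LFP and converge to the GFP, despite the absence of an optimal $\min$-policy.

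It remains to convert the contraction into the stated iteration/precision count and to absorb rounding. Read off in the $\ell_\infty$ norm, the weighted contraction costs a factor equal to the condition number $\kappa:=\max_y\|(I-B_\tau(y))^{-1}\|_\infty$ together with the reciprocal of the separation of $g^*$ from $\vone$; both are controlled by the encoding size. Every coordinate with $g^*_i<1$ is forced below $1$, by Proposition \ref{lem:prob1-ptime}, through a deficient node whose deficiency $1-P_i(\vone)$ is a rational of bit-size $O(|P|)$, which propagates to a bound $\vone-g^*\ge 2^{-O(|P|)}\vone$, while standard M-matrix estimates give $\kappa\le 2^{O(|P|)}$. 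Hence the un-rounded iterates satisfy $\|g^*-x^{(k)}\|_\infty\le 2^{-k+4|P|}$. For rounding: since each step rounds $I(x^{(k)})$ \emph{down} to a multiple of $2^{-h}$, the iterates remain $\le g^*$ (so the error vector stays nonnegative and all the above inequalities persist), and the error recurrence becomes $\|g^*-x^{(k+1)}\|_\infty\le\tfrac12\|g^*-x^{(k)}\|_\infty+2^{-h}$, which unwinds to an extra additive $2^{-h+1}$. Combining, with $h=j+2+4|P|$ iterations and the same rounding parameter, $\|g^*-x^{(h)}\|_\infty\le 2^{-(h-4|P|)}+2^{-h+1}=2^{-(j+2)}+2^{-(j+1+4|P|)}\le 2^{-j}$, as claimed.

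I expect the main obstacle to be the contraction step. The difficulty is structural: because the $\min$ player has no optimal policy for the GFP (Examples \ref{example1} and \ref{example:no-opt-strat}), the greedy policy $\tau$ attached to $I(y)$ genuinely changes from iteration to iteration, so one cannot fix a single PPS and simply quote the LFP convergence of \cite{esy-icalp12}. The content of the argument is that the super-fixed-point inequality $P_\tau(g^*)\ge g^*$ holds for \emph{whatever} greedy $\tau$ arises, and that this alone (via the linearization-gap bound above) suffices to pull every iterate toward $g^*$ at a geometric rate. Making the constant in this pull uniform --- i.e.\ bounding it by $2^{O(|P|)}$ rather than by something that blows up as the iterates pass near the LFP or near $\vone$ --- is the delicate and technically heaviest part, and is where the separation estimate $\vone-g^*\ge 2^{-O(|P|)}$ enters decisively.
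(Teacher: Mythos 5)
Your overall plan coincides with the paper's own: the engine is a per-iteration halving of $g^*-x^{(k)}$ in the norm weighted by $\vone-g^*$, driven by the fact that the tight (greedy) min-policy $\tau$ at $I(y)$ satisfies the super-fixed-point inequality $P_\tau(g^*)\ge P(g^*)=g^*$; your ``heart of the proof'' paragraph is in substance a re-derivation of Lemma \ref{lem:3.17icalp12} (Lemma 3.17 of \cite{esy-icalp12}), which the paper simply invokes with $a:=y$, $b:=g^*$ to obtain Lemma \ref{lem:minPPS-gfp=halving}. The genuine gap is elsewhere, in the step you dismiss as going ``exactly as in \cite{esy-icalp12}'': that the iterates stay below $g^*$, i.e.\ that $I(y)\le g^*$ whenever $\vzero\le y\le g^*$. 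In the LFP setting that upper bound rests on the existence of an optimal min policy $\tau^*$ with $q^*_{\tau^*}=q^*$: every LP-feasible $a$ satisfies $a\le\mathcal{N}_{\tau^*}(y)\le q^*_{\tau^*}=q^*$. For the GFP there is in general \emph{no} optimal min policy --- a fact you yourself emphasize --- so the argument you propose to copy has nothing to instantiate. This matters because your halving bound is one-sided: it bounds $g^*-I(y)$ from above but does not prevent $I(y)$ from overshooting $g^*$; rounding \emph{down} by $2^{-h}$ cannot repair an overshoot; and once $x^{(k)}\not\le g^*$ the hypothesis of the halving step fails at the next iteration, so the whole induction collapses. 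The paper closes exactly this hole with the structural Lemma \ref{lem:minpps-cases-g-less-1}, proved by a policy-improvement argument: part (2) gives $g^*\le q^*_\tau$ for \emph{every} LDF policy $\tau$, and part (3) gives a deterministic LDF ``surrogate'' policy $\sigma^*$ whose \emph{LFP} equals the GFP, $q^*_{\sigma^*}=g^*$. Then $I(y)\le\mathcal{N}_{\sigma^*}(y)\le q^*_{\sigma^*}=g^*$, the middle inequality being Lemma 3.4 of \cite{ESY12}. This surrogate-policy lemma is the key idea missing from your proposal.

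The same missing lemma also underlies two further facts that you use but justify only loosely. First, well-definedness of $I(y)$ and the identity $I(y)=\mathcal{N}_\tau(y)$: a tight/greedy policy need not be LDF a priori, and for a non-LDF policy $\mathcal{N}_\tau(y)$ need not even be defined ($I-B_\tau(y)$ can be singular); the paper's Lemma \ref{lem:gnm-ldf} runs a separate policy-improvement argument, seeded by an LDF policy from Lemma \ref{lem:minpps-cases-g-less-1}(1), to produce an LDF policy that is tight at the LP optimum. Second, the separation bound $\vone-g^*\ge 2^{-4|P|}\vone$: there is no ``deficiency propagation'' bound available for the GFP of a minPPS directly; the paper obtains it by applying Lemma 3.12 of \cite{ESY12} to the PPS $x=P_{\sigma^*}(x)$ --- again via the surrogate policy, since one first needs a fixed PPS whose LFP equals $g^*$. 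Finally, a smaller but real flaw in your accounting: the recurrence $\|g^*-x^{(k+1)}\|_\infty\le\frac12\|g^*-x^{(k)}\|_\infty+2^{-h}$ is not valid, because the factor $\frac12$ holds only in the $(\vone-g^*)$-weighted norm; translated to $\ell_\infty$ the per-step factor becomes $\frac{1}{2(\vone-g^*)_{\min}}$, which may exceed $1$. The induction must be carried out in the weighted norm throughout, as in the paper, and converted to $\ell_\infty$ once at the end (also, the inverse-norm ``condition number'' $\kappa$ you invoke is not needed for this theorem; only the separation bound is).
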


In order to prove this theorem, we need some structural lemmas
about the GFPs of minPPSs, and their relationship to policies.
There need not exist any policies $\sigma$ with $g^*_\sigma=g^*$,
so we need policies that can, in some sense, act as ``surrogates''
for it. 
Recall that a policy $\sigma$ for a max/minPPS, $x=P(x)$, 
is called linear degenerate free (LDF)  if its 
associated PPS $x=P_\sigma(x)$ is an LDF-PPS.
When we consider the minPPS, $x=P(x)$, obtained from 
a BMDP for (non)reachability, after
eliminating types which cannot reach the target, the LFP 
$q^*_\sigma$ of $x=P_\sigma(x)$
for an LDF policy, $\sigma$, 
has $(q^*_\sigma)_i$ equal to $1$ minus the probability that,
starting with one object of type $i$, we reach the
target or else generate an infinite number of objects that can reach
the target under policy $\sigma$. It turns out that there is an LDF policy
$\sigma^*$ whose associated LFP is the GFP of the minPPS. 
Furthermore, it turns out that we can get an $\epsilon$-optimal policy by
following this LDF policy $\sigma^*$ with high probability and with
low probability following some policy that can reach the target from
anywhere.

\begin{lemma}  
\label{lem:minpps-cases-g-less-1}
If a minPPS $x=P(x)$ has $g^* < \vone$ then:

\begin{enumerate}
\item \label{lem:reach-pol-part}
 There is a deterministic
LDF policy $\sigma$ with $g^*_\sigma <\vone$,

\item \label{lem:ldf-lfp-geq-gfp-part} $g^* \leq q^*_\tau$, for any LDF policy $\tau$, and

\item \label{lem:min-ldf-part} 
There is a deterministic LDF policy $\sigma^*$ whose associated LFP, $q^*_{\sigma^*}$, has  
$g^*=q^*_{\sigma^*}$.\footnote{We remark for the reader's intuition (although we shall not prove it) that it can be shown that
any LDF policy $\sigma^*$ for a minPPS 
that satisfies $q^*_{\sigma^*} = g^* < \vone$ has the property that 
in the underlying BMDP $\sigma^*$ maximizes
the probability of the event of either reaching the target type or else
growing the population of types that can reach the target to infinity.}
\end{enumerate}
\end{lemma}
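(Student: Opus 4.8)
My plan is to route all three parts through Lemma \ref{lem:ldf-uniquefp}, using two elementary facts that I would record first: for any policy $\sigma$ of the minPPS we have $P_\sigma(x)\geq P(x)$ pointwise (the minimum is $\leq$ each selected branch), so value iteration from $\vone$ gives $g^*_\sigma\geq g^*$; and a policy $\sigma$ makes $g^*$ a fixed point of $P_\sigma$ exactly when it selects, at each min-node, a branch attaining the minimum at $g^*$. Part \ref{lem:ldf-lfp-geq-gfp-part} is then immediate: for an LDF policy $\tau$ we have $P_\tau(g^*)\geq P(g^*)=g^*$ with $g^*<\vone$, so Lemma \ref{lem:ldf-uniquefp} applied to the LDF-PPS $P_\tau$ yields $g^*\leq q^*_\tau$.

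For Part \ref{lem:reach-pol-part} I would invoke Proposition \ref{lem:prob1-ptime} on the minPPS. Since $g^*<\vone$, every variable lies in the winning set $S$ of the player reaching the deficient set $\mathcal{Z}$ (because $x_i\in S \iff g^*_i<1$), so the deterministic policy $\tau$ that the proposition produces at the min-nodes satisfies $g^*_\tau<\vone$ on all of $S$, i.e.\ everywhere. By Lemma \ref{lem:inbetween-ldf}, $g^*_\tau<\vone$ forces $P_\tau$ to be an LDF-PPS, so this $\tau$ is the required LDF policy. (Note this is genuinely a different policy from the one needed in Part \ref{lem:min-ldf-part}: an LDF-PPS such as $a=a^2$ can have GFP $\vone$, so the Part \ref{lem:min-ldf-part} policy need not have $g^*_{\sigma^*}<\vone$.)

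Part \ref{lem:min-ldf-part} is the real work. Let $\mathcal{M}$ be the set of deterministic policies attaining the minimum at $g^*$; for $\sigma\in\mathcal{M}$ we have $P_\sigma(g^*)=g^*$ and hence $q^*_\sigma\leq g^*<\vone$. It therefore suffices to exhibit some $\sigma^*\in\mathcal{M}$ that is LDF: the uniqueness clause of Lemma \ref{lem:ldf-uniquefp} (only one fixed point below $\vone$) then forces $q^*_{\sigma^*}=g^*$. To produce such a $\sigma^*$ I would form the directed graph $H$ with edges from each min-node to every branch attaining its minimum at $g^*$, and from each L- or Q-node to its dependencies, and call a variable \emph{good} if it has form Q, or form L with positive constant term, or is deficient ($P_i(\vone)<1$) --- exactly the anchors of condition (ii) of Lemma \ref{lem:ld-or-ldf}. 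The central claim is that in $H$ every variable can reach a good variable; granting this, a backwards pass from the good variables that selects at each min-node a min-attaining branch with strictly smaller distance-to-good yields $\sigma^*\in\mathcal{M}$ in which every variable still reaches a good variable, so $P_{\sigma^*}$ satisfies condition (ii) and is LDF.

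The claim, and its proof, is where I expect the difficulty to lie, since the obstruction is precisely a min-attaining policy that traps some variables in a linear-degenerate bottom SCC (a cycle of identity/stochastic equations) on which the LFP collapses to a value below $g^*$. I would argue by contradiction: if some $x_i$ reaches no good variable, let $C$ be the set reachable from $x_i$ in $H$ (so $C$ has no Q-node and no deficient or constant L-node), let $d=\min_{l\in C}g^*_l<1$, and let $D=\{l\in C: g^*_l=d\}$. Averaging shows each non-good L-node of $C$, whose coefficients sum to $1$, has all its positive-coefficient dependencies at value $\geq d$, so a node of value $d$ sends them all into $D$; likewise each min-node of $D$ keeps its min-attaining branch inside $D$, and every branch leaving $D$ has value strictly above $d$. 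Hence $D$ is closed, and I can perturb $g^*$ upward only on $D$: put $y=g^*$ except $y_l=d+\delta$ on $D$, with $\delta>0$ small enough to stay below $1$ and below every exit value. A direct check gives $P(y)=y$ on $D$ (stochastic L-rows average $d+\delta$ back to $d+\delta$, and min-nodes still attain their minimum inside $D$) and $P(y)\geq y$ elsewhere by monotonicity, so $y\leq P(y)$; since the GFP is the largest pre-fixed point (Tarski's theorem), this forces $y\leq g^*$, contradicting $y_D>g^*_D$. This shows the degenerate trap can always be escaped through a min-attaining branch, establishing the claim and completing Part \ref{lem:min-ldf-part}.
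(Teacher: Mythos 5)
Your proposal is correct, and parts \ref{lem:reach-pol-part} and \ref{lem:ldf-lfp-geq-gfp-part} coincide with the paper's argument (the paper in fact never proves this lemma directly: it defers to the max-minPPS generalization, Lemma \ref{lem:max-minpps-cases-g-less-1}, whose first two parts are exactly your Proposition \ref{lem:prob1-ptime} $+$ Lemma \ref{lem:inbetween-ldf} step and your $P_\tau(g^*)\geq P(g^*)=g^*$ $+$ Lemma \ref{lem:ldf-uniquefp} step, specialized to the case with no max player). Where you genuinely diverge is part \ref{lem:min-ldf-part}. The paper proves it by \emph{policy improvement}: starting from the LDF policy of part \ref{lem:mm-reach-pol-part}, it repeatedly finds a coordinate $j$ with $P_j(q^*_{\tau_i})<(q^*_{\tau_i})_j$ (necessarily a min node), switches the choice there, shows via an SCC/minimal-coordinate argument that the switched policy is still LDF, shows $q^*_{\tau_{i+1}}\leq q^*_{\tau_i}$ with strict decrease somewhere (so policies never repeat and the process terminates among finitely many deterministic policies), and concludes that at termination $q^*_{\tau^*}$ is a fixed point of $P$, hence equal to $g^*$ by part \ref{lem:ldf-lfp-geq-gfp-part}. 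You instead give a one-shot construction: restrict to the graph $H$ of min-attaining choices at $g^*$, prove every variable reaches an ``anchor'' of condition (ii) of Lemma \ref{lem:ld-or-ldf}, and extract an LDF min-attaining policy by a shortest-distance backwards pass; the key reachability claim is established by your perturbation argument (raise $g^*$ by $\delta$ on the minimal-value trapped set $D$, check $y\leq P(y)$, and contradict maximality of the GFP). Both are sound; the trade-off is that the paper's iterative argument generalizes verbatim to max-minPPSs -- which the paper needs for Section \ref{sec:zero-detect} and is precisely why it only records the proof there -- whereas your argument is more self-contained for minPPSs, avoids the termination and strict-decrease bookkeeping, and isolates cleanly the reason a min-attaining policy cannot be forced into a linear-degenerate trap. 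One small point: you invoke the Knaster--Tarski fact that any $y$ with $y\leq P(y)$ satisfies $y\leq g^*$, which is slightly stronger than the property the paper states for the GFP (only that all \emph{fixed} points are $\leq g^*$); it is standard and easily recovered in this setting (iterate $P$ from $y$: the sequence is monotone non-decreasing, bounded by $\vone$, and by continuity of $P$ converges to a fixed point $z$ with $y\leq z\leq g^*$), but it deserves the one-line justification.
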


Rather than proving Lemma \ref{lem:minpps-cases-g-less-1} here,
we will instead prove later on a 
result for max-minPPSs
(Lemma \ref{lem:max-minpps-cases-g-less-1} of Section \ref{sec:zero-detect}),
which directly generalizes Lemma \ref{lem:minpps-cases-g-less-1}.

Note that the policy $\sigma^*$ 
described in part (3.) of Lemma \ref{lem:minpps-cases-g-less-1} 
is not necessarily optimal because even though
$g^*=q^*_{\sigma^*}$, there may be an $i$ with
$g^*_i=(q^*_{\sigma^*})_i < (g^*_{\sigma^*})_i=1$.

We will need also the following lemma from \cite{esy-icalp12}
on linearizations of max/minPPS.

\begin{lemma} \label{lem:3.5icalp12}
(\cite{esy-icalp12}, Lemma 3.5.) 
Let $x=P(x)$ be any max/minPPS.
Suppose that the matrix inverse 
$(I-B_\sigma(y))^{-1}$
  exists and is non-negative, for some policy $\sigma$, 
and some $y \in  \mathbb{R}^n$, where $B_\sigma$ is the Jacobian of $P_\sigma$. 
Then
\begin{itemize}
\item[(i)] $\mathcal{N}_\sigma(y)$ is defined, and is equal to the unique point $a \in \real^n$ 
such that  $P^y_\sigma(a) = a$.
\item[(ii)] For any vector $x \in \real^n$:\\
If $P^y_\sigma(x) \geq x$, then $x \leq \mathcal{N}_\sigma(y)$.\\
If $P^y_\sigma(x) \leq x$, then $x \geq \mathcal{N}_\sigma(y)$.
\end{itemize}
\end{lemma}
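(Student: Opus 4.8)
The plan is to reduce the statement to elementary linear algebra by first establishing that the linearization $P^y_\sigma$ is exactly the first-order Taylor expansion of the PPS $P_\sigma$ at $y$, namely the affine map
\[
P^y_\sigma(x) = P_\sigma(y) + B_\sigma(y)(x-y).
\]
Since $\sigma$ is a fixed policy, $x = P_\sigma(x)$ is a genuine PPS, so in SNF every coordinate $(P_\sigma)_i$ is either of Form L or Form Q. For Form L the coordinate is already affine and its Jacobian row is the constant coefficient vector, so the identity holds trivially and $P^y_i(x) = P_i(x)$. For Form Q, where $(P_\sigma)_i(x) = x_j x_k$, I would check that the definition $P^y_i(x) = y_j x_k + x_j y_k - y_j y_k$ coincides coordinatewise with $y_j y_k + y_k(x_j - y_j) + y_j(x_k - y_k)$, which is precisely $(P_\sigma(y) + B_\sigma(y)(x-y))_i$ upon using $(B_\sigma(y))_{i,j} = y_k$ and $(B_\sigma(y))_{i,k} = y_j$. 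This affine identity is the only place where the specific SNF structure enters, and it is the step that makes everything afterward purely algebraic.

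For part (i), I would solve the fixed-point equation $P^y_\sigma(a) = a$ directly. Substituting the affine form gives $(I - B_\sigma(y))\, a = P_\sigma(y) - B_\sigma(y)\, y$, and since $(I - B_\sigma(y))^{-1}$ exists by hypothesis, there is a \emph{unique} solution $a = (I - B_\sigma(y))^{-1}(P_\sigma(y) - B_\sigma(y)\, y)$. I would then show this coincides with $\mathcal{N}_\sigma(y)$ by factoring the Newton operator, writing $\mathcal{N}_\sigma(y) = y + (I - B_\sigma(y))^{-1}(P_\sigma(y) - y) = (I - B_\sigma(y))^{-1}[(I - B_\sigma(y))\, y + P_\sigma(y) - y] = (I - B_\sigma(y))^{-1}(P_\sigma(y) - B_\sigma(y)\, y)$, which matches $a$. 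Uniqueness is then immediate from invertibility of $(I - B_\sigma(y))$.

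For part (ii), I would exploit that $a := \mathcal{N}_\sigma(y)$ is the fixed point of the affine map just identified. Subtracting the fixed-point equation $P^y_\sigma(a) = a$ from $P^y_\sigma(x) = P_\sigma(y) + B_\sigma(y)(x-y)$ yields $P^y_\sigma(x) - a = B_\sigma(y)(x - a)$, hence $P^y_\sigma(x) - x = -(I - B_\sigma(y))(x - a)$. If $P^y_\sigma(x) \geq x$, then $(I - B_\sigma(y))(x - a) \leq \vzero$; multiplying on the left by the \emph{nonnegative} matrix $(I - B_\sigma(y))^{-1}$ preserves the inequality and gives $x - a \leq \vzero$, i.e. $x \leq \mathcal{N}_\sigma(y)$. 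The case $P^y_\sigma(x) \leq x$ is entirely symmetric and gives $x \geq \mathcal{N}_\sigma(y)$.

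I expect no genuine obstacle once the affine identity for $P^y_\sigma$ is in place; the only point requiring care is the sign bookkeeping in part (ii), together with the essential use of the \emph{nonnegativity} of $(I - B_\sigma(y))^{-1}$ (not merely its existence), which is exactly what allows a coordinatewise inequality on $(I - B_\sigma(y))(x-a)$ to be transferred to $x - a$ itself.
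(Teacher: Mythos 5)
Your proof is correct: the affine identity $P^y_\sigma(x) = P_\sigma(y) + B_\sigma(y)(x-y)$ holds coordinatewise (forms L are already affine, form M becomes linear once $\sigma$ is fixed, and the form-Q check matches the definition of the linearization), part (i) then follows by solving the resulting nonsingular linear system and factoring the Newton operator, and part (ii) uses the non-negativity of $(I-B_\sigma(y))^{-1}$ exactly where it is essential. Note that the present paper does not prove this lemma at all but imports it from \cite{esy-icalp12} (Lemma 3.5 there), and your argument is essentially the one given in that cited source, so the approach is the same.
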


We will show now that Generalised Newton's Method (GNM) is well-defined.

\begin{lemma} \label{lem:gnm-ldf} Given a minPPS, $x=P(x)$, with GFP $g^* < \vone$, and 
given $y$ with $\vzero \leq y \leq g^*$,  there exists a 
deterministic LDF policy $\sigma$ with 
$P^y(\mathcal{N}_\sigma(y))=\mathcal{N}_\sigma(y)$,
the GNM operator $I(x)$ is defined at $y$, 
and for this policy $\sigma$, $I(y)=\mathcal{N}_\sigma(y)$. \end{lemma}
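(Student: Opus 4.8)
The plan is to realize the GNM operator $I(y)$ as the Newton iterate $\mathcal{N}_{\hat\sigma}(y)$ of a single, carefully chosen deterministic LDF policy $\hat\sigma$, namely the LDF policy that minimizes the scalar $\sum_i \mathcal{N}_\sigma(y)_i$ over all deterministic LDF policies $\sigma$. Recall that $I(y)$ is by definition the optimum of the linear program ``maximize $\sum_i a_i$ subject to $P^y(a)\ge a$'', so I must show this LP has a unique optimum and that it coincides with $\mathcal{N}_{\hat\sigma}(y)$.

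First I would record two preliminary facts. (i) For every deterministic LDF policy $\sigma$ the iterate $\mathcal{N}_\sigma(y)$ is well-defined: by part (2) of Lemma \ref{lem:minpps-cases-g-less-1} we have $g^*\le q^*_\sigma$, and since $g^*<\vone$ this gives $y\le g^*<\frac{1}{2}(\vone+q^*_\sigma)$ coordinatewise, so Lemma \ref{lem:ldf-Ndefined} applies to the LDF-PPS $x=P_\sigma(x)$, $(I-B_\sigma(y))^{-1}$ exists and is nonnegative, and by Lemma \ref{lem:3.5icalp12}(i), $\mathcal{N}_\sigma(y)$ is the unique fixed point of $P^y_\sigma$. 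Conversely, a non-LDF policy has a linear-degenerate bottom SCC whose irreducible stochastic block makes $1$ an eigenvalue of $B_\sigma(y)$, so $\mathcal{N}_\sigma(y)$ is undefined; hence ``LDF'' and ``$\mathcal{N}_\sigma(y)$ defined'' coincide. Since LDF policies exist (part (1) or (3) of Lemma \ref{lem:minpps-cases-g-less-1}) and there are finitely many, $\hat\sigma$ is well-defined. (ii) Every LP-feasible point is dominated by every $\mathcal{N}_\sigma(y)$: if $P^y(a')\ge a'$ then for each coordinate $P^y_\sigma(a')_i\ge P^y(a')_i\ge a'_i$ (the policy's chosen term is at least the minimum), so $P^y_\sigma(a')\ge a'$, and Lemma \ref{lem:3.5icalp12}(ii) gives $a'\le \mathcal{N}_\sigma(y)$.

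The crux is to show $a:=\mathcal{N}_{\hat\sigma}(y)$ is itself LP-feasible, i.e. $P^y(a)\ge a$; since $P^y(a)\le P^y_{\hat\sigma}(a)=a$ always holds, this forces the fixed-point equation $P^y(a)=a$ demanded by the lemma. I would argue by contradiction via policy improvement: if some min-variable $i$ has an alternative choice $k$ with $a_k<a_i$, let $\sigma'$ be $\hat\sigma$ redirected to $k$ at $i$, so that $P^y_{\sigma'}(a)\le a$ with strict inequality at $i$. The main obstacle is that $\sigma'$ need not a priori be LDF, and if it is not, $\mathcal{N}_{\sigma'}(y)$ is undefined and the argument stalls. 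Here is the resolution: any linear-degenerate bottom SCC $S$ of $\sigma'$ must contain $i$ (otherwise its internal equations and out-edges are unchanged from $\hat\sigma$, making $S$ an LD bottom SCC of the LDF policy $\hat\sigma$, a contradiction); but on such an $S$ all equations are linear with stochastic irreducible block $B_S$, and $P^y_{\sigma'}(a)_S=B_S a_S\le a_S$ forces, by Perron--Frobenius (multiplying by the positive left eigenvector of $B_S$), the equality $B_S a_S=a_S$ with $a_S$ constant on $S$; in particular $a_i=a_k$, contradicting $a_k<a_i$. Hence $\sigma'$ is LDF, so by (i) $\mathcal{N}_{\sigma'}(y)$ is defined, and Lemma \ref{lem:3.5icalp12}(ii) together with $P^y_{\sigma'}(a)\le a$ gives $\mathcal{N}_{\sigma'}(y)\le a$; since $a$ is not a fixed point of $P^y_{\sigma'}$ while $\mathcal{N}_{\sigma'}(y)$ is, the inequality is strict in some coordinate, so $\sum_i \mathcal{N}_{\sigma'}(y)_i<\sum_i a_i$, contradicting the minimality of $\hat\sigma$.

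Finally I would assemble the conclusion. Feasibility gives $P^y(a)=a$, establishing the first assertion with $\sigma=\hat\sigma$. By fact (ii), $a=\mathcal{N}_{\hat\sigma}(y)$ dominates every feasible point coordinatewise, so $a$ is the unique maximizer of $\sum_i a_i$ over the feasible region; thus $I(y)$ is well-defined and equals $\mathcal{N}_{\hat\sigma}(y)$, proving all three parts of the lemma. The only genuinely delicate point is the LDF-preservation step above; everything else is a routine combination of the monotonicity facts in Lemmas \ref{lem:3.5icalp12}, \ref{lem:ldf-Ndefined}, and \ref{lem:minpps-cases-g-less-1}.
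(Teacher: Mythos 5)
Your proposal is correct and is essentially the paper's own proof: the same single-switch policy-improvement step on min-variables, the same supporting lemmas (Lemmas \ref{lem:minpps-cases-g-less-1}, \ref{lem:ldf-Ndefined}, \ref{lem:3.5icalp12}), the same LDF-preservation claim for the switched policy, and the same LP-domination argument establishing uniqueness of $I(y)$. The only differences are organizational: you cast the improvement as an extremal principle (choose the deterministic LDF policy minimizing $\sum_i \mathcal{N}_\sigma(y)_i$ and refute a single improvement step), which replaces the paper's explicit iterate-until-termination argument, and you prove LDF-preservation via the positive left Perron eigenvector of the stochastic block $B_S$ rather than the paper's minimum-coordinate propagation around the strongly connected component.
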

\begin{proof} We first show that there is an LDF policy $\sigma$ with 
$P^y(\mathcal{N}_\sigma(y)) = \mathcal{N}_\sigma(y)$. 
We will follow a proof structure similar to Lemma 3.14 from \cite{esy-icalp12}.

As there, we will be using policy improvement to show existence of a policy with desired properties
(but not as an algorithm to compute such a policy). 
Lemma \ref{lem:minpps-cases-g-less-1} (\ref{lem:reach-pol-part}.) 
gives the existence of a deterministic LDF policy given our assumption that $g^* <\vone$. So we start with such an LDF policy $\sigma_1$.  
Initially $i=1$, and we increment $i$ after each
policy improvement step.

In the general step $i$ we have a deterministic LDF policy $\sigma_i$.
By Lemma \ref{lem:minpps-cases-g-less-1} (2.), $g^* \leq q^*_{\sigma_i}$.
Since $y \leq g^* <\vone$, we have 
$y < \frac{1}{2}(\vone + g^*) \leq \frac{1}{2}(\vone+q^*_{\sigma_i})$.
Thus, we can apply Lemma \ref{lem:ldf-Ndefined} to the LDF PPS $x=P_{\sigma_i}(x)$
to conclude that $(I-B_{\sigma_i}(y))^{-1}$ exists and thus 
$\mathcal{N}_{\sigma_i}(y)$ is well-defined.
Let $z=\mathcal{N}_{\sigma_i}(y)$. By Lemma \ref{lem:3.5icalp12}, 
$P^y_{\sigma_i}(z) = z$. So $P^y(z) \leq z$. If
$P^y(z)=z$, then stop as we have a policy $\sigma$ with
$P^y(\mathcal{N}_{\sigma}(y)) = \mathcal{N}_{\sigma}(y)$. Otherwise,
there is a $j$ with $(P^y(z))_j < z_j$. $P_j(x)$ has form {\tt M} since
$(P^y(z))_j < (P^y_{\sigma_i}(z))_j$. 
Thus $P_j(x) = \min \{ x_k, x_{\sigma_i(j)} \}$ for some variable $x_k$,
 and $z_k < z_{\sigma_i(j)}$. 
Define $\sigma_{i+1}$ to be
$$\sigma_{i+1}(l) = \begin{cases} 
\sigma_i(l) \> \textrm{if  } l \not= j \\
k \> \textrm{if  } l = j \textrm{.}\end{cases}$$
We will first show that $\sigma_{i+1}$ is LDF, which
implies (as we argued for $\sigma_i$) that $\mathcal{N}_{\sigma_{i+1}}(y)$ is well-defined, and then we will show that $\mathcal{N}_{\sigma_{i+1}}(y) \leq z$ and $\mathcal{N}_{\sigma_{i+1}}(y) \not= z$. 

\begin{claim}
$\sigma_{i+1}$ is LDF.
\end{claim}
\begin{proof}
Suppose for a contradiction that $\sigma_{i+1}$ is not LDF. Then there
is a bottom SCC $S$ of $x=P_{\sigma_{i+1}}(x)$,  with
$(P_{\sigma_{i+1}})_S(x_S) \equiv B_Sx_S$ where $B_S$ is a stochastic
irreducible
matrix. %
$S$ must include $j$ and $k$ since
otherwise $\sigma_i$ would not be LDF. 
Note that since $S$ is a linear degenerate bottom SCC, 
for coordinates $j \in S$ we have $P^y_{\sigma_{i+1}}(x) = P_{\sigma_{i+1}}(x)$.
Now we have $(P_{\sigma_{i+1}}(z))_j = (P^y_{\sigma_{i+1}}(z))_j <
z_j$, but for every other coordinate $l \in S$ such that $l \not= j$,
$(P_{\sigma_{i+1}}(z))_l = (P^y_{\sigma_{i+1}}(z))_l = (P^y_{\sigma_i}(z))_l=z_l$.  

Thus $(P_{\sigma_{i+1}}(z))_S = (B_S z_S) \leq z_S$,  but  with 
$(P_{\sigma_{i+1}}(z))_j = (B_S z_S)_j < z_j$.
However, if we let $j'' \in S$ be a coordinate of $z_S$ with minimum value,
we see that $(P_{\sigma_{i+1}}(z))_{j''}$ is just a convex combination
of the other coordinates of $z_S$.  Thus the
coordinates of $z_S$ that appear in $(P_{\sigma_{i+1}}(z))_{j''}$
must all also have the minimum value, and thus
they are equal to $z_{j''}$.  Repeating this argument,
since $S$ is strongly connected, this implies that
all coordinates of $z_S$ are equal to $z_{j''}$.
But this contradicts the strict inequality 
$(P_{\sigma_{i+1}}(z))_j < z_j$ in the $j$ coordinate.
Thus, $\sigma_{i+1}$ must be LDF.
\end{proof}
Therefore, $\mathcal{N}_{\sigma_{i+1}}(y)$ is well-defined.

We know $(P^y_{\sigma_{i+1}}(z))_j < z_j$, but for every coordinate $l \not=
j$, $(P^y_{\sigma_{i+1}}(z))_l = z_l$. So we have
$P^y_{\sigma_{i+1}}(z) \leq z$. Lemma \ref{lem:3.5icalp12} (ii)
yields that $\mathcal{N}_{\sigma_{i+1}}(y) \leq z$. But
$\mathcal{N}_{\sigma_{i+1}}(y) \not= z$ because $P^y_{\sigma_{i+1}}(z)
\not= z$ whereas by Lemma \ref{lem:3.5icalp12} (i) from \cite{esy-icalp12}, we have
$P^y_{\sigma_{i+1}}(\mathcal{N}_{\sigma_{i+1}}(y)) =
\mathcal{N}_{\sigma_{i+1}}(y)$.

Thus the algorithm gives us a sequence of deterministic LDF policies $\sigma_1$,
$\sigma_2$, $\ldots$, with $\mathcal{N}_{\sigma_1}(y) \geq
\mathcal{N}_{\sigma_2}(y) \geq \mathcal{N}_{\sigma_3}(y) \geq \ldots$, 
where each step must decrease at least one coordinate of
$\mathcal{N}_{\sigma_i}(y)$. It follows that $\sigma_i \not= \sigma_j$
unless $i=j$. There are only finitely many deterministic policies. So the sequence
must be finite and the algorithm terminates. But it only terminates
when we reach a (deterministic) LDF policy $\sigma_i$ with
$P^y(\mathcal{N}_{\sigma_i}(y)) = \mathcal{N}_{\sigma_i}(y)$.

\smallskip
Recall that $I(x)$ is defined to be the unique optimal solution to the following 
LP:\footnote{As we explained in Section 2,
the constraints $P^y(a) \geq a$ can be written as linear inequalities.}
\begin{equation*} %
 \mbox{\em Maximize:} \  \ \sum_i a_i \ ; \quad \quad \quad
 \mbox{\em Subject to:} \quad   P^y(a) \geq a
\end{equation*}
We want to establish that $I(y)$ is well defined, i.e. that the LP has
a unique optimal solution, and that solution is $\mathcal{N}_{\sigma_i}(y)$.
First, $\mathcal{N}_{\sigma_i}(y)$ is a feasible solution to the LP
since $P^y(\mathcal{N}_{\sigma_i}(y)) = \mathcal{N}_{\sigma_i}(y)$. 
Furthermore, if $a$ is any feasible solution, i.e., if $P^y(a) \geq a$, 
then $P^y_{\sigma_i}(a) \geq P^y(a) \geq a$,
 so by Lemma \ref{lem:3.5icalp12} (ii), $a \leq \mathcal{N}_{\sigma_i}(y)$.
So $\mathcal{N}_{\sigma_i}(y)$ has the maximum value in every coordinate
among all feasible solutions.
Thus, it is the unique optimal solution to the LP, and  $
I(y) = \mathcal{N}_{\sigma_i}(y)$.   \end{proof}

Now we can show a halving result 
for GFPs of minPPS, similar to the following lemma that
was shown in \cite{esy-icalp12} for LFPs of PPS:

\begin{lemma} \label{lem:3.17icalp12}
(\cite{esy-icalp12}, Lemma 3.17)
If $x=P(x)$ is a PPS and we are given $a,b \in \mathbb{R}^n$ with $\vzero \leq a \leq b \leq P(b) \leq \vone$, and if 
the following conditions hold:
\begin{equation*}
\lambda>0 \quad  \mbox{and} \quad   b-a \leq \lambda (\textbf{1}-b) 
\quad \mbox{and}
\quad (I - B(a))^{-1} \ \mbox{exists and is non-negative,}
\end{equation*}
then $b - \mathcal{N}(a) \leq \frac{\lambda}{2} (\textbf{1}-b)$. 
\end{lemma}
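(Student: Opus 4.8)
The plan is to reduce the whole statement to a single monotonicity application of the Newton operator. Set $c := b - \frac{\lambda}{2}(\vone - b)$; the desired conclusion $b - \mathcal{N}(a) \le \frac{\lambda}{2}(\vone - b)$ is precisely $\mathcal{N}(a) \ge c$. Since by hypothesis $(I - B(a))^{-1}$ exists and is non-negative, Lemma \ref{lem:3.5icalp12}(ii) (applied to the PPS, i.e.\ with the trivial policy) reduces this to verifying the single inequality $P^a(c) \ge c$, where $P^a$ is the linearization of $P$ at $a$. I would work throughout in SNF form (justified by Proposition \ref{prop:snf-form}), so that $P^a(x) = P(a) + B(a)(x-a)$ is genuinely affine and each entry of $B(x)$ is affine in $x$.

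First I would expand $P^a(c) - c$. Writing $c - a = (b-a) - \frac{\lambda}{2}(\vone - b)$ and using affineness of $P^a$,
\[ P^a(c) - c = \big(P^a(b) - b\big) + \tfrac{\lambda}{2}(I - B(a))(\vone - b). \]
To lower-bound $P^a(b) - b$ I would use $P(b) \ge b$ together with Lemma \ref{lem:3.3ESY12}, which gives $P(b) - P(a) = B(\tfrac{a+b}{2})(b-a)$, hence $P(b) - P^a(b) = \big(B(\tfrac{a+b}{2}) - B(a)\big)(b-a)$. This is the one place where the quadratic (SNF) structure is essential: because each entry of $B$ is affine, $B(\tfrac{a+b}{2}) - B(a) = \frac{1}{2}\big(B(b) - B(a)\big)$, which yields $P^a(b) - b \ge -\frac{1}{2}(B(b) - B(a))(b-a)$. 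Substituting this into the expansion, the claim $P^a(c) \ge c$ reduces to
\[ \lambda\,(I - B(a))(\vone - b) \ \ge\ (B(b) - B(a))(b-a). \]

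For this final inequality I would bound the right-hand side using the hypothesis $b - a \le \lambda(\vone - b)$ and the fact that $B(b) - B(a) \ge \vzero$ (monotonicity of $B$, as $a \le b$), obtaining $(B(b)-B(a))(b-a) \le \lambda\,(B(b)-B(a))(\vone - b)$. Dividing by $\lambda > 0$, the claim then reduces to $(I - B(b))(\vone - b) \ge \vzero$, i.e.\ $B(b)(\vone - b) \le \vone - b$. This last fact follows from Lemma \ref{lem:3.3ESY12} applied to $\vone$ and $b$: since $P(\vone) - P(b) = B(\tfrac{\vone + b}{2})(\vone - b)$, and $P(\vone) \le \vone$ (the system is probabilistic) while $P(b) \ge b$, the left side is at most $\vone - b$; then monotonicity of $B$ with $b \le \tfrac{\vone+b}{2}$ gives $B(b)(\vone - b) \le B(\tfrac{\vone+b}{2})(\vone - b) \le \vone - b$. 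The hard part — the step I would be most careful about — is the factor of $\frac{1}{2}$: a crude bound replacing $B(\tfrac{a+b}{2}) - B(a)$ by $B(b) - B(a)$ loses this factor and breaks the argument, so the affine structure of $B$ must be used \emph{exactly}, not merely through monotonicity.
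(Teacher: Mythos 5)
Your proof is correct, and it is essentially the same argument as the one in \cite{esy-icalp12}: the core ingredients are identical, namely the factor $\frac{1}{2}$ coming from the bilinear (SNF) structure, i.e.\ $B(\tfrac{a+b}{2})-B(a)=\tfrac{1}{2}(B(b)-B(a))$, the bound $(B(b)-B(a))(b-a)\leq \lambda(B(b)-B(a))(\vone-b)$, and the inequality $B(b)(\vone-b)\leq \vone-b$ obtained from Lemma~\ref{lem:3.3ESY12} with $P(\vone)\leq\vone$ and $P(b)\geq b$. The only (cosmetic) difference is in the packaging: the original proof derives $(I-B(a))(b-\mathcal{N}(a))\leq \tfrac{1}{2}(B(b)-B(a))(b-a)$ and multiplies through by the non-negative inverse $(I-B(a))^{-1}$, whereas you verify the pre-fixed-point inequality $P^a(c)\geq c$ at the target point $c=b-\tfrac{\lambda}{2}(\vone-b)$ and let Lemma~\ref{lem:3.5icalp12}(ii) do that multiplication for you.
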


We show an analogous lemma for GFP of minPPS.

\begin{lemma} \label{lem:minPPS-gfp=halving} Let $x=P(x)$ be a 
minPPS with GFP $g^* < \vone$. For any $\vzero \leq y \leq g^*$ and $\lambda > 0$, 
we have $I(y) \leq g^*$, and if:
$$ g^* - y \leq \lambda (\vone-g^*) $$
then
$$g^* - I(y) \leq \frac{\lambda}{2}(\vone-g^*)$$
\end{lemma}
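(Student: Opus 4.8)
The plan is to prove both parts by reducing to results already available for a single PPS, but using \emph{two different} auxiliary policies: the witness policy $\sigma$ supplied by Lemma~\ref{lem:gnm-ldf} (for which $I(y)=\mathcal{N}_\sigma(y)$), and the policy $\sigma^*$ from Lemma~\ref{lem:minpps-cases-g-less-1}(\ref{lem:min-ldf-part}) whose LFP satisfies $q^*_{\sigma^*}=g^*$. Two elementary monotonicity facts will be used repeatedly: since $P$ is a \emph{min}PPS, fixing any min-policy $\tau$ can only raise values, so $P_\tau(x)\geq P(x)$ and, because linearization and policy-fixing commute, also $P^y_\tau(x)\geq P^y(x)$ for all $x$. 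I would first establish the upper bound $I(y)\leq g^*$ via $\sigma^*$, and then the halving inequality via $\sigma$ together with the existing halving Lemma~\ref{lem:3.17icalp12}.

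For the bound $I(y)\leq g^*$, the key step is to show $P^y_{\sigma^*}(g^*)\leq g^*$. Since $g^*=q^*_{\sigma^*}$ is a fixed point of $P_{\sigma^*}$, this holds with equality at every {\tt L}-coordinate and at every {\tt M}-coordinate (the branch chosen by $\sigma^*$ realizes the value $g^*_i$, and linearization does not touch {\tt M} or {\tt L} forms); at a {\tt Q}-coordinate $P_{\sigma^*,i}(x)=x_jx_k$ one computes $P^y_{\sigma^*,i}(g^*)=g^*_jg^*_k-(g^*_j-y_j)(g^*_k-y_k)\leq g^*_i$, using $\vzero\leq y\leq g^*$. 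Because $\sigma^*$ is LDF with $q^*_{\sigma^*}=g^*<\vone$ and $y\leq g^*<\tfrac{1}{2}(\vone+g^*)$, Lemma~\ref{lem:ldf-Ndefined} guarantees $(I-B_{\sigma^*}(y))^{-1}$ exists and is non-negative, so Lemma~\ref{lem:3.5icalp12}(ii) applied to $P^y_{\sigma^*}(g^*)\leq g^*$ gives $\mathcal{N}_{\sigma^*}(y)\leq g^*$. Finally, $I(y)$ is feasible for its defining LP, i.e.\ $P^y(I(y))\geq I(y)$; combined with $P^y_{\sigma^*}\geq P^y$ this yields $P^y_{\sigma^*}(I(y))\geq I(y)$, and a second application of Lemma~\ref{lem:3.5icalp12}(ii) gives $I(y)\leq \mathcal{N}_{\sigma^*}(y)\leq g^*$.

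For the halving inequality I would instead work with the witness policy $\sigma$, for which $I(y)=\mathcal{N}_\sigma(y)$ and $(I-B_\sigma(y))^{-1}$ exists and is non-negative by Lemma~\ref{lem:gnm-ldf}. The idea is to apply the single-PPS halving Lemma~\ref{lem:3.17icalp12} to the PPS $x=P_\sigma(x)$ with $a:=y$ and $b:=g^*$. The hypotheses hold: $\vzero\leq y\leq g^*$; moreover $g^*\leq P_\sigma(g^*)$ since $P_\sigma(g^*)\geq P(g^*)=g^*$, so $y\leq g^*\leq P_\sigma(g^*)\leq \vone$; the inverse condition is exactly what Lemma~\ref{lem:gnm-ldf} provides; and the assumption $g^*-y\leq\lambda(\vone-g^*)$ is precisely the required $b-a\leq\lambda(\vone-b)$. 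The conclusion of Lemma~\ref{lem:3.17icalp12} then reads $g^*-\mathcal{N}_\sigma(y)\leq\tfrac{\lambda}{2}(\vone-g^*)$, which is $g^*-I(y)\leq\tfrac{\lambda}{2}(\vone-g^*)$.

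I expect the main obstacle to be the upper-bound part, and specifically the temptation to run the whole argument through the single witness policy $\sigma$. That fails: at an {\tt M}-coordinate the branch selected by $\sigma$ (which is chosen relative to $y$, not $g^*$) need not minimize at $g^*$, so in general $P^y_\sigma(g^*)\not\leq g^*$ and one cannot conclude $\mathcal{N}_\sigma(y)\leq g^*$ directly. The resolution is exactly the two-policy split above: $\sigma^*$ makes $g^*$ a genuine fixed point and is used only to certify the ceiling $I(y)\leq g^*$, while $\sigma$ (which actually computes $I(y)$) is reserved for the halving step, where $b=g^*$ is used merely as a sub-fixed-point ($g^*\leq P_\sigma(g^*)$) rather than as a fixed point of $P_\sigma$.
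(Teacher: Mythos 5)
Your proposal is correct and follows essentially the same route as the paper's proof: the halving inequality is obtained by applying Lemma~\ref{lem:3.17icalp12} to $x=P_\sigma(x)$ with $a:=y$, $b:=g^*$ (using $P_\sigma(g^*)\geq P(g^*)=g^*$), and the ceiling $I(y)\leq g^*$ is certified through the separate LDF policy with LFP equal to $g^*$ from Lemma~\ref{lem:minpps-cases-g-less-1}(\ref{lem:min-ldf-part}), via LP feasibility and Lemma~\ref{lem:3.5icalp12}(ii). The only (harmless) deviation is that where the paper cites Lemma~3.4 of \cite{ESY12} to get $\mathcal{N}_{\sigma^*}(y)\leq q^*_{\sigma^*}=g^*$, you rederive that step inline by checking $P^y_{\sigma^*}(g^*)\leq g^*$ coordinate-by-coordinate and invoking Lemma~\ref{lem:3.5icalp12}(ii) again, which is a correct self-contained substitute.
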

\begin{proof}
By Lemma \ref{lem:gnm-ldf}, there is a deterministic LDF policy $\sigma$ with
$I(y)=\mathcal{N}_\sigma(y)$.  
We  apply Lemma \ref{lem:3.17icalp12} to the PPS $x=P_\sigma(x)$, 
with its variable $a$ replaced by our $y$, and with its variable $b$ replaced by
$g^*$.  Observe that $P_\sigma(g^*) \geq P(g^*)=g^*$ and that
$(I-B_\sigma(y))^{-1}$ exists and is non-negative.  Thus the
conditions of Lemma \ref{lem:3.17icalp12} hold and we can
conclude that $g^* - \mathcal{N}_\sigma(y) \leq \frac{\lambda}{2}
(\vone-g^*)$. 

All that remains is to show that $I(y) =
\mathcal{N}_\sigma(y) \leq g^*$. 
By Lemma \ref{lem:minpps-cases-g-less-1}(\ref{lem:min-ldf-part}.), there is
an LDF policy $\tau$ with $g^*=q^*_\tau$.
By Lemma \ref{lem:ldf-Ndefined} applied to the PPS $x=P_{\tau}(x)$,
the matrix $(I - B_\tau(y))^{-1}$ exists and is non-negative, and
$\mathcal{N}_\tau(y)$ is well-defined.
Any solution $a$ to the LP defining $I(y)$ has $P^x_{\tau}(a) \geq P^y(a) \geq a$,
so $a \leq \mathcal{N}_\tau(y)$ by Lemma \ref{lem:3.5icalp12}.  So
$I(y) \leq \mathcal{N}_\tau(y)$.
But we know from Lemma 3.4 of \cite{ESY12}, that for any PPS with LFP
$q^* < \vone$, if $y \leq q^*$, and $\mathcal{N}(y)$ is defined, then
$\mathcal{N}(y) \leq q^*$. 
Applying this lemma to the PPS $x=P_{\tau}(x)$, and since $q^*_\tau = g^* <\vone$
and $y \leq g^*$, we conclude that $\mathcal{N}_\tau(y) \leq q^*_\tau$.
Therefore, $I(y) \leq \mathcal{N}_\tau(y) \leq q^*_\tau = g^*$.
   \end{proof}

\begin{proof}[Proof of Theorem  \ref{thm:minPPS-approx-gfp}]
The theorem 
now follows by directly applying exactly the same inductive
argument as given in \cite{esy-icalp12} for the proof
of Theorem 3.21 there for the LFP.
Specifically, we start GNM at $x^{(0)} := {\mathbf 0}$,  and we let $x^{(k)}$ denote
the $k$'th iterate of GNM (applied on the minPPS, $x=P(x)$,  which has $g^* < \vone$), 
with rounding parameter $h := j + 2 + 4|P|$.
In all iterations we have $\vzero \leq x^{(k)} \leq g^*$.
Let $(\textbf{1} - g^*)_{\min} = \min_j (\textbf{1}-g^*)_j$.
As in \cite{esy-icalp12},  we claim, by induction on $k$, that for all $k \geq 0$: 

$$g^*-x^{(k)} \leq (2^{-k} + \sum_{i = 0}^{k-1} 2^{-(h+i)}) \frac{1}{(\textbf{1} - g^*)_{\min}} (\textbf{1} - g^*)$$

For the base case, $k=0$, we have
$$g^* - x^{(0)} = g^* \leq \textbf{1}   \leq \frac{1}{(\textbf{1} - g^*)_\text{min}} (\textbf{1} - g^*). $$

For the induction step, let us write for simplicity the claimed inequality
as $g^*-x^{(k)} \leq \lambda_{k} (\vone-g^*)$.
The induction hypothesis $g^*-x^{(k-1)} \leq \lambda_{k-1} (\vone-g^*)$ implies
by Lemma \ref{lem:minPPS-gfp=halving} that
$g^* - I(x^{(k-1)}) \leq \frac{\lambda_{k-1}}{2} (\vone-g^*)$.
The $k$th iterate $x^{(k)}$ satisfies $x_i^{(k)} \geq I(x^{(k-1)})_i -2^{-h}$
in every coordinate $i$. Therefore,
$g^*-x^{(k)} \leq \frac{\lambda_{k-1}}{2}(\vone-g^*)+2^{-h} \vone$
$\leq (\frac{\lambda_{k-1}}{2} +\frac{2^{-h}}{(\textbf{1} - g^*)_\text{min}})(\vone-g^*)$
$ = \lambda_{k} (\vone-g^*)$.

This shows the claimed inequality.
Since $\sum_{i = 0}^{k-1} 2^{-(h+i)} \leq 2^{-h+1}$, the inequality
implies that $g^*-x^{(k)} \leq (2^{-k} + 2^{-h+1}) \frac{\vone -q^*}{(\vone-q^*)_{\min}}$ for all $k$.

Let $\sigma^*$ be the (deterministic) LDF policy of Lemma \ref{lem:minpps-cases-g-less-1} (3.) with $q^*_{\sigma} =g^*$.
It was shown in \cite{ESY12} (Lemma 3.12), that if the LFP of a PPS $x=P(x)$ is $<1$
then the difference from 1 is at least $2^{-4|P|}$ in every coordinate.
Applying this lemma to the PPS $x=P_{\sigma^*}(x)$ and noting that $|P_{\sigma^*}| \leq |P|$,
we have that $\frac{\|(\vone-g^*)\|_\infty}{(\vone-g^*)_{\min}} \leq  \frac{1}{2^{-4|P|}} = 2^{4|P|}$. Therefore, $\|g^*-x^{(k)}\|_\infty \leq (2^{-k} + 2^{-h+1})2^{4|P|}$.
If we then let $k =h =j+4|P|+2$, 
we get that  $\| g^* - x^{(h)} \|_\infty \leq 2^{-j}$. \end{proof}

\section{Computing $\epsilon$-optimal 
policies for the GFP of  minPPSs  in P-time.}

\label{sec:eps-opt-minpps}

In this section we show how to construct an $\epsilon$-optimal
randomized policy for the GFP of a minPPS, $x=P(x)$,
in time polynomial in the input encoding size $|P|$ and $\log(1/\epsilon)$; 
note that there may not exist any
deterministic $\epsilon$-optimal policy (recall Example 3.2).  We also
consider BMDPs with the minimum non-reachability (i.e., maximum
reachability) objective and show how to construct a deterministic
non-static $\epsilon$-optimal strategy, again in time polynomial
in $|P|$ and $\log(1/\epsilon)$.

Given a minPPS, $x=P(x)$, with $n$ variables, we first preprocess it to identify and 
remove all variables with value 1 in the GFP; 
the policy can be set arbitrarily for all 
these nodes of type {\tt M} that have value 1.
So assume henceforth that $g^* <\vone$.
We first show how to find a deterministic LDF policy $\sigma$ 
with $\|g^*-q^*_\sigma\|_\infty \leq \frac{1}{2} \epsilon$. We will then use this policy to construct an $\epsilon$-optimal (randomized) policy.
Both steps are conducted in time polynomial in $|P|$ and $\log(1/\epsilon)$.

We use the following algorithm to construct a deterministic LDF policy $\sigma$ 
with $\|g^*-q^*_\sigma\|_\infty \leq \frac{1}{2} \epsilon$.
Note that each step of the algorithm runs in time polynomial in
$|P|$ and $\log(1/\epsilon)$.
\medskip

{\em Algorithm minPPS-$\epsilon$-policy1}
\begin{enumerate}
\item Compute, using GNM, a $\vzero \leq y \leq g^*$ with $\|g^*-y\|_\infty \leq 2^{-14|P|-3} \epsilon$.
\item Let $k:= 0$, and let $\sigma_0$ be a policy that has $P_{\sigma_0}(y)=P(y)$, i.e.,  $\sigma_0$ chooses for each type {\tt M} variable $x_i$
a variable $x_j$ of $P_i(x)$ that has the minimum value in the vector $y$.
\item \label{step-k} Compute $F_{\sigma_k}$, the set of variables that, in the dependency graph of $x=P_{\sigma_k}(x)$,
either are or can reach a variable $x_i$  
which either has form {\tt Q} or else $P_i(\vone) < 1$ or $P_i(\vzero) > 0$. 
Let $D_{\sigma_k}$ be the complement of $F_{\sigma_k}$ .
\item If $D_{\sigma_k} \neq \emptyset$, find a variable\footnote{
We will show that such a variable $x_i$ always exists whenever we reach this step.}  $x_i$ of type {\tt M} in $D_{\sigma_k}$ that has a choice $x_j$ in $F_{\sigma_k}$,
which isn't its current choice, such that $|y_i - y_j | \leq 2^{-14|P|-2} \epsilon$. Let $\sigma_{k+1}$ be the policy which chooses 
$x_j$ at $x_i$, and otherwise agrees with $\sigma_k$.  Let $k := k+1$, and return to step \ref{step-k}.
\item Else, i.e., if $D_{\sigma_k}$ is empty, output $\sigma_k$ and terminate.
\end{enumerate}

We will show that the final policy $\sigma$ computed by this algorithm
has the desirable property.
To start, we will extend the following lemma from \cite{esy-icalp12} to GFPs of minPPS.
\begin{lemma}[Lemma 4.4 from \cite{esy-icalp12}] \label{lipschitz} 
If $x=P(x)$ is a max/minPPS, and if $\vzero \leq y \leq q^*$, 
then
$\|P(y) - y\|_\infty \leq 2 \|q^* -y\|_\infty$. \end{lemma}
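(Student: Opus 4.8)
The plan is to reduce the claim to a coordinatewise estimate $|P(y)_i - y_i| \leq 2\delta$, where $\delta := \|q^* - y\|_\infty$, and to exploit the fact that under the paper's global SNF convention every $P_i$ has form \texttt{L}, \texttt{Q}, or \texttt{M}, so each right-hand side has total degree at most $2$. Since $\vzero \leq y \leq q^*$, we have $\vzero \leq q^* - y \leq \delta \vone$ coordinatewise. I would prove the two directions of the bound separately. For the upper direction, monotonicity of $P$ together with $q^* = P(q^*)$ gives $P(y)_i \leq P(q^*)_i = q^*_i$, hence $P(y)_i - y_i \leq q^*_i - y_i \leq \delta \leq 2\delta$.

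For the lower direction, note $y_i \leq q^*_i = P(q^*)_i$, so $y_i - P(y)_i \leq P(q^*)_i - P(y)_i$, and it suffices to show $P(q^*)_i - P(y)_i \leq 2\delta$. This I would verify by a short case analysis on the form of $P_i$. If $P_i$ has \emph{form \texttt{L}}, then $P_i(q^*) - P_i(y) = \sum_j a_{i,j}(q^*_j - y_j) \leq \delta \sum_j a_{i,j} \leq \delta$, using $\sum_j a_{i,j} \leq 1$. If $P_i$ has \emph{form \texttt{M}} (either $\max$ or $\min$), then since $q^*_\ell \leq y_\ell + \delta$ for both relevant indices $\ell$, and both $\max$ and $\min$ are $1$-Lipschitz in the sup-norm, we get $P_i(q^*) - P_i(y) \leq \delta$. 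Finally, if $P_i$ has \emph{form \texttt{Q}}, say $P_i(x) = x_j x_k$, then writing $q^*_j q^*_k - y_j y_k = q^*_j(q^*_k - y_k) + y_k(q^*_j - y_j) \leq \delta + \delta = 2\delta$, using $q^*_j \leq 1$ and $y_k \leq 1$ (both hold since $y \leq q^* \leq \vone$). Thus $P(q^*)_i - P(y)_i \leq 2\delta$ in every case, and combining with the upper bound yields $|P(y)_i - y_i| \leq 2\delta$ for all $i$, i.e. $\|P(y) - y\|_\infty \leq 2\|q^* - y\|_\infty$.

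There is no genuine obstacle here; the one point requiring care is that the constant $2$ is forced precisely by the quadratic (form \texttt{Q}) equations, which is exactly where the degree-$\leq 2$ guarantee of SNF is used — for an un-normalised system the same telescoping estimate would carry the maximal total degree in place of $2$. I would also remark that the GFP extension the authors announce goes through verbatim: replacing $q^*$ by $g^*$ and using $P(g^*) = g^*$ together with $\vzero \leq y \leq g^*$, the identical three-case estimate gives $\|P(y) - y\|_\infty \leq 2\|g^* - y\|_\infty$.
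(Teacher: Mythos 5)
Your proof is correct. Note first that the paper itself never proves Lemma \ref{lipschitz}: it imports it wholesale from \cite{esy-icalp12} (Lemma 4.4 there), so the only in-paper argument to compare against is the proof of its GFP analogue, Lemma \ref{lipschitz-GFP}. Your coordinatewise two-sided estimate is the natural self-contained argument: the upper bound $P_i(y)-y_i\leq q^*_i-y_i$ by monotonicity and the fixed-point property, and the lower bound via $y_i-P_i(y)\leq P_i(q^*)-P_i(y)$ followed by the three-case analysis on forms {\tt L}, {\tt M}, {\tt Q}, with the factor $2$ arising exactly from form {\tt Q}. All three cases check out, and the hypotheses $\vzero\leq y\leq q^*\leq\vone$ are used exactly where needed.

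The more interesting point is your closing remark, and it is worth spelling out how it differs from what the paper does. Your argument only uses that $q^*$ is \emph{some} fixed point of $P$ in $[0,1]^n$ lying above $y$; nothing about leastness (or greatestness) enters. Consequently it proves Lemma \ref{lipschitz-GFP} directly, in fact in greater generality: for any max-minPPS and any fixed point $q=P(q)\in[0,1]^n$ with $\vzero\leq y\leq q$, one gets $\|P(y)-y\|_\infty\leq 2\|q-y\|_\infty$, with no hypothesis $g^*<\vone$ and no restriction to minPPSs. The paper's route to Lemma \ref{lipschitz-GFP} is heavier: it invokes the existence of a deterministic LDF policy $\sigma^*$ with $q^*_{\sigma^*}=g^*$ (Lemma \ref{lem:minpps-cases-g-less-1}(\ref{lem:min-ldf-part}.), a nontrivial structural fact whose proof is deferred to Section \ref{sec:zero-detect}), applies the cited LFP lemma to the PPS $x=P_{\sigma^*}(x)$, and then patches up the form-{\tt M} coordinates by a separate case analysis, since fixing $\sigma^*$ changes those equations. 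What the paper's detour buys is consistency with its policy-based machinery (the same policy $\sigma^*$ is reused throughout Sections \ref{sec:minPPS-gfp-approx} and \ref{sec:eps-opt-minpps}); what your argument buys is independence from that machinery and a cleaner statement. Either way, your proof is a valid replacement for both Lemma \ref{lipschitz} and Lemma \ref{lipschitz-GFP}.
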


\begin{lemma} \label{lipschitz-GFP} 
If $x=P(x)$ is a minPPS, and if $\vzero \leq y \leq g^* < \vone$, 
then
$\|P(y) - y\|_\infty \leq 2 \|g^* -y\|_\infty$. \end{lemma}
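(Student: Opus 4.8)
The plan is to prove the bound coordinate by coordinate, using only that $g^*$ is a fixed point ($g^*=P(g^*)$), that $P$ is monotone, that $y \leq g^*$, and that all coordinates lie in $[0,1]$; the hypothesis $g^* < \vone$ plays no essential role in the estimate itself. Set $\delta := \|g^*-y\|_\infty$, so that $\vzero \leq g^*-y \leq \delta\vone$ componentwise.

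First I would dispatch the easy direction. By monotonicity of $P$ together with $y \leq g^*$, we have $P(y) \leq P(g^*) = g^*$, and hence for every $i$, $P_i(y) - y_i \leq g^*_i - y_i \leq \delta$. The remaining and more delicate work is to bound $y_i - P_i(y)$ from above. Since $y_i \leq g^*_i$, it suffices to bound $g^*_i - P_i(y) = P_i(g^*) - P_i(y)$, and here I would split on the SNF form of the $i$-th equation.

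If $P_i$ has \emph{form L}, $P_i(x) = a_{i,0}+\sum_j a_{i,j}x_j$, then $P_i(g^*)-P_i(y) = \sum_j a_{i,j}(g^*_j - y_j) \leq (\sum_j a_{i,j})\delta \leq \delta$, using $a_{i,j}\geq 0$ and $\sum_j a_{i,j}\leq 1$. If $P_i$ has \emph{form M}, say $P_i(x)=\min(x_j,x_k)$ with (w.l.o.g.) $P_i(y)=y_j$, then $P_i(g^*)=\min(g^*_j,g^*_k)\leq g^*_j$, so $P_i(g^*)-P_i(y) \leq g^*_j - y_j \leq \delta$. Finally, if $P_i$ has \emph{form Q}, $P_i(x)=x_jx_k$, then $P_i(g^*)-P_i(y) = g^*_jg^*_k - y_jy_k = g^*_j(g^*_k-y_k)+y_k(g^*_j-y_j) \leq \delta + \delta = 2\delta$, where I use $g^*_j \leq 1$ and $y_k \leq 1$. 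Thus in every case $g^*_i - P_i(y) \leq 2\delta$, whence $y_i - P_i(y) \leq g^*_i - P_i(y) \leq 2\delta$.

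Combining the two directions gives $|P_i(y)-y_i| \leq 2\delta$ for every $i$, i.e.\ $\|P(y)-y\|_\infty \leq 2\|g^*-y\|_\infty$, as required. The only place where a factor of $2$ (rather than $1$) is genuinely needed is the form-Q case, which is therefore the crux of the estimate; this mirrors exactly the factor-$2$ loss in the LFP version, Lemma \ref{lipschitz}. I note in passing that one cannot simply quote Lemma \ref{lipschitz} applied to the PPS $x=P_{\sigma^*}(x)$ supplied by Lemma \ref{lem:minpps-cases-g-less-1}(\ref{lem:min-ldf-part}.) (for which $q^*_{\sigma^*}=g^*$): since $P(y)\leq P_{\sigma^*}(y)$, that route delivers only the easy direction and yields no lower bound on $P(y)$, so the direct form-by-form argument above is the cleanest way to obtain both directions at once.
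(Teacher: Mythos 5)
Your proof is correct, and it takes a genuinely different route from the paper's. The paper does not argue coordinate-by-coordinate from scratch: it invokes Lemma \ref{lem:minpps-cases-g-less-1}(\ref{lem:min-ldf-part}.) to obtain an LDF policy $\sigma^*$ with $q^*_{\sigma^*}=g^*$, applies the LFP lemma (Lemma \ref{lipschitz}, imported from \cite{esy-icalp12}) to the PPS $x=P_{\sigma^*}(x)$ — which settles \emph{both} directions for every coordinate not of form {\tt M}, since for those coordinates $P_i$ and $(P_{\sigma^*})_i$ are literally the same function — and then handles the form-{\tt M} coordinates by a short direct argument (obtaining there the better constant $1$ rather than $2$). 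Your argument instead uses only $g^*=P(g^*)$, monotonicity, $y\leq g^*$, and the $[0,1]$ bounds, split over the three SNF forms; the factor $2$ enters only in form {\tt Q}, exactly as in the paper. What your approach buys: it is self-contained (no appeal to the nontrivial policy-improvement result behind Lemma \ref{lem:minpps-cases-g-less-1}, nor to the external Lemma \ref{lipschitz}), and it is more general — as you note, $g^*<\vone$ is never used, and in fact the same computation works for any fixed point $g\geq y$ and for max-minPPSs, not just the GFP of a minPPS. What the paper's approach buys is economy within its own ecosystem: the heavy lemmas are already needed elsewhere, so the GFP statement falls out of the LFP statement by a policy substitution. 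One caveat about your closing remark: it is overstated. You claim the ``quote Lemma \ref{lipschitz} for $P_{\sigma^*}$'' route yields no lower bound on $P(y)$, but that failure occurs \emph{only} at form-{\tt M} coordinates, where fixing the policy changes the equation; everywhere else $P_i(y)=(P_{\sigma^*})_i(y)$ identically and both directions transfer. Indeed that route, patched at the {\tt M} coordinates, is precisely the paper's proof — so it is a perfectly viable alternative, not a dead end.
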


\begin{proof}
Let $\sigma^*$ be the (deterministic) LDF policy of
Lemma \ref{lem:minpps-cases-g-less-1}(\ref{lem:min-ldf-part}.)
that has $q^*_{\sigma^*}=g^*$.
We apply lemma \ref{lipschitz} to the PPS $x=P_{\sigma^*}(x)$. 
This yields $\|P_{\sigma^*}(y) - y\|_\infty \leq 2 \|g^* -y\|_\infty$.

So for any $x_i$ not of form {\tt M}, we have $|P_i(y)-y_i|=|(P_{\sigma^*}(y) - y)_i| \leq 2 \|g^* -y\|_\infty$.
For $x_i$ of form {\tt M}, we have $P_i(x) \equiv \min \{x_j,x_k\}$ for some variables $x_j$, $x_k$. Suppose wlog that $y_j \leq y_k$ and thus $P_i(y) = y_j$. Then we have $P_i(y) = y_j \geq g^*_j - \|g^* -y\|_\infty \geq g^*_i - \|g^* -y\|_\infty$. Since $P(y) \leq P(g^*)=g^*$, $P_i(y) \leq g^*_i$. 
For $y_i$, we also have $g^*_i - \|g^* -y\|_\infty \leq y_i \leq g^*_i$. 
Therefore, $|P_i(y)-y_i| \leq \|g^* -y\|_\infty$.   \end{proof}

We use this lemma to bound $\|P_\sigma(y)-y\|_\infty$ for the policy $\sigma$
output by the algorithm.

\begin{lemma} \label{lem:algorithm-output}  Algorithm minPPS-$\epsilon$-policy1
always terminates in at most $n$ iterations of steps (3.)-(4.), 
and  outputs a deterministic LDF policy $\sigma$ 
with $\|P_\sigma(y)-y\|_\infty \leq 2^{-14|P|-2} \epsilon$. 
Since each iteration runs in time polynomial in $|P|$ and $\log(1/\epsilon)$,
so does the entire algorithm.\end{lemma}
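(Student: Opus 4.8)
The plan is to prove the four assertions in turn: the output is LDF, the number of iterations of (3.)--(4.) is at most $n$, the output $\sigma$ satisfies the $\|P_\sigma(y)-y\|_\infty$ bound, and Step (4.) is well-defined (the required variable always exists). I would begin with a structural observation that organizes everything: by definition $F_{\sigma_k}$ is exactly the set of variables that reach, in the dependency graph of $x=P_{\sigma_k}(x)$, a node of form {\tt Q}, or a node with $P_i(\vone)<1$, or a node with $P_i(\vzero)>0$, which is precisely condition (ii) of Lemma \ref{lem:ld-or-ldf} applied to $P_{\sigma_k}$. Hence $D_{\sigma_k}=\emptyset$ iff $\sigma_k$ is LDF, so the algorithm terminates (Step 5) exactly at an LDF policy, settling the LDF claim. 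Moreover any $x_i\in D_{\sigma_k}$ has $\sigma_k(i)\in D_{\sigma_k}$ (else $x_i$ could reach a special node and lie in $F_{\sigma_k}$), so the subsystem of $x=P_{\sigma_k}(x)$ induced by $D_{\sigma_k}$ is closed and linear degenerate: its {\tt L}-equations have no constant term with coefficients summing to $1$, and its {\tt M}-equations are $x_i=x_{\sigma_k(i)}$.

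Next I would maintain the invariant $\|P_{\sigma_k}(y)-y\|_\infty\le\delta$ with $\delta:=2^{-14|P|-2}\epsilon$. For {\tt L}/{\tt Q} variables $(P_{\sigma_k}(y))_i=P_i(y)$ is policy-independent and bounded by Lemma \ref{lipschitz-GFP}, since $\|P(y)-y\|_\infty\le 2\|g^*-y\|_\infty\le 2\cdot 2^{-14|P|-3}\epsilon=\delta$. For an {\tt M}-variable never switched, $\sigma_k(i)$ is the minimizing choice fixed in Step 2, so again $(P_{\sigma_k}(y))_i=P_i(y)$ is within $\delta$ of $y_i$; and when $i$ is switched to $x_j$ the new coordinate value $y_j$ is within $\delta$ of $y_i$ by the Step (4.) test, after which $i$ enters $F$ and is never revisited. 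This yields the claimed bound $\|P_\sigma(y)-y\|_\infty\le\delta$ for the final $\sigma$. For the iteration count I would show $F_{\sigma_k}\subsetneq F_{\sigma_{k+1}}$: the switched $x_i$ lies in $D_{\sigma_k}$, hence on no path to a special node, so redirecting its single out-edge to $x_j\in F_{\sigma_k}$ leaves every existing special path intact and lets $x_i$ reach a special node through $x_j$. Thus $|F_{\sigma_k}|$ strictly increases, bounding the iterations by $n$; since each step is graph reachability together with the single GNM computation of Step 1 (Theorem \ref{thm:minPPS-approx-gfp}), the running time is polynomial in $|P|$ and $\log(1/\epsilon)$.

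The main obstacle is the well-definedness of Step (4.). I would prove the stronger statement that whenever $D_{\sigma_k}\ne\emptyset$ there is an {\tt M}-variable $x_i\in D_{\sigma_k}$ with a choice $x_j\in F_{\sigma_k}$ satisfying $g^*_j=g^*_i$; the $y$-bound then follows, because $y_j\le g^*_j=g^*_i\le y_i+2^{-14|P|-3}\epsilon$ while $y_j\ge y_{\sigma_k(i)}\ge y_i-\delta$ (as $\sigma_k(i)$ is the minimizing choice and by the invariant), so $|y_i-y_j|\le\delta$; and $x_j\ne\sigma_k(i)$ since $\sigma_k(i)\in D_{\sigma_k}$ and $x_j\in F_{\sigma_k}$. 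To produce such an $x_i$, let $c=\min_{i\in D_{\sigma_k}}g^*_i$, which is $<1$ because $g^*<\vone$, and let $W=\{i\in D_{\sigma_k}:g^*_i=c\}$. Suppose, for contradiction, that no {\tt M}-variable of $W$ has a minimizing choice in $F_{\sigma_k}$. Using that $g^*_i=\min$ over choices at {\tt M}-nodes and that {\tt L}-nodes of $D_{\sigma_k}$ are convex combinations, $W$ is closed: every choice binding an equation of $W$ again lies in $W$, while every non-binding choice of an {\tt M}-node of $W$ has $g^*$-value strictly greater than $c$.

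To finish I would exploit greatest-fixed-point maximality. Choose $\eta>0$ smaller than every positive gap $g^*_a-c$ occurring at those finitely many non-binding choices, and set $g^\eta_i=g^*_i+\eta$ for $i\in W$ and $g^\eta_i=g^*_i$ otherwise. I claim $P(g^\eta)\ge g^\eta$: at {\tt L}-nodes of $W$ the raised convex combination gives exactly $c+\eta$; at {\tt M}-nodes of $W$ the binding (internal) choice rises to $c+\eta$ while the strictly larger external choices still do not drop below $c+\eta$, so the minimum is $c+\eta$; and at every other coordinate monotonicity gives $P_i(g^\eta)\ge P_i(g^*)=g^*_i=g^\eta_i$ since $g^\eta\ge g^*$. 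As $P$ is monotone, every post-fixed point lies below the GFP, so $g^\eta\le g^*$, contradicting $g^\eta>g^*$ on the nonempty set $W$. Hence the required {\tt M}-variable exists and Step (4.) is well-defined, completing the lemma. The delicate point I expect to write out most carefully is verifying $P(g^\eta)\ge g^\eta$ at the {\tt M}-nodes of $W$ — precisely that a sufficiently small uniform raise of $W$ does not move any binding minimum off of $W$.
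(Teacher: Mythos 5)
Your proof is correct, and for three of the four claims (LDF-ness of the output, the invariant $\|P_{\sigma_k}(y)-y\|_\infty \leq 2^{-14|P|-2}\epsilon$, and the $n$-iteration bound via strict growth of $F_{\sigma_k}$) it follows essentially the paper's own argument. Where you genuinely diverge is the crux: well-definedness of step (\ref{step-k}). The paper invokes Lemma \ref{lem:minpps-cases-g-less-1}(\ref{lem:min-ldf-part}.) --- the existence of a deterministic LDF policy $\sigma^*$ with $q^*_{\sigma^*}=g^*$, itself proved by a policy-improvement argument --- takes a path in the dependency graph of $x=P_{\sigma^*}(x)$ from a $D_{\sigma_k}$-variable to a special (form {\tt Q}, deficient, or constant-term) variable, locates the first edge crossing from $D_{\sigma_k}$ into $F_{\sigma_k}$ (necessarily at an {\tt M}-node where $\sigma^*$ disagrees with $\sigma_k$), and gets the numerical bound $|y_i-y_j|\leq 2\|g^*-y\|_\infty$ from Lemma \ref{lipschitz} applied to $P_{\sigma^*}$. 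You instead prove a stronger structural fact by a perturbation argument that uses only GFP maximality: if no {\tt M}-node of the minimum-value set $W\subseteq D_{\sigma_k}$ had a binding choice in $F_{\sigma_k}$, then $W$ would be closed (convex combinations at {\tt L}-nodes, binding choices staying in $W$, non-binding choices strictly above $c$), and adding a small $\eta>0$ to the $W$-coordinates of $g^*$ would produce a post-fixed point strictly above $g^*$, contradicting Knaster--Tarski. This buys a self-contained existence proof (no auxiliary optimal LDF policy, no policy improvement) and the stronger conclusion that the switch can always be made where $g^*_i=g^*_j$ exactly; the paper's route is shorter given machinery it needs elsewhere anyway, and note that you still rely on Lemma \ref{lipschitz-GFP} (hence, indirectly, on Lemma \ref{lem:minpps-cases-g-less-1}(\ref{lem:min-ldf-part}.)) for the invariant, so the section's overall dependencies are not actually reduced. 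Two cosmetic points: you should also take $\eta\leq 1-c$ so that the perturbed vector stays in $[0,1]^n$ (harmless, since $c<1$), and your detour through the minimizing-choice property of $\sigma_k(i)$ for the lower bound on $y_j-y_i$ is unnecessary --- once $g^*_i=g^*_j$, the triangle inequality $|y_i-y_j|\leq |y_i-g^*_i|+|g^*_j-y_j|\leq 2\cdot 2^{-14|P|-3}\epsilon$ already gives the required bound.
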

\begin{proof} We first note that if the algorithm terminates, then it outputs an LDF policy since every variable in $F_{\sigma_k}$ satisfies condition (ii) of Lemma \ref{lem:ld-or-ldf} applied to the PPS $x=P_{\sigma_k}$.
We need to show that the algorithm terminates 
in the specified number of iterations, and that the final policy satisfies
the claimed bound.

At step 1 of the algorithm, we have $\|g^* - y \|_\infty \leq
2^{-14|P|-3} \epsilon$. Thus, by Lemma \ref{lipschitz-GFP}, we have
$\|P(y) - y\| \leq 2^{-14|P|-2} \epsilon$. It follows by the choice of
$\sigma_0$ that $\|P_{\sigma_0}(y) - y\| \leq 2^{-14|P|-2} \epsilon$.
Whenever we switch $x_i$ of form {\tt M} from $x_l$ to $x_j$ at an iteration
$k$, we have $|(P_{\sigma_{k+1}}(y)-y)_i| = |y_j - y_i| \leq
2^{-14|P|-2} \epsilon$ since we required that $|y_i - y_j| \leq  2^{-14|P|-2} 
\epsilon$.
So
for all $k$, $\|P_{\sigma_k}(y)-y\|_\infty \leq 2^{-14|P|-2} \epsilon$. Thus,
if the algorithm terminates, it outputs an LDF policy $\sigma$ with
$\|P_\sigma(y)-y\|_\infty \leq 2^{-14|P|-2} \epsilon$.

Next we show that if $D_{\sigma_k}$ is non-empty 
in some iteration $k$, then it contains an $x_i$ of form {\tt M} 
which has a choice $x_j$ in $F_{\sigma_k}$ with $|y_i - y_j| \leq 2^{-14|P|-2} \epsilon$.
Consider any $x_l$ in $D_{\sigma_k}$. 
Let $\sigma^*$ be a (deterministic) LDF policy such that $g^* = q^*_{\sigma^*}$ (which exists by 
Lemma \ref{lem:minpps-cases-g-less-1}(\ref{lem:min-ldf-part}.)).
$\sigma^*$ is an LDF policy so there is a path 
in the dependency graph of $x = P_{\sigma^*}(x)$ from $x_l$ to some $x_m$ which is not of 
form {\tt M} and is either of form {\tt Q} or has $P_m(\vone) < 1$ or $P_m(\vzero) > 0$. Thus $x_m$ is in $F_{\sigma_k}$.
So there must be a variable $x_i$ on the path from $x_l \in D_{\sigma_k}$ to $x_m \in F_{\sigma_k}$, 
with $x_i \in D_{\sigma_k}$, 
which depends directly on an $x_j$ which is next in the path and such that $x_j \in F_{\sigma_k}$.
So $(P_{\sigma^*}(x))_i$ contains a term with $x_j$ and $(P_{\sigma_k}(x))_i$ does not. 
Thus $x_i$ is of form {\tt M} and $(P_{\sigma_k}(x))_i \equiv x_j$.
But applying Lemma \ref{lipschitz} to the PPS $x=P_{\sigma^*}(x)$  gave us that  $\|P_{\sigma^*}(y) - y\|_\infty \leq 2 \|g^* -y\|_\infty$.
So $|y_i - y_j| \leq 2 \|g^* -y\|_\infty \leq 2^{-14|P|-2} \epsilon$. We can thus switch $x_i$ to $x_j$ in step \ref{step-k}.

Since no variable in $F_{\sigma_k}$ depends on a variable in $D_{\sigma_k}$, we have that $F_{\sigma_{k+1}} \supseteq F_{\sigma_k} \cup \{ x_j \}$. Since there are only $n$ variables, this means that for some $k \leq n$, all are in 
$F_{\sigma_k}$ and the algorithm terminates in at most $n$ iterations
of the steps (3.) and (4.).  
\end{proof}

\noindent We now show that the policy $\sigma$ has the desired property.

\begin{lemma} 
\label{lem:policy1}
The output policy $\sigma$ of Algorithm minPPS-$\epsilon$-policy1 
satisfies $q^*_\sigma <\vone$ and  $\|g^*-q^*_\sigma\|_\infty \leq \frac{1}{2} \epsilon$. 
\end{lemma}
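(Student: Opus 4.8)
The plan is to compare the least fixed point $q^*_\sigma$ of the fixed-policy PPS $x=P_\sigma(x)$ against the approximate fixed point $y$ computed in Step~1, and then transfer the bound to $g^*$. From Lemma~\ref{lem:algorithm-output} I may assume $\sigma$ is a deterministic LDF policy with $\|P_\sigma(y)-y\|_\infty \le 2^{-14|P|-2}\epsilon$, and from Step~1 that $\vzero \le y \le g^*$ with $\|g^*-y\|_\infty \le 2^{-14|P|-3}\epsilon$. Since $\sigma$ is LDF, Lemma~\ref{lem:minpps-cases-g-less-1}(\ref{lem:ldf-lfp-geq-gfp-part}.) gives $g^* \le q^*_\sigma$, whence $\vzero \le q^*_\sigma-g^* \le q^*_\sigma-y$ coordinatewise. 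Thus it suffices to prove the single inequality $\|q^*_\sigma-y\|_\infty \le \tfrac12\epsilon$: this yields $\|g^*-q^*_\sigma\|_\infty \le \tfrac12\epsilon$ at once, and together with the gap bound below it also gives $q^*_\sigma<\vone$.

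The quantitative engine is an a posteriori error identity for $x=P_\sigma(x)$. Put $\xi := \tfrac12(y+q^*_\sigma)$. Applying Lemma~\ref{lem:3.3ESY12} to $P_\sigma$ at the points $q^*_\sigma$ and $y$ and using $P_\sigma(q^*_\sigma)=q^*_\sigma$ gives $(I-B_\sigma(\xi))(q^*_\sigma-y) = P_\sigma(y)-y$. Since $y<\vone$, Lemma~\ref{lem:ldf-Ndefined} applied to the LDF-PPS $x=P_\sigma(x)$ shows $(I-B_\sigma(\xi))^{-1}$ exists and is non-negative, so
\[
q^*_\sigma-y = (I-B_\sigma(\xi))^{-1}\,(P_\sigma(y)-y)
\quad\text{and}\quad
\|q^*_\sigma-y\|_\infty \le \big\|(I-B_\sigma(\xi))^{-1}\big\|_\infty\cdot 2^{-14|P|-2}\epsilon .
\]
So the whole lemma reduces to bounding the condition number $\big\|(I-B_\sigma(\xi))^{-1}\big\|_\infty$ by $2^{O(|P|)}$, which in turn requires $q^*_\sigma$ to be bounded away from $\vone$.

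I would first establish $q^*_\sigma<\vone$. Let $\sigma^*$ be the LDF policy of Lemma~\ref{lem:minpps-cases-g-less-1}(\ref{lem:min-ldf-part}.) with $q^*_{\sigma^*}=g^*$; the gap bound of \cite{ESY12} (Lemma~3.12) applied to $x=P_{\sigma^*}(x)$ (of size $\le|P|$) gives $g^* \le \vone-2^{-4|P|}\vone$. Suppose, for contradiction, that $U:=\{i:(q^*_\sigma)_i=1\}\neq\emptyset$. Inspecting the SNF equations shows $U$ is closed under the successor relation of $x=P_\sigma(x)$ (an $L$-equation equal to $1$ forces all its positive-coefficient successors to be $1$; a $Q$-equation forces both; an $M$-equation forces the chosen one), so $P_\sigma$ restricted to $U$ is again an LDF-PPS whose least fixed point is $\vone$ on $U$. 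On the other hand $g^*|_U$ lies a distance at least $2^{-4|P|}$ below this all-ones fixed point, while it is an approximate fixed point with tiny residual: using Lemma~\ref{lem:3.3ESY12}, $\|B_\sigma\|_\infty\le 2$, and the Step-1 bounds,
\[
\|P_\sigma(g^*)-g^*\|_\infty \le 3\|g^*-y\|_\infty + \|P_\sigma(y)-y\|_\infty \le 2^{-14|P|}\epsilon .
\]
The crux is to rule this out: the near-greedy construction of Step~4 (each switched choice satisfies $|y_i-y_j|\le 2^{-14|P|-2}\epsilon$), together with the fact that preprocessing has made $g^*<\vone$, must force $\sigma$ not to trap any variable inside a sub-component on which $P_\sigma$ is critical but optimal min-play escapes; quantitatively, being $2^{-4|P|}$ below the all-ones fixed point of such a component is incompatible with a residual as small as $2^{-14|P|}\epsilon$ once $\epsilon\le 1$. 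This step is the main obstacle, as the interplay between the gap $2^{-4|P|}$, the near-tie threshold $2^{-14|P|-2}\epsilon$, and the residual (which near a critical fixed point can be second order in the distance) must be made precise. Having ruled out $U\neq\emptyset$, \cite{ESY12} (Lemma~3.12) applied to $x=P_\sigma(x)$ now yields $\vone-q^*_\sigma \ge 2^{-4|P|}\vone$.

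Finally, with $q^*_\sigma<\vone$ and $\vone-q^*_\sigma\ge 2^{-4|P|}\vone$ in hand, I would bound the condition number exactly as in the convergence analysis of Theorem~\ref{thm:minPPS-approx-gfp}: the ratio $\|\vone-q^*_\sigma\|_\infty/(\vone-q^*_\sigma)_{\min}\le 2^{4|P|}$ controls the decay of the non-negative operator $B_\sigma(\xi)$ (note $\xi \le \tfrac12(\vone+q^*_\sigma)$, so $B_\sigma(\xi)(\vone-q^*_\sigma)\le \vone-q^*_\sigma$), giving $\big\|(I-B_\sigma(\xi))^{-1}\big\|_\infty \le 2^{O(|P|)}$ with a constant well within the budget of $14|P|$. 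Substituting into the displayed inequality yields $\|q^*_\sigma-y\|_\infty \le 2^{O(|P|)}\cdot 2^{-14|P|-2}\epsilon \le \tfrac12\epsilon$, and hence $\|g^*-q^*_\sigma\|_\infty\le\tfrac12\epsilon$, which, as noted in the first paragraph, also gives $q^*_\sigma<\vone$.
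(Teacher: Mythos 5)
Your proposal follows the same skeleton as the paper's proof (Step 1: $q^*_\sigma < \vone$; Step 2: the Newton-style identity $q^*_\sigma - y = (I-B_\sigma(\tfrac{1}{2}(y+q^*_\sigma)))^{-1}(P_\sigma(y)-y)$ plus a condition-number bound), but the crux of Step 1 — which you yourself flag as "the main obstacle" and leave unresolved — is a genuine gap, and it is exactly the point where the paper has to work. The missing idea is this: instead of working with the whole set $U=\{i : (q^*_\sigma)_i=1\}$, pass to a \emph{bottom strongly connected component} $S$ of $x=P_\sigma(x)$ with $q^*_S=\vone$ (such an SCC exists by results in \cite{rmc}), so that $x_S=P_S(x_S)$ is a self-contained, strongly connected PPS with LFP $\vone$ and irreducible Jacobian. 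One can then invoke Theorem 4.6(ii) of \cite{esy-icalp12}, which is a bound tailored to precisely this critical case: for a strongly connected PPS with LFP $\vone$ and any $\vzero \leq z < \vone$, the inverse $(I-B(z))^{-1}$ exists and $\|(I-B(z))^{-1}\|_\infty \leq 2^{4|P|}/(\vone - z)_{\min}$. Applying this at $z = \tfrac{1}{2}(y_S+\vone)$, and using $(\vone-y_S)_{\min} \geq (\vone-g^*)_{\min} \geq 2^{-4|P|}$, gives $\|(I-B_S(\tfrac{1}{2}(y_S+\vone)))^{-1}\|_\infty \leq 2^{8|P|+1}$, whence by Lemma \ref{lem:3.3ESY12} one gets the residual \emph{lower} bound $\|P_S(y_S)-y_S\|_\infty \geq 2^{-4|P|}/2^{8|P|+1} = 2^{-12|P|-1}$, contradicting $\|P_\sigma(y)-y\|_\infty \leq 2^{-14|P|-2}\epsilon$ from Lemma \ref{lem:algorithm-output}. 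This is precisely how the "second-order residual" worry you raise gets defused: near the critical fixed point $\vone$ the inverse blows up only inverse-linearly in the distance $(\vone-y_S)_{\min}$, and that distance is controlled by the preprocessing guarantee $g^* < \vone$ together with the $2^{-4|P|}$ gap bound of \cite{ESY12}. Note also that your aside — that $\|g^*-q^*_\sigma\|_\infty \leq \tfrac{1}{2}\epsilon$ plus the gap bound on $g^*$ "also gives $q^*_\sigma<\vone$" — fails unless $\epsilon < 2^{-4|P|+1}$; in the paper (and in your own logical order) $q^*_\sigma<\vone$ must be established \emph{first}, since Step 2 depends on it.

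A secondary, smaller gap: in your final paragraph, the claim that $B_\sigma(\xi)(\vone - q^*_\sigma) \leq \vone - q^*_\sigma$ "gives $\|(I-B_\sigma(\xi))^{-1}\|_\infty \leq 2^{O(|P|)}$" does not follow as stated. A non-negative matrix $B$ admitting a positive vector $v$ with $Bv \leq v$ satisfies only $\rho(B)\leq 1$; this by itself guarantees neither that $I-B$ is invertible nor any norm bound on the inverse. The paper closes this by proving Lemma \ref{lem:ldf-normbounds} (and Corollary \ref{cor:what-we-actually-need}, yielding $\|(I-B_\sigma(q^*_\sigma))^{-1}\|_\infty \leq 2^{14|P|+1}$), which is a nontrivial adaptation of Theorem 4.6(i) of \cite{esy-icalp12}: the original theorem assumes $\vzero < q^* < \vone$, and replacing the hypothesis $q^*>\vzero$ by the LDF property requires the separate structural argument of Lemma \ref{lem:nowhere-near-the-end}. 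Then monotonicity ($\tfrac{1}{2}(y+q^*_\sigma) \leq q^*_\sigma$, hence $(I-B_\sigma(\tfrac{1}{2}(y+q^*_\sigma)))^{-1} \leq (I-B_\sigma(q^*_\sigma))^{-1}$ entrywise) finishes Step 2 exactly within the $2^{-14|P|-2}\epsilon$ budget. So your reduction and identities are the right ones, but both condition-number bounds you rely on require machinery you have not supplied.
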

\begin{proof}
We will show the lemma in two steps.
In Step 1, we will show that  $q^*_\sigma < \vone$.
In Step 2 we will use this to show that $\|g^*-q^*_\sigma\|_\infty \leq \frac{1}{2} \epsilon$.

\smallskip
{\bf Step 1}: $q^*_\sigma < \vone$.

This section of the proof is essentially identical to part of the proof of Theorem 4.7 in \cite{esy-icalp12}.
Suppose, for contradiction, that for some $i$, $(q^*_\sigma)_i = 1$. Then by results in \cite{rmc}, $x=P_\sigma(x)$ has a bottom strongly connected component $S$ with $q^*_S = \vone$. If $x_i$ is in $S$ then only variables in $S$ appear in $(P_\sigma)_i(x)$, so we write $x_S=P_S(x)$ for the PPS which is formed by such equations. We also have that $B_S(\vone)$ is irreducible and that the least fixed point solution of $x_S=P_S(x_S)$ is $q^*_S = \vone$.
Take $y_S$ to be the subvector of $y$ with coordinates in $S$. 

We will apply Theorem 4.6 (ii) from \cite{esy-icalp12},
which states that if a PPS $x=P(x)$ is strongly connected, has LFP $q^*=\vone$,
and a vector $y$ satisfies $\vzero \leq y < \vone=q^*$, then $(I-B(y))^{-1}$ exists, is nonnegative, and
$\| (I-B(y))^{-1} \|_\infty \leq 2^{4|P|}/(\vone-y)_{\min}$.
Applying this theorem to the PPS $x_S=P_S(x_S)$ with $\frac{1}{2}(y_S + \vone)$ in place of $y$,
gives that
$$\|(I -B_S(\frac{1}{2}(y_S + \vone)))^{-1}\|_\infty \leq  \frac{2^{4|P_S|}}{\frac{1}{2}(\vone-y_S)_{\min}}$$
But $|P_S| \leq |P|$ and $(\vone-y_S)_{\min} \geq (\vone-g^*)_{\min} \geq 2^{-4|P|}$. Thus
$$\|(I -B_S(\frac{1}{2}(y_S + \vone)))^{-1}\|_\infty \leq 2^{8|P| + 1}$$
From Lemma \ref{lem:3.3ESY12},
$P_S(y_S)-P_S(\vone)= P_S(y_S)-\vone= B_S(\frac{1}{2}(\vone + y_S))(y_S-\vone)$.
Hence, $(I - B_S(\frac{1}{2}(\vone + y_S)))(\vone-y_S)=\vone-y_S+P_S(y_S)-\vone = P_S(y_S) - y_S$,
and therefore:
 $$\vone-y_S = (I - B_S(\frac{1}{2}(\vone + y_S)))^{-1}(P_S(y_S) - y_S)$$
 Taking norms and re-arranging gives:
 $$\|P_S(y_S) - y_S)\|_\infty \geq \frac{\|\vone-y_S\|_\infty}{\|(I -B_S(\frac{1}{2}(y_S + \vone)))^{-1}\|_\infty} \geq \frac{2^{-4|P|}}{2^{8|P| + 1}} \geq 2^{-12|P|-1}$$
 However $\|P_S(y_S) - y_S)\|_\infty \leq \|P_\sigma(y) -y\|_\infty$ and  $\|P_\sigma(y)-y\|_\infty \leq 2^{-14|P|-2} \epsilon$ by Lemma \ref{lem:algorithm-output}. 
 This is a contradiction and so $q^*_\sigma < \vone$.

\medskip
{\bf Step 2}: $\|g^*-q^*_\sigma\|_\infty \leq \frac{1}{2} \epsilon$.

Now that we have $q^*_\sigma < \vone$, we can apply the following 
generalisation of Theorem 4.6 (i) of \cite{esy-icalp12}.
\begin{lemma}[cf Theorem 4.6 (i) of \cite{esy-icalp12}] \label{lem:ldf-normbounds}
If $x=P(x)$ is an LDF PPS with $q^* < \vone$ and $\vzero \leq y < \vone$, then
$B(\frac{1}{2}(y+q^*)))^{-1}$ exists, is nonnegative, and
$$\|(I-B(\frac{1}{2}(y+q^*)))^{-1}\|_\infty \leq  2^{10|P|} \text{max } \{2(\vone-y)_{\min}^{-1}, 2^{|P|}\}$$.
\end{lemma}
\begin{proof}
The {\em only} difference between Lemma \ref{lem:ldf-normbounds},
and the corresponding Theorem 4.6 (i) of \cite{esy-icalp12}, is that instead of
assuming that $\vzero < q^* < \vone$ there,  here we assume that $q^* < \vone$ {\em and} 
that $x=P(x)$ is an LDF PPS.
Furthermore, the {\em only} part of the proof of Theorem 4.6 (i) 
which employs the assumption that $q^* > \vzero$,
is Lemma C.8 of \cite{esy-icalp12}, for which we now establish the analogous Lemma 
\ref{lem:nowhere-near-the-end} below, 
under the alternative assumption that $x=P(x)$ is an LDF PPS.

\begin{lemma}\label{lem:nowhere-near-the-end}
 For any LDF-PPS, $x=P(x)$, with LFP $q^* < \vone$, for
any variable $x_i$ either
\begin{itemize}
\item[(I)]  the equation $x_i = P_i(x)$ is of form Q, or else $P_i(\vone) < 1$, or

\item[(II)] $x_i$  depends (directly or indirectly) on a variable $x_j$, 
such that $x_j=P_j(x)$ is of form Q,
or else $P_j(\vone) < 1$.
\end{itemize}\end{lemma}
\begin{proof} Consider the set $S$ of $x_i$ which do not satisfy either (I) or (II);
i.e., $S$ is the set of variables that cannot reach in the dependency graph any
node $x_j$ that has type {\tt Q} or is deficient ($P_j(\vone) <1$).
Suppose for a contradiction that $S$ is non-empty. No element $x_i$ in $S$ can depend on an element outside of $S$ since otherwise by transitivity of dependence it would satisfy (II).
Consider the LDF-PPS $x_S=P_S(x_S)$. Since this has no variables of form Q, $P_S(x_S)$ is affine i.e. we have $P_S(x_S) \equiv B_S(\vzero)x_S + P_S(\vzero)$. So for any fixed point $q_S$ of $x_S=P_S(x_S)$, we have $q_S=B_S(\vzero)q_S+P_S(\vzero)$. Since $x=P(x)$ is LDF, Lemma \ref{lem:ldf-Ndefined} yields that $(I-B_S(\vzero))^{-1}$ exists and is non-negative.
So we get $q_S = (I-B_S(\vzero))^{-1} P_S(\vzero)$, i.e. the linear system is non-singular,
it has a unique solution, so   
$x =P_S(x_S)$ has a unique fixed point. But because (I) does not hold for any variable in $S$, we have $P_S(\vone)=\vone$. So the unique fixed-point is $q^*_S=\vone$. This contradicts the assumption that $q^* < \vone$ and so $S$ is empty.
  \end{proof}
The rest of the proof of Lemma \ref{lem:ldf-normbounds} 
is word-for-word identical to the rest of the proof of Theorem 4.6 (i) from \cite{esy-icalp12}
(using Lemma \ref{lem:nowhere-near-the-end} instead of Lemma C.8 there),
so we will not repeat it here.\footnote{This is the part that starts after the proof of Lemma C.8 on page 37 of \cite{esy-icalp12} and finishes with the desired norm bound
inequality at the top of page 39 that completes the proof of 
part (i) of Theorem 4.6 there.}  
\end{proof}

\begin{corollary} \label{cor:what-we-actually-need} If $x=P(x)$ is an LDF PPS with $\vzero \leq q^* < \vone$, then
$$\|(I-B(q^*))^{-1}\|_\infty \leq 2^{14|P|+1}$$
\end{corollary}
\begin{proof}
We substitute $y:=q^*$ in Lemma \ref{lem:ldf-normbounds} along with the bound $(\vone-q^*)_{\min} \geq 2^{-4|P|}$ from Theorem 3.12 of \cite{ESY12}.
\end{proof}

We can now complete Step 2 of the proof of Lemma \ref{lem:policy1}.
By Lemma \ref{lem:3.3ESY12}, $B_\sigma(\frac{1}{2}(q^*_\sigma+y))(q^*_\sigma-y)=q^*_\sigma-P_\sigma(y)$. Rearranging this gives $q^*_\sigma-y=(I-B_\sigma(\frac{1}{2}(q^*_\sigma+y)))^{-1}(P_\sigma(y)-y)$. Taking norms, using the fact that $y \leq g^* \leq q^*_\sigma$
(since $\sigma$ is LDF), and applying Corollary \ref{cor:what-we-actually-need} 
on the PPS $x=P_\sigma(x)$ and
Lemma \ref{lem:algorithm-output}, we have:
\begin{eqnarray*} \|q^*_\sigma-y \|_\infty & \leq & \|(I-B_\sigma(\frac{1}{2}(q^*_\sigma+y)))^{-1}\|_\infty \|P_\sigma(y)-y \|_\infty \\
& \leq & \|(I-B_\sigma(q^*_\sigma))^{-1}\|_\infty \|P_\sigma(y)-y \|_\infty \\
& \leq & 2^{14|P|+1} 2^{-14|P|-2} \epsilon \\
& \leq & \frac{1}{2} \epsilon \end{eqnarray*}

\noindent By Lemma \ref{lem:minpps-cases-g-less-1}(\ref{lem:ldf-lfp-geq-gfp-part}.), we have $g^* \leq q^*_\sigma$. We have $y \leq g^* \leq q^*_\sigma$ and 
so $\|q^*_\sigma-g^*\|_\infty \leq \|q^*_\sigma-y\|_\infty \leq \frac{1}{2} \epsilon$.
\end{proof}

We define a randomized policy $\upsilon$ for the minPPS as follows.
Let $\sigma$ be the policy computed by Algorithm minPPS-$\epsilon$-policy1
and let $\tau$ be a (LDF) deterministic policy that satisfies 
$g^*_\tau < \vone$ (which can be computed in P-time by Proposition \ref{lem:prob1-ptime}).
For each type {\tt M} variable, the policy $\upsilon$ follows
with probability $2^{-28|P|-4} \epsilon$ the choice of policy $\tau$,
and with the remaining probability $1-2^{-28|P|-4} \epsilon$ the
choice of policy $\sigma$.

\begin{theorem} \label{thm:upsilon}
The policy $\upsilon$ satisfies $\|g^*-g^*_\upsilon\|_\infty \leq \epsilon$, i.e., it is $\epsilon$-optimal. \end{theorem}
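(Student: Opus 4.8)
The plan is to establish the two inequalities $g^* \le g^*_\upsilon$ and $g^*_\upsilon \le g^* + \epsilon\vone$ separately, the first being immediate and the second being the heart of the matter. Since $\upsilon$ is a (randomized) policy for the minimizing player, at every type-$\mathtt{M}$ variable $x_i$ the polynomial $(P_\upsilon)_i(x) = (1-p)x_{\sigma(i)} + p\,x_{\tau(i)}$ (where $p = 2^{-28|P|-4}\epsilon$) is a convex combination of two of the choices available in $P_i(x)=\min(\cdot)$, hence $(P_\upsilon)_i(x) \ge P_i(x)$; on $\mathtt{L}$ and $\mathtt{Q}$ variables $P_\upsilon$ and $P$ agree, so $P_\upsilon(x) \ge P(x)$ throughout $[0,1]^n$. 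Comparing value iteration from $\vone$ for the two monotone operators then gives $g^*_\upsilon \ge g^*$, and it remains to bound $g^*_\upsilon$ from above.

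First I would show $g^*_\upsilon < \vone$, which by Lemma \ref{lem:inbetween-ldf} makes $P_\upsilon$ an LDF-PPS and, by Lemma \ref{lem:PPS:unique-fixpoint}, forces $g^*_\upsilon = q^*_\upsilon$. This is where $\tau$ is used. The set $\mathcal{Z}$ of deficient variables is the same for $P_\upsilon$, $P_\tau$ and $P$, since deficiency is a property of the unchanged $\mathtt{L}$ equations. By the characterization in the proof of Proposition \ref{lem:prob1-ptime} (a PPS is a maxPPS with all $m_i=1$, so all nodes are OR-nodes), $(g^*_\upsilon)_i < 1$ iff $x_i$ can reach $\mathcal{Z}$ in the dependency graph of $P_\upsilon$. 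Since $0 < p < 1$, at each $\mathtt{M}$ node that graph has edges to both $\sigma(i)$ and $\tau(i)$, so it contains every edge of the dependency graph of $P_\tau$; because $g^*_\tau < \vone$, every variable reaches $\mathcal{Z}$ in $P_\tau$'s graph, hence also in $P_\upsilon$'s graph. Therefore $g^*_\upsilon < \vone$ and $g^*_\upsilon = q^*_\upsilon$.

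Next I would bound $\|q^*_\upsilon - q^*_\sigma\|_\infty$ by a perturbation argument anchored at the polynomially-sized LDF-PPS $P_\sigma$ rather than at $P_\upsilon$, whose encoding contains $p$ and is too large to feed into the norm bounds. Since $P_\upsilon$ and $P_\sigma$ use the same choice $\sigma$, merely reweighted by $p$ toward $\tau$, we have $\|P_\upsilon(x) - P_\sigma(x)\|_\infty \le p$ for all $x \in [0,1]^n$. Applying Lemma \ref{lem:3.3ESY12} to $P_\sigma$ at $q^*_\upsilon$ and $q^*_\sigma$ and rearranging gives $(I - B_\sigma(w))(q^*_\upsilon - q^*_\sigma) = P_\upsilon(q^*_\upsilon) - P_\sigma(q^*_\upsilon)$, where $w = \tfrac12(q^*_\upsilon + q^*_\sigma) < \vone$. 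Since $\sigma$ is LDF and $q^*_\sigma < \vone$ (Lemmas \ref{lem:algorithm-output} and \ref{lem:policy1}), Lemma \ref{lem:ldf-Ndefined} guarantees that $(I - B_\sigma(w))^{-1}$ exists and is non-negative, so $\|q^*_\upsilon - q^*_\sigma\|_\infty \le p\,\|(I - B_\sigma(w))^{-1}\|_\infty$, with the norm controlled through Lemma \ref{lem:ldf-normbounds} applied to $P_\sigma$ in terms of $|P_\sigma| \le |P|$ and $(\vone - q^*_\upsilon)_{\min}$.

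The main obstacle is precisely that this norm bound is evaluated at $w$, which depends on the unknown $q^*_\upsilon$ and blows up as $q^*_\upsilon \to \vone$: bounding $\|q^*_\upsilon - q^*_\sigma\|_\infty$ needs a lower bound on $(\vone - q^*_\upsilon)_{\min}$, which is itself what we are after. I would resolve this by a bootstrap over the mixing probability $t \in [0,p]$, using continuity of $q^*_{\upsilon_t}$ in $t$. Starting from $(\vone - q^*_\sigma)_{\min} \ge (\vone - g^*)_{\min} - \tfrac12\epsilon \ge 2^{-4|P|-2}$ (using $\|q^*_\sigma - g^*\|_\infty \le \tfrac12\epsilon$ from Lemma \ref{lem:policy1}, the bound $(\vone-g^*)_{\min}\ge 2^{-4|P|}$ from Lemma 3.12 of \cite{ESY12} applied to the LDF policy $\sigma^*$ of Lemma \ref{lem:minpps-cases-g-less-1}(3.), and w.l.o.g.\ $\epsilon \le 2^{-4|P|}$), I maintain the invariant $(\vone - q^*_{\upsilon_t})_{\min} \ge \tfrac12(\vone - q^*_\sigma)_{\min}$. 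Under this invariant Lemma \ref{lem:ldf-normbounds} yields $\|(I - B_\sigma(\cdot))^{-1}\|_\infty \le 2^{14|P|+3}$, whence $\|q^*_{\upsilon_t} - q^*_\sigma\|_\infty \le t\,2^{14|P|+3} \le p\,2^{14|P|+3} = 2^{-14|P|-1}\epsilon$, which is strictly smaller than $\tfrac12(\vone-q^*_\sigma)_{\min}$; thus the invariant is never violated and the bootstrap closes, the exponent in $p$ having been chosen exactly so that $p\cdot 2^{14|P|+3}$ beats the norm blow-up. Taking $t=p$ gives $\|q^*_\upsilon - q^*_\sigma\|_\infty \le 2^{-14|P|-1}\epsilon$, and hence $\|g^* - g^*_\upsilon\|_\infty = \|g^* - q^*_\upsilon\|_\infty \le \|q^*_\upsilon - q^*_\sigma\|_\infty + \|q^*_\sigma - g^*\|_\infty \le 2^{-14|P|-1}\epsilon + \tfrac12\epsilon \le \epsilon$, as required.
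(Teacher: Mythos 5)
Your proposal matches the paper's proof in its outer structure: the lower bound $g^*\le g^*_\upsilon$, the dependency-graph argument that $g^*_\upsilon<\vone$ (hence $g^*_\upsilon=q^*_\upsilon$ by Lemma \ref{lem:PPS:unique-fixpoint}), and the idea of controlling $\|q^*_\upsilon-q^*_\sigma\|_\infty$ by viewing $P_\upsilon$ as a perturbation of $P_\sigma$ of size $p=2^{-28|P|-4}\epsilon$, with resolvent norms bounded via Lemma \ref{lem:ldf-normbounds}. The gap is in the step you introduce to break the circularity you correctly identified: the bootstrap over $t\in[0,p]$ rests on ``continuity of $q^*_{\upsilon_t}$ in $t$,'' which you assert but never prove, and which is not a citable fact. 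The LFP of a PPS is in general \emph{not} continuous in its coefficients: $x=x$ has LFP $0$, while $x=(1-\delta)x+\delta$ has LFP $1$ for every $\delta>0$. Lower semicontinuity of $t\mapsto q^*_{\upsilon_t}$ is free (a sup of value iterates), and on $(0,p]$ one even gets continuity, because there $q^*_{\upsilon_t}=g^*_{\upsilon_t}$ and the GFP, being an inf of value iterates, is upper semicontinuous. But the bootstrap has to get off the ground at $t=0$, and right-continuity at $0$ is exactly the open issue: $g^*_\sigma$ may have coordinates equal to $1$ (this is the whole reason $\upsilon$ mixes in $\tau$; in Example \ref{example1} the policy with $q^*_\sigma=g^*$ chooses $a^2$, and its GFP has a $1$-coordinate), and nothing you prove excludes that $q^*_{\upsilon_t}=g^*_{\upsilon_t}$ stays near $g^*_\sigma$ rather than near $q^*_\sigma$ for every $t>0$. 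Ruling out precisely this upward jump is the substance of the theorem, so assuming continuity begs the question: one can get closedness of the good parameter set from compactness plus Lemma \ref{lem:ldf-uniquefp}, but openness at $t=0$ is what the argument needs and what is missing.

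The paper breaks the circularity with a different, and essential, idea: instead of evaluating a resolvent at the unknown point $\frac12(q^*_\upsilon+q^*_\sigma)$, it builds an explicit pre-fixed point just above $q^*_\sigma$, namely $z=q^*_\sigma+\delta(I-B_\sigma(q^*_\sigma))^{-1}\vone$ with $\delta=2^{-28|P|-4}\epsilon$, and shows (Lemma \ref{lem:strictly-decreasing-point}, via Lemma \ref{lem:3.3ESY12} and Corollary \ref{cor:what-we-actually-need}) that $P_\sigma(z)\le z-\frac12\delta\vone$, i.e.\ $P_\sigma$ has slack proportional to $\delta$ at $z$. Since $\sup_{x\in[0,1]^n}\|P_\upsilon(x)-P_\sigma(x)\|_\infty$ is below that slack, $P_\upsilon(z)\le z$, and Lemma \ref{lem:ldf-uniquefp} then pins $q^*_\upsilon\le z$ outright; every norm bound is evaluated at the \emph{known} point $q^*_\sigma$, so no a priori control of $(\vone-q^*_\upsilon)_{\min}$ and no continuity in $t$ is needed. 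If you wanted to retain your homotopy formulation, you would still have to prove this perturbation lemma to justify upper semicontinuity at $t=0$, at which point the homotopy is superfluous. A smaller point: your ``w.l.o.g.\ $\epsilon\le 2^{-4|P|}$'' is delicate, since the policy $\upsilon$ itself depends on $\epsilon$; it is also unnecessary, because $q^*_\sigma<\vone$ (Lemma \ref{lem:policy1}) and Lemma 3.12 of \cite{ESY12} applied to $P_\sigma$ give $(\vone-q^*_\sigma)_{\min}\ge 2^{-4|P|}$ directly.
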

\begin{proof}
We will show that $q^*_\upsilon$ is close to $q^*_\sigma$ and $g^*$,
and that $g^*_\upsilon=q^*_\upsilon$. 
First note that $P_\upsilon(g^*) \geq g^*$: for variables $x_i$ of the minPPS 
that have type {\tt L} or {\tt Q},
$(P_\upsilon(g^*))_i =P_i(g^*) = g^*_i$, and for variables $x_i$ of type {\tt M},
e.g. $x_i = \min(x_j,x_k)$, we have $g^*_i= \min(g^*_j,g^*_k)$,
and thus $(P_\upsilon(g^*))_i \geq g^*_i$.
Since $P_\upsilon(g^*) \geq g^*$, we have $q^*_\upsilon \geq g^*$ by Lemma \ref{lem:ldf-uniquefp}. We seek a $z$ close to $g^*$ such that $g^* \leq q^*_\upsilon \leq z$.

\begin{lemma} \label{lem:strictly-decreasing-point} For an LDF-PPS $x=P(x)$ with LFP $q^* < \vone$, let $z = q^*+ \delta (I-B(q^*))^{-1} \vone$ where $0 \leq \delta \leq 2^{-28|P|-3}$. Then $P(z) \leq z - \frac{1}{2} \delta \vone$. \end{lemma}
\begin{proof}
From Lemma \ref{lem:3.3ESY12}
\begin{equation} \label{eq1} B(\frac{1}{2}(q^*+z))(z-q^*)=P(z)-q^* \end{equation}
From the definition of $z$ we have $(I-B(q^*))(z-q^*)= \delta \vone$ and so 
\begin{equation} \label{eq2} B(q^*)(z-q^*) = z-q^*-\delta \vone \end{equation}
Subtracting (\ref{eq2}) from (\ref{eq1}), we obtain 
$$(B(\frac{1}{2}(q^*+z))- B(q^*)) (z-q^*) = P(z)-z + \delta \vone$$
If $P_i(x)$ is of form {\tt L}, the $i$th row of $B(x)$ does not depend on $x$ so we have $P_i(z)-z_i + \delta=0$ as required.
 
If $P_i(x)$ is of form {\tt Q}, wlog $P_i(x)=x_jx_k$ then we have $((B(\frac{1}{2}(q^*+z))- B(q^*)) (z-q^*))_i = \frac{1}{2}(z_j-q^*_j)(z_k - q^*_k)+ \frac{1}{2}(z_k - q^*_k)(z_j-q^*_j)= (z_j - q^*_j)(z_k - q^*_k)$. 
Thus we have $P_i(z)-z_i + \delta \leq \|z-q^*\|_\infty^2$. 
But here $\|z-q^*\|_\infty^2 \leq \delta^2 \|(I-B(q^*))^{-1} \|^2 \leq \delta^2 2^{28|P|+2} \leq \frac{1}{2} \delta$.
So we have $P_i(z) \leq z_i - \frac{1}{2}\delta$. 
\end{proof}
We apply this Lemma on the PPS $x=P_\sigma(x)$ with $\delta=2^{-28|P|-4} \epsilon$.
We get that for $z=q^*_\sigma + 2^{-28|P|-4} \epsilon (I-B_\sigma(q^*_\sigma))^{-1} \vone$, $P_\sigma(z) \leq z - 2^{-28|P|-3} \epsilon$.
For any $x \in [0,1]^n$, $P_\sigma(x) \in [0,1]^n$ and $P_\tau(x) \in [0,1]^n$,
so  $\|P_\sigma(x) - P_\tau(x)\|_\infty \leq 1$.
So, by definition of $\upsilon$, 
$\|P_\sigma(x) - P_\upsilon(x)\|_\infty  = 2^{-28|P|-3} \epsilon \|P_\sigma(x) - P_\tau(x)\|_\infty \leq 2^{-28|P|-3} \epsilon$.
In particular $\|P_\sigma(z) - P_\upsilon(z)\|_\infty  \leq 2^{-28|P|-3} \epsilon$.
And so we have $P_\upsilon(z) \leq P_\sigma(z) + 2^{-28|P|-3} \epsilon \leq z$. So by Lemma \ref{lem:ldf-uniquefp}, $q^*_\upsilon \leq z$.
Now we have $g^* \leq q^*_\upsilon \leq z$, and so
using Lemma \ref{lem:policy1} and Corollary \ref{cor:what-we-actually-need}, we get:
\begin{eqnarray*} \|q^*_\upsilon - g^*\|_\infty 	& \leq & \|z- g^*\|_\infty \\
												& \leq & \|q^*_\sigma-g^*\|_\infty + \|z-q^*_\sigma\|_\infty \\
												& \leq & \frac{1}{2} \epsilon + 2^{-28|P|-3} \epsilon \|(I-B_\sigma(q^*_\sigma))^{-1} \|_\infty \\
												& \leq & \frac{1}{2} \epsilon + 2^{-28|P|-3} \epsilon 2^{14|P|+1} \\
												& \leq & \epsilon \end{eqnarray*}
Recall that a PPS $x=P(x)$ has $g^*_i < 1$ if and only if either $P_i(\vone)<1$ or there is a path in the dependency graph from $x_i$ to an $x_j$ with $P_j(\vone) < 1$. If there is a path from $x_i$ to $x_j$ in the dependency graph of $x=P_\tau(x)$, then the same path exists in $x=P_\upsilon(x)$. Then by the same graph analysis that gave us $g^*_\tau < \vone$, we have $g^*_\upsilon < \vone$.  And so by Lemma \ref{lem:ldf-uniquefp}, $q^*_\upsilon=g^*_\upsilon$. So we have $\|g^*_\upsilon - g^*\| \leq \epsilon$. That is, $\upsilon$ is an $\epsilon$-optimal policy.  
\end{proof}

So, in a BMDP with minimum non-reachability (i.e., maximum reachability) objective,
we can construct efficiently, in time
polynomial in the encoding size of the BMDP
and $\log(1/\epsilon)$, a randomized static $\epsilon$-optimal strategy.
The following theorem shows that we can also construct a deterministic 
non-static strategy.

\begin{theorem} For a BMDP with minPPS  $x=P(x)$, and minimum non-reachability probabilities given by
the GFP $g^* < \vone$, the following 
deterministic non-static strategy $\alpha$ is also $\epsilon$-optimal starting
with one object of any type:

\begin{quote}Use policy $\sigma$ that is the output of Algorithm minPPS-$\epsilon$-policy1, 
until the population has size at least 
$\frac{2^{4|P|+1}}{\epsilon}$ for the first time; thereafter use a deterministic static policy $\tau$ 
such that $g^*_\tau  < \vone$.
\end{quote}
\end{theorem}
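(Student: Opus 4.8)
The plan is to show that, starting from one object of any type $T_i$, the probability $h_i$ that strategy $\alpha$ never reaches the target satisfies $g^*_i \le h_i \le g^*_i + \epsilon$; the lower bound is automatic since $g^*_i$ is the optimal (minimum achievable) non-reachability value, so the whole content is the upper bound $h_i \le g^*_i + \epsilon$. Recall from Lemma \ref{lem:algorithm-output} and Lemma \ref{lem:policy1} that $\sigma$ is a deterministic LDF policy with $q^*_\sigma < \vone$ and $\|g^* - q^*_\sigma\|_\infty \le \tfrac12\epsilon$, and that $\tau$ is a deterministic static policy with $g^*_\tau < \vone$. Writing $N := 2^{4|P|+1}/\epsilon$ for the switching threshold and $T_H$ for the first time the (target-free) population reaches size $\ge N$, I would split the event ``$\alpha$ never reaches the target'' into the part where the switch never happens ($T_H = \infty$, so $\alpha$ behaves like $\sigma$ forever) and the part where the switch occurs at some population $\mu$ with $|\mu| \ge N$ (after which $\tau$ is followed and the non-reachability probability from $\mu$ is exactly $g^*_\tau(\mu)$). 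Thus $h_i = \Pr[\neg R,\, T_H = \infty] + \sum_{\mu:\,|\mu|\ge N}\Pr[\text{switch at }\mu]\cdot g^*_\tau(\mu)$, where $R$ denotes the event that $\sigma$ reaches the target.

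For the Phase-2 term I would bound each $g^*_\tau(\mu)$. Since $g^*_\tau < \vone$, Lemma \ref{lem:PPS:unique-fixpoint} gives $g^*_\tau = q^*_\tau$, and then Theorem 3.12 of \cite{ESY12} yields $(\vone - g^*_\tau)_{\min} \ge 2^{-4|P_\tau|} \ge 2^{-4|P|}$, i.e. $(g^*_\tau)_j \le 1 - 2^{-4|P|}$ for every type. By the product formula of Theorem \ref{bssg-equations} applied to the pure BP $x = P_\tau(x)$, for any $\mu$ with $|\mu| \ge N$ we have $g^*_\tau(\mu) = \prod_j ((g^*_\tau)_j)^{\mu_j} \le (1-2^{-4|P|})^{N} \le e^{-2/\epsilon} \le \tfrac12\epsilon$, using the choice of $N$ and (wlog) $\epsilon \in (0,1]$. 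Summing over the disjoint switch events, the whole Phase-2 contribution is at most $\tfrac12\epsilon$. This is precisely why $\tau$ (which reaches the target from everywhere, so that its \emph{GFP} $g^*_\tau$ is bounded away from $\vone$) must be used after the switch rather than $\sigma$.

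The crux is bounding the Phase-1 term $\Pr[\neg R,\, T_H = \infty]$ by $(q^*_\sigma)_i$. Here I would use the probabilistic reading of $q^*_\sigma$ recorded around Lemma \ref{lem:minpps-cases-g-less-1}: $(q^*_\sigma)_i = \Pr[\neg R \cap \neg G]$, where $G$ is the event of generating infinitely many objects of types that can reach the target under $\sigma$. The key structural claim is that, because $\sigma$ is LDF, the $\sigma$-process cannot survive forever with a bounded population while avoiding the target: on $\neg R$, almost surely the population either dies out or grows to infinity. This is exactly the dichotomy supplied by Lemma \ref{lem:either-extinct-or-infty}, the LDF hypothesis being what rules out a bounded, non-extinct recurrent regime (which would otherwise constitute a linear-degenerate bottom SCC). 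Granting it, a surviving $\neg R$-trajectory a.s. eventually exceeds $N$, so $T_H = \infty$ forces extinction; and extinction implies only finitely many objects are ever produced, hence $\neg G$. Therefore $\{\neg R,\, T_H = \infty\} \subseteq \{\neg R \cap \neg G\}$ up to a null set, giving $\Pr[\neg R,\, T_H=\infty] \le (q^*_\sigma)_i$.

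Combining the two bounds with $\|g^* - q^*_\sigma\|_\infty \le \tfrac12\epsilon$ yields $h_i \le (q^*_\sigma)_i + \tfrac12\epsilon \le g^*_i + \epsilon$, which is the claimed $\epsilon$-optimality. I expect the main obstacle to be the structural dichotomy of the previous paragraph: one must carefully match the threshold-crossing event $\{T_H < \infty\}$ to the ``escape to infinity'' half of Lemma \ref{lem:either-extinct-or-infty}, and reconcile the two notions of ``going to infinity'' (unbounded simultaneous population versus generating infinitely many target-reaching objects, which is how $G$, and hence $q^*_\sigma$, are defined). Establishing that the LDF property of $\sigma$ precisely forbids bounded-forever, non-extinct $\neg R$-trajectories — so that survival under $\sigma$ drives the population across the threshold $N$ — is the step that needs the most care.
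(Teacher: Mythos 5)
Your proposal is correct and follows essentially the same route as the paper's own proof: both decompose on whether the population crosses the size threshold before the target is reached, bound the never-switch part by $(q^*_\sigma)_i \le g^*_i + \tfrac{1}{2}\epsilon$ via the extinct-or-grow-to-infinity dichotomy of Lemma \ref{lem:either-extinct-or-infty} together with the probabilistic reading of $q^*_\sigma$ and Lemma \ref{lem:policy1}, and bound the post-switch part by $(1-2^{-4|P|})^{N} \le \tfrac{1}{2}\epsilon$ using independence of the descendants under $\tau$. If anything, your write-up is slightly more careful than the paper's (whose corresponding inequality chain contains typos), but the substance is identical.
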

\begin{proof}
It follows from Lemma \ref{lem:either-extinct-or-infty} that
if we start the BP with an initial population of a single object with type corresponding to $x_i$, 
$1-(q^*_\sigma)_i$ is the probability that we either reach the target or else the population tends to infinity as time tends to infinity. 
So under the 
strategy $\alpha$, with at least probability $1-(q^*_\sigma)_i$, we either reach a population of more than $\frac{2^{4|P|+1}}{\epsilon}$ or we reach the target. 

Let $p$ be the probability that we reach the population $\frac{2^{4|P|+1}}{\epsilon}$ under $\sigma$ without reaching the target. Then $1-(q^*_\sigma)_i-p$ is the probability that we reach the target while staying under $\frac{2^{4|P|+3}}{\epsilon}$ population.

We claim that the probability of reaching the target from any population of size $m \geq \frac{2^{4|P|+1}}{\epsilon}$ using $\tau$ is at least $1-\frac{1}{2}\epsilon$. For a single object of type corresponding to $x_j$, this probability is $1-(g^*_\tau)_j \geq 2^{-4|P|}$. Since we can consider descendants of each member of the population independently, the probability that any of them reach the target is at least $1-(1-2^{-4|P|})^m \geq 1 - m2^{-4|P|} \geq \frac{1}{2}\epsilon$

The probability of reaching the target using $\alpha$ is then at least $1-(q^*_\sigma)_i-p + p(1-\frac{1}{2}\epsilon) \geq (1-g^*_i - \frac{1}{2} \epsilon) +p \frac{1}{2}\epsilon \geq 1 - g^*_i - \epsilon$. So $\alpha$ is $\epsilon$-optimal.
\end{proof}

\begin{corollary} Given a BMDP with a minimum non-reachability
(i.e. maximum reachability) objective, and any $\epsilon>0$, we can 
compute a static randomized $\epsilon$-optimal strategy or a deterministic non-static $\epsilon$-optimal strategy in time polynomial
in both the encoding size of the BMDP and in $\log(1/\epsilon)$. \end{corollary}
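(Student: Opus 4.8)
The plan is to assemble the corollary directly from Theorem \ref{thm:upsilon} and the immediately preceding theorem, taking care only of the reduction from the BMDP to its associated minPPS and of the qualitative preprocessing, since all the quantitative work is already done. First I would translate the given BMDP with minimum non-reachability objective into its associated minPPS $x = P(x)$ using Theorem \ref{bssg-equations}, so that the GFP $g^*$ is exactly the vector of optimal non-reachability values, and then convert it into SNF form in P-time via Proposition \ref{prop:snf-form} (which preserves the GFP and, up to a P-time-computable bijection, the policies). Next I would run the P-time algorithm of Proposition \ref{lem:prob1-ptime} to identify and remove the variables $x_i$ with $g^*_i = 1$, substituting the value $1$ on the right-hand sides where they occur. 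For each such type the reachability value is $0$, and since fixing any policy $\sigma$ yields $P_\sigma \geq P$ and hence $(g^*_\sigma)_i \geq g^*_i = 1$, the player's choice there is irrelevant and may be set arbitrarily. After this step we are left with a residual minPPS satisfying $g^* < \vone$, which is precisely the hypothesis under which the results of this section apply.

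For the randomized static strategy, I would run Algorithm minPPS-$\epsilon$-policy1 to obtain the deterministic LDF policy $\sigma$, which by Lemma \ref{lem:algorithm-output} terminates in at most $n$ iterations, each running in time polynomial in $|P|$ and $\log(1/\epsilon)$, and which by Lemma \ref{lem:policy1} satisfies $q^*_\sigma < \vone$ and $\|g^* - q^*_\sigma\|_\infty \le \frac{1}{2}\epsilon$. Separately I would compute a deterministic policy $\tau$ with $g^*_\tau < \vone$ in P-time via Proposition \ref{lem:prob1-ptime}. Combining $\sigma$ and $\tau$ exactly as in the definition of $\upsilon$, Theorem \ref{thm:upsilon} gives $\|g^* - g^*_\upsilon\|_\infty \le \epsilon$. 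Since the GFP of a fixed policy corresponds, via Theorem \ref{bssg-equations} specialized to the branching process obtained by fixing that policy, to the non-reachability value of the associated static strategy in the BMDP, $\upsilon$ is a static randomized $\epsilon$-optimal strategy, and every step above is polynomial in $|P|$ (equivalently in the encoding size of the BMDP) and $\log(1/\epsilon)$.

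For the deterministic non-static strategy, I would reuse the same $\sigma$ and $\tau$ and invoke the immediately preceding theorem, whose strategy $\alpha$ — run $\sigma$ until the population first exceeds $2^{4|P|+1}/\epsilon$, then switch permanently to $\tau$ — is shown there to be $\epsilon$-optimal starting from one object of any type; this strategy is likewise computable in time polynomial in the encoding size and $\log(1/\epsilon)$, since it is fully specified by the P-time-computable policies $\sigma$, $\tau$ and the single threshold.

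I do not expect a serious obstacle, as the corollary is essentially a packaging of the two preceding theorems. The only point requiring care is the bookkeeping of the reduction: one must verify that deleting the $g^*_i = 1$ variables and assigning their strategies arbitrarily leaves $\epsilon$-optimality intact (which holds because $(g^*_\sigma)_i = 1 = g^*_i$ for every policy $\sigma$ on those types), and that the correspondence between minPPS policies and static BMDP strategies genuinely transports the norm bound $\|g^* - g^*_\upsilon\|_\infty \le \epsilon$ into the claimed bound on non-reachability — hence reachability — values of the BMDP.
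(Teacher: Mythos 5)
Your proposal is correct and takes essentially the same route as the paper: the corollary is presented there as an immediate packaging of Theorem~\ref{thm:upsilon} (the randomized static policy $\upsilon$ built from the output $\sigma$ of Algorithm minPPS-$\epsilon$-policy1 and a policy $\tau$ with $g^*_\tau < \vone$) together with the preceding theorem on the deterministic non-static strategy $\alpha$, which is exactly what you assemble. Your extra bookkeeping --- the Theorem~\ref{bssg-equations}/Proposition~\ref{prop:snf-form} reduction, the removal of variables with $g^*_i=1$ justified by $(g^*_\sigma)_i \geq g^*_i$ for any policy of a minPPS, and the transport of the norm bound back to BMDP non-reachability values --- is sound and merely makes explicit what the paper leaves implicit.
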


\section{P-time detection of GFP $g^*_i = 0$ for max-minPPSs
and reachability value 1 for BSSGs.}

\label{sec:zero-detect}

In this section we give a P-time algorithm for deciding whether the value of a BSSG 
reachability game is equal to $1$  (i.e., whether $g^*_i=0$ for a given max-minPPS), in which
case we show that the value is actually achieved
by a specific, memoryful but deterministic, strategy for the maximizing player,
which we can compute in P-time.  Thus 
there is no distinction between limit-sure vs. almost-sure reachability for BSSG.
Recall however that, as shown by Example \ref{example1},  
for a BSSG (or even BMDP) with reachability value equal to 1
there need not exist 
a {\em static} (even randomized) strategy that achieves almost-sure reachability.

Before presenting the algorithm, we need to extend the concept of LDF
policies to max-minPPSs and prove a basic lemma about them.  We define
a policy $\tau$ for the min player to be LDF if for all policies
$\sigma$ of the max player, $x=P_{\sigma,\tau}(x)$ is an LDF PPS.  The
following Lemma directly generalizes Lemma
\ref{lem:minpps-cases-g-less-1} to max-minPPSs, and indeed its proof
also provides the missing proof of Lemma
\ref{lem:minpps-cases-g-less-1} for minPPSs.

\begin{lemma}  
\label{lem:max-minpps-cases-g-less-1}
If a max-minPPS $x=P(x)$ has $g^* < \vone$ then:

\begin{enumerate}
\item 
\label{lem:mm-reach-pol-part}
 There is a deterministic 
LDF policy $\tau$ for the min player with $g^*_{*,\tau} <\vone$,

\item 
\label{lem:mm-ldf-lfp-geq-gfp-part} 
$g^* \leq q^*_{*,\tau'}$ for any LDF policy $\tau'$ for the min player, and

\item 
\label{lem:mm-min-ldf-part} 
There is a deterministic LDF policy $\tau^*$ for the min player
whose associated LFP, $q^*_{*,\tau^*}$, has  
$g^*=q^*_{*,\tau^*}$.
\end{enumerate}
\end{lemma}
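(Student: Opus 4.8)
The plan is to prove the three parts in order, reusing the qualitative game analysis of Proposition \ref{lem:prob1-ptime} and the optimal max-policy of Corollary \ref{cor:opt-maxPPS}. Throughout, the basic dictionary is that, by Lemma \ref{lem:ld-or-ldf}, a min policy $\tau$ is LDF exactly when, for \emph{every} max policy $\sigma$, every node of the dependency graph of $x=P_{\sigma,\tau}(x)$ can reach a node that is of form $\mathtt{Q}$, or is deficient ($P_j(\vone)<1$), or has a positive constant term: a bottom SCC avoids linear degeneracy iff it contains such a ``witness'' node.

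\textbf{Part 1.} By Proposition \ref{lem:prob1-ptime}, the hypothesis $g^* < \vone$ says precisely that, in the AND--OR reachability game on the dependency graph with target set $\mathcal{Z}$ (the deficient nodes), the minimizing player wins from \emph{every} node; let $\tau$ be its deterministic static winning strategy. Since $\tau$ forces reaching $\mathcal{Z}$ (a subset of the witness nodes above) from everywhere against any $\sigma$, every bottom SCC of every $P_{\sigma,\tau}$ contains a deficient node, so $\tau$ is LDF by the dictionary. For $g^*_{*,\tau} < \vone$ I would apply Proposition \ref{lem:prob1-ptime} to the maxPPS $x=P_{*,\tau}(x)$: fixing $\tau$ only deletes winning OR-moves of the minimizer while leaving all $\mathtt{L},\mathtt{Q}$ and $\mathtt{M_{\max}}$ moves intact, so player $1$ still wins from every node, whence $g^*_{*,\tau}<\vone$.

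\textbf{Part 2.} For any LDF min policy $\tau'$ we have $P_{*,\tau'}(x) \geq P(x)$ pointwise, so $P_{*,\tau'}(g^*) \geq P(g^*) = g^*$. Choose a deterministic max policy $\sigma$ attaining the maximum at $g^*$, so that $P_{\sigma,\tau'}(g^*) = P_{*,\tau'}(g^*) \geq g^*$. As $\tau'$ is LDF, $x=P_{\sigma,\tau'}(x)$ is an LDF-PPS, and $g^* < \vone$; hence Lemma \ref{lem:ldf-uniquefp} yields $g^* \leq q^*_{\sigma,\tau'}$. Since $P_{\sigma,\tau'} \leq P_{*,\tau'}$ pointwise, monotonicity of least fixed points gives $q^*_{\sigma,\tau'} \leq q^*_{*,\tau'}$, and therefore $g^* \leq q^*_{*,\tau'}$.

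\textbf{Part 3 (the crux).} First I would record that any min policy $\tau^*$ with $q^*_{*,\tau^*}=g^*$ must, at every min node $i$, select a choice attaining $\min_a q_{i,a}(g^*)$ -- call such $\tau^*$ a \emph{$g^*$-minimizer} -- since otherwise $P_{*,\tau^*}(g^*) \geq g^*$ with strict inequality in some coordinate, contradicting that $g^*$ is the LFP of $P_{*,\tau^*}$. Conversely, if $\tau^*$ is a $g^*$-minimizer then $P_{*,\tau^*}(g^*)=P(g^*)=g^*$, so $q^*_{*,\tau^*} \leq g^*$; together with Part 2 this forces $q^*_{*,\tau^*}=g^*$ as soon as $\tau^*$ is \emph{also} LDF. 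Thus the whole statement reduces to producing a deterministic $g^*$-minimizer policy that is simultaneously LDF. I would construct it by policy improvement starting from the LDF policy $\tau$ of Part 1, switching, at each min node whose choice is not a $g^*$-minimizer, to one that is -- exactly in the style of the improvement argument inside Lemma \ref{lem:gnm-ldf} -- and maintaining the invariant that the policy stays LDF, verified by the same convex-combination/minimum argument used there (a switch that created a linear-degenerate bottom SCC $C$ would force all coordinates of $g^*$ on $C$ to be equal, contradicting a strict inequality introduced by the switch).

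The main obstacle is precisely this LDF-preservation step: I must show that at each min node a \emph{$g^*$-minimizing} choice preserving LDF always exists. The danger is a $g^*$-minimizer selection every alternative of which traps the play in a linear-degenerate bottom SCC $C$ with $g^* \equiv m\,\vone_C$ and $m<1$. The resolution I would pursue is a value-monotonicity argument: such a $C$, viewed in isolation with its outside neighbors frozen at $g^*$, is linear degenerate and would have greatest fixed value $\vone_C$, so the only thing holding the coordinates of $C$ down to $m<1$ must be a dependence, at some node of $C$, on an outside variable of value $\geq m$ realized through a choice that is itself a $g^*$-minimizer (any non-minimizer has strictly larger value at $g^*$ and cannot realize the minimum). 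That co-minimizing edge leaves $C$ toward a $\mathtt{Q}$-node or deficient witness, and rerouting onto it keeps the policy a $g^*$-minimizer while restoring LDF. Making this dichotomy precise -- ``either the current $g^*$-minimizer graph already reaches a witness node from everywhere, or some $g^*$-minimizing edge can be rerouted out of a would-be linear-degenerate trap'' -- together with showing termination of the improvement, is where the real work lies; the remaining bookkeeping follows from the cited lemmas.
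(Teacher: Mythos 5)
Your Parts 1 and 2 are correct and essentially coincide with the paper's own proof: Part 2 is word-for-word the paper's argument, and Part 1 differs only in that you re-derive LDF-ness combinatorially from the AND--OR game, whereas the paper gets it in one line from $g^*_{\sigma,\tau}\leq g^*_{*,\tau}<\vone$ and Lemma \ref{lem:inbetween-ldf}. Your reduction at the start of Part 3 is also correct and matches what is implicit in the paper: $\tau^*$ satisfies $q^*_{*,\tau^*}=g^*$ iff $\tau^*$ is an LDF ``$g^*$-minimizer'', so it suffices to exhibit a deterministic $g^*$-minimizer that is LDF.

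The gap is exactly where you say ``the real work lies'', and it is genuine, because the tool you propose to close it with does not apply. The convex-combination/minimum argument of Lemma \ref{lem:gnm-ldf} (and of the paper's Part 3) derives its contradiction from a \emph{strict} inequality at the switched coordinate: the would-be linear-degenerate bottom SCC forces all reference values on it to be equal, contradicting strictness. When the reference vector is $g^*$, no minimizing switch can ever produce strictness: $g^*_j$ is by definition the minimum over the choices at $x_j$, so after switching to a $g^*$-minimizing choice $x_k$ one has $(P_{*,\tau'}(g^*))_j=g^*_k=g^*_j$, an equality, and on a trap with $g^*\equiv m$ everything is consistent --- no contradiction is available. (This is precisely why the paper improves with respect to $q^*_{*,\tau_i}$ rather than $g^*$: there, coordinate $j$ of the reference vector records the \emph{old} choice's value, so the switch condition $P_j(q^*_{*,\tau_i})<(q^*_{*,\tau_i})_j$ is strict.) Your fallback claim --- every linear-degenerate trap $C$ of a $g^*$-minimizer must have a co-minimizing edge leaving $C$, else one could raise $g^*$ on $C$ by a small $\delta$ and contradict maximality of the GFP --- is true and provable. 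But rerouting along such edges neither ``restores LDF'' (the edge need not head toward a witness; it can lead straight into another trap) nor terminates. Concretely, take $x_a=\min\{x_{b_1},x_{b_2}\}$, $x_{b_1}=\min\{x_a,x_z\}$, $x_{b_2}=\min\{x_a,x_z\}$, $x_z=1/2$: here $g^*\equiv 1/2$ on all four variables, every edge is $g^*$-minimizing, and trap-by-trap rerouting can oscillate forever between the traps $\{a,b_1\}$ and $\{a,b_2\}$ (reroute $a$ to $b_2$, then back to $b_1$, \ldots) unless the reroute target is chosen by some global rule you have not supplied. So both the LDF-maintenance invariant and the termination of your improvement are unproven, and the local arguments you cite cannot prove them.

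For comparison, the paper's Part 3 avoids the tie problem entirely: starting from the LDF policy of Part 1 (so $q^*_{*,\tau_1}<\vone$), it switches a min choice at $x_j$ only when $P_j(q^*_{*,\tau_i})<(q^*_{*,\tau_i})_j$. This strict inequality does double duty: it drives the convex-combination argument showing $\tau_{i+1}$ is still LDF, and it forces $q^*_{*,\tau_{i+1}}\leq q^*_{*,\tau_i}$ with strict decrease in some coordinate, so no policy repeats and the process terminates among the finitely many deterministic policies. At termination $P(q^*_{*,\tau^*})=q^*_{*,\tau^*}$, so $q^*_{*,\tau^*}$ is a fixed point of $P$ and hence $\leq g^*$, while Part 2 gives $g^*\leq q^*_{*,\tau^*}$; the fact that $\tau^*$ is a $g^*$-minimizer falls out at the end rather than being imposed during the construction. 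If you want to salvage your $g^*$-based route, you need a genuinely global argument (e.g.\ an attractor construction over the subgraph of $g^*$-minimizing edges, proving that the AND--OR attractor of the witness nodes is the whole variable set), and that is a nontrivial piece of work, not bookkeeping.
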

\begin{proof} 
\mbox{}
\begin{enumerate}
\item 
Recall the P-time
  algorithm to detect whether $g^*_i=1$
(see Proposition \ref{lem:prob1-ptime} and its proof). That algorithm 
yields a deterministic policy
  $\tau$ with $g^*_{*,\tau} < \vone$. For all max player
  policies $\sigma$, we have $g^*_{\sigma,\tau} < \vone$. Lemma
  \ref{lem:inbetween-ldf} gives that all such PPSs
  $x=P_{\sigma,\tau}(x)$ are LDF. Thus, $\tau$ is LDF.

\item To prove part (\ref{lem:mm-ldf-lfp-geq-gfp-part}.), 
let $\tau'$ be any LDF policy for the min player.
Note that  $g^* = P(g^*) \leq P_{*,\tau'}(g^*)$.
So
 there exists a $\sigma$  with $g^* \leq P_{*,\tau'}(g^*) = P_{\sigma,\tau'}(g^*)$.
Namely, $\sigma$ simply chooses, for each equation $x_i = \max \{x_j, x_k \}$,
the neighbor $x_j$ if $g^*_i = \max \{g^*_j, g^*_k\} = g^*_j$, and otherwise chooses $x_k$,
since then $g^*_i =  \max \{g^*_j, g^*_k\} = g^*_k$.
By applying Lemma \ref{lem:ldf-uniquefp} 
to the LDF-PPS $x=P_{\sigma,\tau'}(x)$, with $y := g^*$,
we get $g^*\leq q^*_{\sigma,\tau'} \leq q^*_{*,\tau'}$.

\item %
We will first show there exists a deterministic LDF policy $\tau^*$ such that $P(q^*_{*,\tau^*})=q^*_{*,\tau^*}$,
and we will then argue that $g^* =  q^*_{*,\tau^*}$.
This proof is
somewhat similar to the proof of Lemma 3.14 from \cite{esy-icalp12},
as well as 
 the proof of Lemma \ref{lem:gnm-ldf} in this paper. 
The proof uses policy improvement to demonstrate the existence
of the claimed policy (but not as an algorithm to compute it).

Part (\ref{lem:mm-reach-pol-part}.) of this Lemma yields that there is a
deterministic LDF policy $\tau$ with $g^*_{*,\tau} < \vone$.
Thus we have  $q^*_{*,\tau} \leq g^*_{*,\tau} < \vone$.  

At step 1, we start policy improvement with $\tau_1 := \tau$.  
At step $i$, we have a deterministic LDF policy $\tau_i$ with $q^*_{*,\tau_i} < \vone$. 
If $P(q^*_{*,\tau_i})=q^*_{*,\tau_i}$, stop
(because then, as we will see, policy $\tau_i$ satisfies $g^* = q^*_{*,\tau_i}$).
Otherwise, there must be an $x_j$ with $P_j(q^*_{*,\tau_i}) < (q^*_{*,\tau_i})_j$, 
because $P(q^*_{*,\tau_i}) \leq P_{*,\tau_i}(q^*_{*,\tau_i}) = q^*_{*,\tau_i}$.
Note that $x_j$ belongs to min, because otherwise we would have $P_j(q^*_{*,\tau_i}) = (P_{*,\tau_i}(q^*_{*,\tau_i}))_j$. So we must have 
$P_j(x) = \min \{ x_k , x_{\tau_i(j)} \}$ for some $x_k$.
Then set $\tau_{i+1}$ to be the policy that selects $x_k$ at $x_j$ for some $x_j$ with $P_j(q^*_{*,\tau_i}) < (q^*_{*,\tau_i})_j$, 
but is otherwise identical.
We need first to show that $\tau_{i+1}$ is LDF. 

\begin{claim}
The policy $\tau_{i+1}$ is LDF. 
\end{claim}
\begin{proof}
Suppose for a contradiction that $\tau_{i+1}$ is not LDF.
Then there  exists a policy $\sigma$ for max such that a bottom SCC 
$S$ of $x=P_{\sigma, \tau_{i+1}}(x)$ is linear degenerate.
This SCC must contain $x_j$ and $x_k$ since otherwise $S$ would also be a linear degenerate SCC of $x=P_{\sigma, \tau_{i}}(x)$ and so $\tau_i$ would also not 
be LDF.

By construction, $P_{*,\tau_{i+1}}(q^*_{*,\tau_i}) \leq q^*_{*,\tau_i}$
with strict inequality $(P_{*,\tau_{i+1}}(q^*_{*,\tau_i}))_j < (q^*_{*,\tau_i})_j$ in the coordinate $j \in S$. 

Let $j'' = \arg\min_{j' \in S}  (q^*_{*,\tau_i})_{j'}$ be any coordinate of the vector $(q^*_{*,\tau_i})_S$
which has minimum value.
We have $(P_{*,\tau_{i+1}}(q^*_{*,\tau_i}))_{j''} \leq (q^*_{*,\tau_i})_{j''}$.

We claim that any $x_{j'} \in S$ that appears in $(P_{*,\tau_{i+1}}(x))_{j''}$ must also have this minimum value, i.e. $(q^*_{*,\tau_i})_{j'}=(q^*_{*,\tau_i})_{j''}$ .
 If $(P_{*,\tau_{i+1}}(x))_{j''}$ has form {\tt L}, then  $(P_{*,\tau_{i+1}}(q^*_{*,\tau_i}))_{j''}$ is just a convex combination of coordinates 
of $(q^*_{*,\tau_i})_S$. If any of these are bigger than their minimum value then we would have $(P_{*,\tau_{i+1}}(q^*_{*,\tau_i}))_{j''} > (q^*_{*,\tau_i})_{j''}$ which is a contradiction.
If $(P_{*,\tau_{i+1}}(x))_{j''}$ belongs to min, then it is equal to $x_{j'} \in S$. Again we have $(q^*_{*,\tau_i})_{j'} \leq  (q^*_{*,\tau_i})_{j''}$ which is an equality by minimality of $j''$.
If $(P_{*,\tau_{i+1}}(x))_{j''}$ belongs to max, then we must have $(q^*_{*,\tau_i})_{j'} \leq (P_{*,\tau_{i+1}}(x))_{j''} \leq (q^*_{*,\tau_i})_{j''}$ which again is an equality by minimality of $j''$.
Lastly  $(P_{*,\tau_{i+1}}(x))_{j''}$ can not have form {\tt Q} since 
$S$ is linear degenerate in $x=P_{\sigma,\tau_{i+1}}(x)$. 
This completes the proof of the claim that such $x_{j'}$ are also minimal.

Since $S$ is strongly-connected in $x=P_{\sigma, \tau_{i+1}}(x)$, $x_j$ and $x_k$ depend (directly or indirectly) on $x_{j''}$ in $x=P_{\sigma, \tau_{i+1}}(x)$ and so in 
$x=P_{*, \tau_{i+1}}(x)$ as well. By induction, we have that $(q^*_{*,\tau_i})_{j}=(q^*_{*,\tau_i})_{j''}$ and $(q^*_{*,\tau_i})_{k} = (q^*_{*,\tau_i})_{j''}$. But now we have $(q^*_{*,\tau_i})_{j}=(q^*_{*,\tau_i})_{k}$. This contradicts $(q^*_{*,\tau_i})_{k} < (q^*_{*,\tau_i})_{j}$ which is why we switched $x_j$ to $x_k$ in $\tau_{i+1}$.  Thus $\tau_{i+1}$ is LDF.
\end{proof}

By construction of $\tau_{i+1}$, we have $P_{*,\tau_{i+1}}(q^*_{*,\tau_i}) \leq q^*_{*,\tau_i}$ with strict inequality in the $j$ coordinate.

There is a policy $\sigma$ for max that has $q^*_{*,\tau_{i+1}}= q^*_{\sigma,\tau_{i+1}} $
For such a $\sigma$, we have $P_{\sigma,\tau_{i+1}}(q^*_{*,\tau_i}) \leq q^*_{*,\tau_i}$ with strict inequality 
in the $j$ coordinate. 
By Lemma \ref{lem:ldf-uniquefp}, applied to the LDF-PPS, $x=
P_{\sigma,\tau_{i+1}}(x)$ with $y:=   q^*_{*,\tau_i}$, 
this implies $q^*_{\sigma,\tau_{i+1}} \leq q^*_{*,\tau_i}$. So $q^*_{*,\tau_{i+1}} \leq q^*_{*,\tau_i}$.

This cannot be an equality since $P_{*,\tau_{i+1}}(q^*_{*,\tau_i}) \not= q^*_{*,\tau_i}$. So the algorithm cannot 
revisit the same policy, i.e., for all $k \neq i$,  we have $\tau_k \neq \tau_i$.
Since there are only finitely many deterministic policies, the algorithm must terminate. 

So the algorithm terminates with a deterministic LDF policy $\tau^*$ with $P(q^*_{*,\tau^*})=q^*_{*,\tau^*}$.
All that remains is to show that  $g^*=q^*_{*,\tau^*}$. 
$P(q^*_{*,\tau^*})=q^*_{*,\tau^*}$, so $q^*_{*,\tau^*}$ is a fixed point of $x=P(x)$ and the GFP $g^*$ satisfies
$g^* \geq q^*_{*,\tau^*}$.
By part (\ref{lem:mm-ldf-lfp-geq-gfp-part}.) of this Lemma, $g^* \leq q^*_{*,\tau^*}$.
Therefore, $g^* = q^*_{*,\tau^*}$.
\end{enumerate}
 \end{proof}

We are now ready to give the algorithm.
First, we identify and remove 
all variables $x_i$ with $g^*_i =1$ 
(which we can do in P-time, by
Proposition \ref{lem:prob1-ptime}).
Let $X$ be the set of all variables in the remaining max-minPPS $x = P(x)$ 
in SNF form,  with GFP $g^* < \vone$.   The algorithm is described in 
Figure \ref{fig:alg-g1-max-minpps},
and Theorem \ref{thm:minpps-gfp-zero-detection} shows that it computes the set $\{x_i \in X \mid g^*_i = 0\}$.

\begin{figure}
\begin{enumerate}
\item Initialize $S := \{ \ x_i \in X \mid$ $P_i(\vzero) > 0$, i.e., $P_i(x)$ contains a constant term 
$\}$.
\item \label{and-or-1} Repeat the following until neither are applicable:
\begin{enumerate}
\item If a variable $x_i$ is of form {\tt L} or {\tt M$_{\max}$}
and $P_i(x)$ contains a variable that is already in $S$, add $x_i$ to $S$.
\item If a variable $x_i$ is of form {\tt Q} or {\tt M$_{\min}$} and both 
variables in $P_i(x)$ are already in $S$, add $x_i$ to $S$.
\end{enumerate}
\item Let $F := \{ \ x_i \in X-S \mid $   
$P_i(\vone) < 1$, or $P_i(x)$ has form ${\tt Q}$ $\}$.
\item \label{reach2} repeat the following until no more variables can be added:
\begin{enumerate}
\item If a variable $x_i \in X-S$ is of form {\tt L} or {\tt M$_{\min}$} and $P_i(x)$ contains a term whose variable is in $F$, add $x_i$ to $F$.
\item If a variable $x_i \in X-S$ is of form {\tt M$_{\max}$} and both variables in $P_i(x)$ are in $F$, add $x_i$ to $F$. 
\end{enumerate}
\item If $X=S \cup F$, terminate and output $F$.
\item \label{iterate} Otherwise set $S := X - F$ and return to step \ref{and-or-1}.
\end{enumerate}
\caption{P-time algorithm for computing $\{x_i \in X \mid g^*_i = 0 \}$ for a max-minPPS with GFP $g^* < \vone$.  \label{fig:alg-g1-max-minpps}}
\end{figure}

\begin{theorem} 
\label{thm:minpps-gfp-zero-detection}
The procedure in Figure \ref{fig:alg-g1-max-minpps}, applied to a max-minPPS $x=P(x)$ with $g^* <\vone$, 
always terminates and outputs precisely the set of variables $\{ x_i \in X \mid g^*_i=0 \}$, in time polynomial in $|P|$. 
Furthermore we can compute in P-time 
a deterministic policy $\sigma$ for the max player such that 
$(g^*_{\sigma,*})_i > 0$ for all the variables $x_i$ in $\{ x_i \in X \mid g^*_i>0 \}$.
\end{theorem}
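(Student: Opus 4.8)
The plan is to prove the three assertions separately: (i) the procedure terminates after at most $n$ passes and runs in polynomial time; (ii) it outputs exactly $Z := \{x_i \in X \mid g^*_i = 0\}$; and (iii) the recorded $\max$-choices yield a deterministic policy $\sigma$ with $(g^*_{\sigma,*})_i > 0$ on $Pos := X \setminus Z$. Throughout I work with the residual system in which every variable with $g^*_i = 1$ has already been removed (Proposition \ref{lem:prob1-ptime}), so $g^* < \vone$. The one structural fact that drives everything is that, since $F$ is always grown inside $X \setminus S$, the termination test ``$X = S \cup F$'' forces $X \setminus S = F$ at the moment of termination; hence at the end $S$ and $F$ partition $X$, and I only need the two \emph{soundness} statements $S \subseteq Pos$ and $F \subseteq Z$, after which $Pos = S$ and $Z = F$ follow by complementation (using disjointness of $Pos$ and $Z$).

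For (i): steps \ref{and-or-1} and \ref{reach2} are ordinary linear-time AND--OR saturations, so each pass is linear. Across passes $S$ only grows: step \ref{iterate} resets $S := X \setminus F$, and since $F \subseteq X \setminus S_{\mathrm{old}}$ we have $S_{\mathrm{old}} \subseteq X \setminus F$; non-termination means $X \neq S_{\mathrm{old}} \cup F$, so some variable lies in $(X \setminus S_{\mathrm{old}}) \setminus F$ and is newly placed into $S$. Thus $S$ strictly increases at every non-terminating pass, bounding the number of passes by $n$ and giving the polynomial running time.

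For (iii) together with $S \subseteq Pos$, the plan is to maintain as an invariant of the whole run a single vector $u \in [0,1]^n$ that is a \emph{post-fixed point} of $P$, i.e. $P(u) \ge u$, with $u_i > 0$ exactly on the current $S$ and $u_i = 0$ off $S$, while recording for each form-M$_{\max}$ variable the successor that justifies its membership in $S$, which defines $\sigma$. I initialise $u_i := P_i(\vzero)$ on the constant seeds. Each addition in step \ref{and-or-1} extends $u$ by the obvious lower bound on $P_i(u)$: an L- or M$_{\max}$-variable entering through a successor $x_j \in S$ gets $u_i := a_{i,j}u_j$, resp. $u_i := u_j$ (and $\sigma(i):=j$), and a Q- or M$_{\min}$-variable entering because both successors lie in $S$ gets $u_i := u_j u_k$, resp. $\min\{u_j,u_k\}$; monotonicity of $P$ guarantees that raising coordinates from $0$ preserves $P(u) \ge u$ everywhere. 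The delicate point is the bulk addition at step \ref{iterate}, where I assign the \emph{flat} value $c := \min_{x_j \in S} u_j > 0$ to every variable of $W \setminus F$ (with $W := X \setminus S$). Using that, after saturation, L-variables of $W\setminus F$ have all successors inside $W \setminus F$ and are stochastic and constant-free (so their row balances exactly), that each M$_{\max}$-variable of $W\setminus F$ dominates a $\sigma$-successor in $W\setminus F$, and that an M$_{\min}$-variable of $W\setminus F$ has both successors in $S \cup (W\setminus F)$ — whence $\min\{u_j,u_k\} \ge c$ precisely because $c \le \min_S u$ — one checks $P(u) \ge u$ coordinatewise. Since $u$ is then a post-fixed point, $g^* \ge u$, and reading the same inequalities with the recorded $\sigma$ gives $g^*_{\sigma,*} \ge u$; hence $S \subseteq Pos$ and the ``Furthermore'' claim hold simultaneously. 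This construction is engineered so that $u$ only ever grows, so the invariant survives each reset at step \ref{iterate} without rechecking.

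The hard direction is $F \subseteq Z$, and here I expect the main obstacle, since bounding the GFP from \emph{above} admits no post-fixed-point shortcut. I exploit the terminal identity $X \setminus S = F$: at termination every L- and M$_{\max}$-variable of $F$ has all, resp. both, successors in $F$; every M$_{\min}$-variable of $F$ has a successor in $F$; every Q-variable of $F$ (being in $W$) has a successor in $F$; and $F$ contains no constant term. After substituting the fixed values $g^*_S$ for the finitely many successors lying in $S$ (these occur only at Q- and M$_{\min}$-nodes and produce only leaks or caps, injecting no positive constant), the induced subsystem on $F$ is constant-free and, because by the very definition of $F$ every one of its variables forward-reaches a Q- or deficient variable, it satisfies condition (ii) of Lemma \ref{lem:ld-or-ldf}, i.e. it is LDF. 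I then argue its greatest fixed point is $\vzero$: by Lemma \ref{lem:max-minpps-cases-g-less-1}(3) there is an LDF min policy realising this GFP as the LFP of a maxPPS, and by Lemma \ref{lem:ldf-uniquefp} each induced LDF PPS has its unique sub-$\vone$ fixed point equal to its LFP, which is $\vzero$ because the constant-free induced operator fixes $\vzero$. Since $g^*$ restricted to $F$ is a fixed point of this induced system, it must equal $\vzero$, giving $F \subseteq Z$. Combining $X = S \cup F$ with $S \subseteq Pos$, $F \subseteq Z$ and disjointness of $Pos,Z$ then yields $S = Pos$ and $F = Z$, completing the proof. The subtlety I must handle most carefully is exactly this interleaved (alternating) fixpoint structure of $S$ and $F$: the $F$-soundness argument is only clean at termination, and it hinges on the terminal identity $X\setminus S = F$ that the outer loop is designed to achieve.
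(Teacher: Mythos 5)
Your termination argument and the direction $S \subseteq \{x_i \mid g^*_i > 0\}$ (together with the construction of $\sigma$) are sound, and there your route is genuinely different from the paper's: you build an explicit post-fixed-point certificate $u$ with $P(u)\geq u$ and $P_{\sigma,*}(u)\geq u$, grown monotonically and extended by the flat value $c=\min_{x_j\in S}u_j$ at each step-6 bulk addition, and then invoke Knaster--Tarski to get $u\leq g^*$ and $u \leq g^*_{\sigma,*}$. The paper instead handles the step-6 set $R$ by fixing $\sigma$, taking an LDF min policy $\tau'$ with $g^*_{\sigma,*}=q^*_{\sigma,\tau'}$, and deriving a contradiction from a putative zero set $U\subseteq R$ using Lemma \ref{lem:ld-or-ldf}; your certificate eliminates that machinery and is more elementary. (One harmless slip: after step 4 saturation, \texttt{L}- and $\mathtt{M}_{\max}$-variables of $W\setminus F$ may still have successors in $S$, not only in $W\setminus F$; the check $P_i(u)\geq c$ survives because those successors have $u$-value $\geq c$ by the definition of $c$.)

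The genuine gap is in the hard direction $F\subseteq\{x_i \mid g^*_i=0\}$. You pass to the subsystem induced on $F$ by substituting the constants $g^*_S$, and then cite Lemma \ref{lem:max-minpps-cases-g-less-1}(3) and Lemma \ref{lem:ldf-uniquefp} to conclude its GFP is $\vzero$. Two steps fail as written. First, Lemma \ref{lem:max-minpps-cases-g-less-1} carries the hypothesis that the GFP of the system it is applied to is $<\vone$; for the induced system this is nowhere established, and it does \emph{not} follow from ``every variable forward-reaches a \texttt{Q}- or deficient variable'' (reaching a \texttt{Q}-variable does not force value $<1$: the PPS $x=x^2$ has GFP $1$). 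It is provable, but only via an extra argument -- e.g., lifting a putative value-$1$ set $T$ of the induced system to a post-fixed point of the \emph{original} system (value $1$ on $T$, $g^*$ elsewhere), contradicting $g^*<\vone$ -- and nothing of that sort appears in your proof. Second, even granting this, your last step is incorrect: the realizing policy supplied by Lemma \ref{lem:max-minpps-cases-g-less-1}(3) may select, at an $\mathtt{M}_{\min}$ node, the substituted constant (``cap'') $g^*_k>0$, and then the resulting maxPPS is \emph{not} constant-free, so its LFP need not be $\vzero$; your phrase ``the constant-free induced operator fixes $\vzero$'' does not apply to it. What is needed is a specific cap-avoiding LDF policy -- the one that at each $\mathtt{M}_{\min}$ variable of $F$ picks the successor that caused its insertion into $F$ -- combined with part (\ref{lem:mm-ldf-lfp-geq-gfp-part}) of the lemma rather than part (\ref{lem:mm-min-ldf-part}). (Relatedly, calling the induced max-minPPS itself ``LDF'' via condition (ii) of Lemma \ref{lem:ld-or-ldf} conflates a PPS notion with the policy-quantified notion used for max-minPPSs.) The paper sidesteps all of this by never forming the induced system: it extends an LDF policy on $S$ (from part (\ref{lem:mm-reach-pol-part})) by the $F$-witness choices to a policy $\tau^*$ of the \emph{original} system, proves $\tau^*$ is LDF there, shows $(q^*_{*,\tau^*})_F=\vzero$ by value iteration from $\vzero$, and applies $g^*\leq q^*_{*,\tau^*}$ -- legitimate because $g^*<\vone$ is known for the original system by preprocessing.
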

\begin{proof}
Firstly we show that all variables $x_i$ in the output $F$ have $g^*_i=0$. To do this 
we construct an LDF policy $\tau^*$ for the min player such that $(q^*_{*,\tau^*})_i=0$, and then argue that $g^*_i=0$.

By Lemma \ref{lem:max-minpps-cases-g-less-1}(\ref{lem:reach-pol-part}.),
there is an LDF policy $\tau$ with $g^*_{*,\tau} < \vone$. We define $\tau^*$ 
so that it agrees
with $\tau$ on variables in $S$. For a variable $x_i$ of form {\tt M$_{\min}$} in
$F$, policy $\tau^*$ chooses a variable of $P_i(x)$ that was already in $F$
and which caused $x_i$ to be added to $F$ in step \ref{reach2}. 
So, for any fixed policy $\sigma$ for the max player, every variable 
$x_i$ in $F$
depends (directly or indirectly) in the PPS $x=P_{\sigma,\tau^*}(x)$ on a variable $x_j$ in $F$ with $P_j(\vone)
< 1$ or of form {\tt Q}. So every variable in $F$ satisfies one of two
out of the three conditions in Lemma \ref{lem:ld-or-ldf} part (ii), with respect to $x=P_{\sigma,\tau^*}(x)$. Now consider
a variable $x_i$ in $S$.  $\tau$ is LDF, so for the fixed 
policy $\sigma$ for the max player,  there is a path in the
dependency graph of $x=P_{\sigma,\tau}(x)$ from $x_i$ to an $x_j$ which
satisfies one of the three conditions in Lemma \ref{lem:ld-or-ldf} part (ii). If
this path does not contain any variable in $F$, then it is also a path
in the dependency graph of $x=P_{\sigma,\tau^*}(x)$. If it does, then $x_i$
depends on a variable in $F$, so by transitivity of dependence, it
also depends on a variable which satisfies one of the conditions in
Lemma \ref{lem:ld-or-ldf} (ii). The policy $\sigma$ for the max player
was chosen arbitrarily. So $\tau^*$ is LDF.
 
Next we need to show that $(q^*_{*,\tau^*})_F = \vzero$. 
Since for all variables $x_i$ in $F$, 
$P_i(x)$ does not contain a constant term, we have $(P_{*,\tau^*}(\vzero))_F=\vzero$. Note 
that all variables in $F$ of type {\tt L}, {\tt M$_{\min}$} 
and {\tt M$_{\max}$} depend directly only on variables
of $F$ in $x=P_{*,\tau^*}(x)$, and every variable of type {\tt Q} depends on some variable in $F$ 
(otherwise it would have been added to $S$ in the previous step 2). 
It follows then
by an easy induction that $(P^k_{*,\tau^*}(\vzero))_F=\vzero$ for all $k$. So $(q^*_{*,\tau^*})_F = \vzero$.

Since $\tau^*$ is an LDF policy,
 Lemma \ref{lem:max-minpps-cases-g-less-1}(\ref{lem:ldf-lfp-geq-gfp-part}.) tells us that $q^*_{*,\tau^*} \geq g^*$. 
Since $g^* \geq \vzero$, we have that $g^*_F=\vzero$ as required.

Finally, we need to show that $g^*_S > \vzero$, and specify a policy $\sigma$
for the max player that ensures this. We will only specify the
policy for the {\tt M$_{\max}$} nodes in $S$; the choice for
the other {\tt M$_{\max}$} nodes does not matter and can be arbitrary.  
To show the claim, we need to show
inductively that when we add a variable $x_i$ to $S$, if all variables
$x_j$ already in $S$ have $g^*_j > 0$ (and $(g^*_{\sigma,*})_j >0$),
then $g^*_i > 0$ (and $(g^*_{\sigma,*})_i >0$).

For the basis case, note that for the variables $x_i$ added to $S$ in step 1,
$P_i(x)$ contains a positive constant term, hence $g^*_i \geq P_i(\vzero) > 0$.
Consider now the variables $x_i$ added in an execution of step 2.
If $P_i(x)$ is of form {\tt L} then it contains an $x_j$ that was added
earlier to $S$; hence $g^*_j > 0$ (and $(g^*_{\sigma,*})_j >0$), 
and thus $g^*_i=P_i(g^*) > 0$ (and $(g^*_{\sigma,*})_i >0$). 
If $x_i=x_jx_k$ or $x_i = \min \{x_j,
x_k\}$ for some $x_j$,$x_k$, then both $x_j$,$x_k$ were added earlier
to $S$; hence $g^*_j >0$ and $g^*_k > 0$, and thus $g^*_i=P_i(g^*) > 0$
(and similarly, $(g^*_{\sigma,*})_i >0$).  
If $x_i = \max \{x_j, x_k\}$, then at least one of $x_j, x_k$  
was added earlier to $S$, say $x_j$, hence $g^*_j > 0$ and $(g^*_{\sigma,*})_j >0$.
Let the policy $\sigma$ choose $\sigma(x_i) = x_j$;
then $g^*_i \geq (g^*_{\sigma,*})_i =  (g^*_{\sigma,*})_j >0$.

Consider now the set $R$ of variables added to $S$ in an execution of step 6
and assume inductively that all the variables $x_j$ assigned so far to $S$
have $g^*_j >0$ (and $(g^*_{\sigma,*})_j >0$). 
Since the variables $x_i$ of $R$ were not added to 
$F$ in steps 3-4, they all satisfy $P_i(\vone)=1$, they are not of type {\tt Q},
every variable of type {\tt L} or {\tt M$_{\min}$} does not depend directly on 
any variable in $F$,
and every variable of type {\tt M$_{\max}$} depends 
directly on at least one variable that is not in $F$. 
Let the policy $\sigma$ choose actions for variables
in $S$ as before, and for each 
variable $x_i$ of type {\tt M$_{\max}$} in $R$ let $\sigma$ choose an
arbitrary variable of $P_i(x)$ that is not in $F$.
(For the variables $x_i$ of type {\tt M$_{\max}$} that are in $F$, 
the choices of $\sigma$ do not matter at this point.)
Then the dependency graph  of $x=P_{\sigma,*}(x)$ has no edges from $R$ to $F$.

We claim that $(g^*_{\sigma,*})_R >\vzero$. 
Let $\tau'$ be an LDF policy for the min player
in the minPPS, $x=P_{\sigma,*}(x)$,
such that $g^*_{\sigma,*} = q^*_{\sigma,\tau'}$
(we know $\tau'$ exists by Lemma \ref{lem:minpps-cases-g-less-1}(\ref{lem:reach-pol-part})).
Let $U$ be the set of variables $x_i \in R$ with
$(q^*_{\sigma,\tau'})_i=0$. We need to show that $U$ is empty.
Consider any $x_j \in U$.  We claim that any variable $x_k$ appearing 
in $(P_{\sigma,\tau'}(x))_j$ is in $U$.  We know that the dependency graph
of $x=P_{\sigma,*}(x)$ has no edges from $R$ to $F$, so $x_k \notin
F$. It remains to show that $(q^*_{\sigma,\tau'})_k=0$, since then by
inductive assumption $x_k \notin S$, and so we must have $x_k \in R$, and thus 
$x_k \in U$.  
If $x_j$ is
of type {\tt M}, then $(P_{\sigma,\tau'}(x))_j \equiv x_k$, so
$(q^*_{\sigma,\tau'})_k=(q^*_{\sigma,\tau'})_j=0$.  If $x_j$ has type
{\tt L}, then if $(q^*_{\sigma,\tau'})_k > 0$ then
$(q^*_{\sigma,\tau'})_j > 0$ so we must have $(q^*_{\sigma,\tau'})_k =
0$.  Since $x_j \in U$, it can not have type {\tt Q} since such
variables not in $S$ were put in $F$. So $x_k \in U$.

We have that variables in $U$ depend in the PPS
$x=P_{\sigma,\tau'}(x)$ only on other variables in $U$.  However, no
variable $x_j$ that satisfied one of the three conditions of Lemma
\ref{lem:ld-or-ldf} (ii) is in $U \subseteq R$ since it would have
been put in $S$ or $F$ in an earlier step.  Since
$x=P_{\sigma,\tau'}(x)$ is an LDF-PPS, for any $x_i$, by Lemma
\ref{lem:ld-or-ldf}, there is a path from $x_i$ to such an $x_j$.  If
$x_i \in U$, then this path must remain entirely in $U$ which is a
contradiction.  Therefore $U$ is empty and we have that
$(g^*_{\sigma,*})_R >\vzero$ as required.

The fact that the algorithm runs in P-time follows easily from 
the fact that each iteration of the outer loop adds at least
one element to $S$, and no element is ever removed. The
individual steps of the algorithm are each easily computable in P-time,
by performing AND-OR reachability on the dependency graph.
  \end{proof}

We remark that the policy $\tau^*$ for the min player constructed in the
proof of Theorem \ref{thm:minpps-gfp-zero-detection} does not necessarily
ensure value 0 in the GFP for a variable $x_i$ with $g^*_i=0$ 
(i.e., it is possible that $(g^*_{*,\tau^*})_i >0$). In fact, there may not
exist any such policy (deterministic or randomized) ensuring value 0 for the min player
in a max-minPPS (or even a minPPS).
Similarly, in a BSSG (or even RMDP) with
optimal non-reachability value 0 (i.e. reachability value 1),
there may not exist any optimal static strategy 
for the player that wants to minimize the non-reachability probability; 
recall Example \ref{example1}.
We show however that we can construct a non-static optimal deterministic strategy.

\begin{theorem} There is a non-static deterministic optimal strategy for 
the player minimizing
the probability of {\em not} reaching a target type in a BSSG,
if the value of not reaching the target is $0$. \end{theorem}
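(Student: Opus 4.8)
The plan is to translate the statement into the language of the associated max-minPPS $x=P(x)$ and then build an explicit memoryful strategy for the player who wants to reach $T_{f^*}$ (equivalently, to \emph{minimize} non-reachability, which is the \emph{min} player of $x=P(x)$). First I would invoke Theorem~\ref{bssg-equations} to pass to $x=P(x)$, use Proposition~\ref{lem:prob1-ptime} to delete the variables with $g^*_i=1$ so that $g^* < \vone$, and let $F := \{x_i : g^*_i = 0\}$ be the (P-time computable, by Theorem~\ref{thm:minpps-gfp-zero-detection}) set of types whose reachability value is $1$. Since $g^*(\mu) = \Pi_i (g^*_i)^{\mu_i}$, an initial population has non-reachability value $0$ iff it contains at least one object of a type in $F$, so it suffices to exhibit a deterministic strategy that reaches $T_{f^*}$ with probability $1$ starting from a single object of any type in $F$, against every strategy of the opponent.

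The construction rests on two ingredients. \emph{Ingredient 1.} From the proof of Theorem~\ref{thm:minpps-gfp-zero-detection} there is a deterministic LDF policy $\tau^*$ for the min player with $(q^*_{*,\tau^*})_F = \vzero$; because $x=P_{*,\tau^*}(x)$ is a maxPPS whose LFP is the coordinatewise maximum over max-policies of $q^*_{\sigma,\tau^*}$, we get $(q^*_{\sigma,\tau^*})_F = \vzero$ for \emph{every} opponent policy $\sigma$. Reading this LFP through the branching-process interpretation recalled before Lemma~\ref{lem:minpps-cases-g-less-1}, together with Lemma~\ref{lem:either-extinct-or-infty}, it says: playing $\tau^*$ from a single $F$-object, against any opponent, with probability $1$ the trajectory either reaches $T_{f^*}$ or its population of $F$-objects grows without bound (the event of dying out without ever reaching $T_{f^*}$ has probability $0$). \emph{Ingredient 2.} Since, by Lemma~\ref{claim:k-step-value-reach}, $g^k_i = x^k_i \downarrow g^*_i = 0$ for each $i \in F$, there is a finite horizon $d$ and a constant $p_0 > 0$ such that from one $F$-object the min player has a deterministic $d$-step strategy $\tau_d$ that reaches $T_{f^*}$ within $d$ steps with probability at least $p_0$ against every opponent.

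Then I would define the deterministic, non-static strategy $\alpha$ as follows. Play $\tau^*$; each time the number of surviving $F$-objects first reaches a threshold $N_m$ (say $N_m = 2^m$), split the current $F$-population into two halves, run the $d$-step strategy $\tau_d$ (on disjoint descendant lineages) on one half while the other half, of size $\geq N_m/2$, continues under $\tau^*$. By Ingredient~1 the continuing half again reaches $T_{f^*}$ or explodes, so (unless $T_{f^*}$ is already reached) all thresholds $N_m$ are crossed with probability $1$, producing infinitely many attempt-batches. Distinct objects spawn independent subtrees in a branching process and $\tau_d$ guarantees success probability $\geq p_0$ per object \emph{against any opponent}, so by sequentially conditioning, the probability that batch $m$ entirely fails is at most $(1-p_0)^{N_m/2}$, and these bounds multiply: the probability that every batch fails is at most $\Pi_m (1-p_0)^{N_m/2} = 0$. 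Hence under $\alpha$, against any opponent strategy, $T_{f^*}$ is reached with probability $1$, i.e.\ $\alpha$ achieves non-reachability value $0$; this is exactly the construction already illustrated for a BMDP in Example~\ref{example1}, now made to work in the two-player setting and for arbitrary structure.

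The main obstacle I expect is the careful probabilistic bookkeeping in this last construction, in the \emph{game} setting. Two points need genuine care: (i) upgrading Ingredient~1 from static opponents to arbitrary (memoryful, randomized) ones --- I would handle this by observing that, with $\tau^*$ fixed, the opponent's most favourable objective ``become extinct without reaching $T_{f^*}$'' is an extinction-type objective for which, by the results of \cite{rmdp} (cf.\ Theorem~\ref{optexist}), a static opponent strategy is optimal, so the value $0$ computed for static $\sigma$ already bounds all opponent strategies; and (ii) justifying that halving the population preserves the explosion that feeds later batches, and that the per-batch failure probabilities may be multiplied --- this uses the branching property (independence of distinct objects' subtrees) together with the fact that the per-object guarantee $p_0$ holds uniformly against every opponent, which lets a straightforward sequential-conditioning argument replace genuine independence.
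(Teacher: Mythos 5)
Your overall architecture is the same as the paper's at the top level: combine the LDF policy $\tau^*$ from the proof of Theorem~\ref{thm:minpps-gfp-zero-detection} (to keep the process alive inside $F$) with a second policy that turns spawned objects into independent attempts at the target, each succeeding with probability bounded below uniformly over opponent behavior, and finish by sequential conditioning. Your Ingredient~2 is fine: since $g^k_i \downarrow g^*_i = 0$, Lemma~\ref{claim:k-step-value-reach} yields a horizon $d$ and a deterministic $d$-step strategy with success probability $p_0>0$ against every opponent. The genuine gap is Ingredient~1, the claim that under $\tau^*$, against \emph{every} (history-dependent, randomized) opponent strategy, with probability $1$ either the target is reached or the simultaneous population of $F$-objects crosses every threshold $N_m$. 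This is the step on which everything hinges --- without it you never assemble even the first batch --- and the justifications you offer do not establish it.

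Here is why they fail. Lemma~\ref{lem:either-extinct-or-infty} is proved only for purely stochastic BPs: its proof is a transience argument in the countable-state Markov chain of populations, and once the opponent plays adaptively the process is not a branching process (nor even Markov), so the lemma does not apply. Your fallback reduction to static opponents via Theorem~\ref{optexist}/\cite{rmdp} covers only extinction-type (LFP) objectives, i.e.\ the event ``die out without reaching $T_{f^*}$''; but that event is a red herring: under $\tau^*$ every $F$-object, whatever the opponent does, has at least one child in $F$ or a target child (this is exactly the paper's ``well-definedness of the queen'' observation), so the $F$-population is non-decreasing and extinction-without-target has probability $0$ trivially. The event you actually must exclude is the $F$-population staying \emph{bounded} forever without reaching the target --- for instance, one $F$-object per generation, each producing exactly one $F$-child, forever --- and that is neither an extinction nor a reachability objective, so no cited result gives static-opponent sufficiency or any value characterization for it. Worse, the probabilistic reading of the LFP you invoke is explicitly left unproven in the paper (see the footnote to Lemma~\ref{lem:minpps-cases-g-less-1}), and in its main-text form (``generate an infinite number of objects that can reach the target'') it is satisfied \emph{vacuously} in the bounded scenario just described, since cumulative generation is infinite even when the population stays at size $1$; so it cannot deliver population explosion. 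The paper plugs precisely this hole with a local argument robust to adaptive opponents: it tracks a single $F$-lineage (the ``queen'') and, using the trees $T_i$ read off from the algorithm of Figure~\ref{fig:alg-g1-max-minpps}, shows that uniformly over histories and opponent behavior, within each window of $n$ steps the queen either reaches the target or branches with probability at least $2^{-|P|}$; iterating this uniform bound is what produces infinitely many independent attempts with probability $1$. Any rigorous repair of your Ingredient~1 would in effect have to reprove this uniform per-window bound (or an equivalent), so it should replace the appeal to the BP lemma rather than supplement it.
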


\begin{proof}
Let $x=P(x)$ be the max-minPPS for the given BSSG, whose GFP $g^*$
gives the non-reachability values.
Let $Z = \{ x_i | g^*_i =0 \}$ be the final value of the set $F$ that is
returned by the algorithm of Fig. 1.
Let $\tau^*$ be the LDF policy for player $\min$ constructed in
the proof of Theorem \ref{thm:minpps-gfp-zero-detection}
that has the property that
$g^*_i = 0$ iff $(q^*_{*,\tau^*})_i = 0$. 
Recall that  $\tau^*$ selects for each type {\tt M$_{\min}$} variable
$x_i \in Z$ a variable $x_j$ of $P_i(x)$ that was
added earlier to $F$ (and hence is also in $Z$). 
From Proposition \ref{lem:prob1-ptime}, we can also compute in P-time 
an LDF policy $\tau$ with $g^*_{*,\tau} < \vone$. We combine $\tau^*$ and $\tau$ 
in the following non-static policy:

We designate one member of our initial population with type in $Z$ to
be the queen. The rest of the population are workers. We use policy
$\tau^*$ for the queen and $\tau$ for the workers. In following
generations, if we have not reached an object of the target type, we
choose one of the children in $Z$ of the last generation's queen
(which we next show must exist) to be the new queen. Again, all other
members of the population are workers.

We first show the policy is well defined, i.e., we can always find a new queen
as prescribed. If $g^*_i=0$, then $P_i(g^*)=(P_{*,\tau^*}(g^*))_i=g^*_i=0$. 
If $P_i(x)$ has form {\tt L} then all $x_j$ appearing in $P_i(x)$ have 
$g^*_j=0$ and there is no constant term. If $P_i(x)$ has form {\tt Q} then 
at least one $x_j$ in $P_i(x)$ will have $g^*_j=0$. 
If $P_i(x)$ has form {\tt M}$_{\min}$, 
then the $x_j=\tau^*(x_i)$ in $(P_{*,\tau^*}(x))_i$ has $g^*_j = 0$. 
Finally, if $P_i(x)$ has form {\tt M}$_{\max}$, then 
for all variables $x_j$ in $P_{*,\tau^*}(x)$ we have $g^*_j = 0$.
In other words, using 
$\tau^*$, an object of a type in $Z$ has offspring which either 
includes the target or an object of a type in $Z$. 
Thus the next generation always includes a potential choice of queen.

Next we show that if we never reach the target type, the queen has
more than one child infinitely often with probability 1. Indeed we
claim that with probability at least $2^{-|P|}$ within the next $n$
steps, either the queen has more than one child or we reach the
target. For this purpose, we define inductively for every variable $x_i \in Z$
a (directed) tree $T_i$ with root $x_i$, which shows why $x_i$ was added to
$F$ in the final iteration of the algorithm.
If $P_i(\vone) <1$ or $x_i$ has type {\tt Q} then $T_i$ is a single node
labeled $x_i$.
If $x_i$ has type {\tt L} (respectively {\tt M$_{\min}$}) and was added in step 4 
because of variable $x_j \in P_i(x)$ that was already in $F$
(resp., where $x_j = \tau^*(x_i)$), then $T_i$ consists of the
edge $x_i \rightarrow x_j$ and the subtree $T_j$ rooted at $x_j$.
If $x_i$ has type {\tt M$_{\max}$} then $T_i$ contains edges $x_i \rightarrow x_j$
for all $x_j \in P_i(x)$ and a subtree $T_j$
hanging from each $x_j$.

Suppose that in some step the queen is an object corresponding to $x_i \in Z$.
Then with positive probability (in fact probability at least $2^{-|P|}$),
in the next (at most) $n$ steps, the process will follow a root-to-leaf path
of the tree $T_i$, regardless of the strategy of the max player: 
whenever the path is at a node of type {\tt L}, 
the process follows the edge to the (unique) child (which becomes the new queen) 
with the probability of the corresponding transition of the BSSG;
when it is at a node of type {\tt M$_{\min}$}, it follows necessarily the edge 
to its child because we are using policy ${\tau^*}$ for the queen; and when it is at 
a node of type {\tt M$_{\max}$}, it follows an edge selected by the max player.
Thus, with probability at least $2^{-|P|}$, the process arrives at a leaf
of $T_i$. If the leaf corresponds to a variable $x_j$ with $P_j(\vone) <1$ then
the process has reached a target type. If the leaf corresponds to a variable
of type {\tt Q} then the queen generates two children.

Thus, if the queen never reaches the target throughout the process,
then the queen will generate more than one child infinitely often
with probability 1.

By our choice of the policy $\tau$ followed by the workers, $ g^*_{*,\tau} < \vone$.  
The descendants of a worker of type $x_i$ have positive probability $(\vone-g^*_{*,\tau})_i>0$
of reaching the target regardless of the strategy of the max player
(this probability is $ \geq 2^{-4|P|}$  by Lemma 3.20 of \cite{esy-icalp12} 
applied to the maxPPS $x=P_{*,\tau}(x)$). 
For each worker descended from the queen these
probabilities are independent. 
So with probability 1, one of them will have descendants
that reach the target. Thus we reach the target with probability
1. \end{proof}

\section{Approximating the value of BSSGs and the GFP of max-minPPSs}

\label{sec:bssg}

In this section we build on the prior results to 
show that the {\em value} of a Branching Simple Stochastic
Game (BSSG) with reachability as the objective can be approximated in TFNP.  
Equivalently, we show that the GFP, $g^*$, of a max-minPPS can be 
approximated in TFNP.

We will first show that we can test in polynomial time whether a
(deterministic) policy $\tau$ for the min player in a max-minPPS is LDF.
Recall, from Section \ref{sec:zero-detect},
what it means for a policy $\tau$ for the min player to be LDF.

We borrow the concept of a {\em closed set},
studied in \cite{CY98}, which we adapt
for maxPPSs as follows:

\begin{definition} A {\em closed set}  of a maxPPS, $x=P(x)$, is a subset of variables $S$ such that: 
(i) the dependency subgraph induced by $S$ 
is nontrivial (i.e., contains at least one edge,
but not necessarily more than one variable), 
and is strongly connected (i.e., every variable in $S$ depends on every 
variable in $S$
{\em via} a directed path going only through variables in $S$);
(ii) $S$ contains only variables of type ${\mathtt M}$ and ${\mathtt L}$; and
(iii) for all variables $x_i$ in $S$ of type {\tt L},  $P(x)_i$ contains only variables in
$S$, and furthermore $P_i({\mathbf 0})=0$ and $P_i(\vone)=1$.
\end{definition}

\begin{lemma} \label{ldf-criterion}
A policy $\tau$ 
for the min player is LDF if and only if the maxPPS $x=P_{*,\tau}(x)$ contains no closed sets.
\end{lemma}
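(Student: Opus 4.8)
The plan is to prove the two implications separately, using the characterization of linear degeneracy from Lemma~\ref{lem:ld-or-ldf}: a PPS is LDF exactly when it contains no linear degenerate bottom SCC, and, by definition of an LDF min policy, $\tau$ is LDF exactly when \emph{no} completion $x=P_{\sigma,\tau}(x)$ (ranging over all deterministic max policies $\sigma$) contains such a component. The conceptual bridge to closed sets is that fixing a policy only changes which edges of the dependency graph survive at the choice nodes: fixing $\tau$ collapses each $\mathtt{M_{\min}}$ node $x_i=\min\{x_j,x_{\tau(i)}\}$ into the single-successor form-$\mathtt{L}$ node $x_i=x_{\tau(i)}$ (which has no constant term and row-sum $1$), whereas in the maxPPS $x=P_{*,\tau}(x)$ every $\mathtt{M_{\max}}$ node $x_i=\max\{x_j,x_k\}$ keeps \emph{both} outgoing edges, while fixing a max policy $\sigma$ would delete all but the edge to $x_{\sigma(i)}$.

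For the direction ``closed set $\Rightarrow$ not LDF'', suppose $x=P_{*,\tau}(x)$ has a closed set $S$. I would build a deterministic max policy $\sigma$ that, at every $\mathtt{M_{\max}}$ node $x_i\in S$, selects a successor lying in $S$; such a successor exists because strong connectivity of $S$ (condition (i)) forces $x_i$ to have an out-edge within $S$, and its only out-edges go to $x_j,x_k$, so one of them is in $S$. After also applying $\tau$, every variable of $S$ has, in $x=P_{\sigma,\tau}(x)$, all of its surviving out-edges inside $S$ (for form-$\mathtt{L}$ nodes this is condition (iii); for collapsed $\mathtt{M_{\min}}$ and $\mathtt{M_{\max}}$ nodes the unique successor was chosen in $S$), so $S$ has no outgoing edges, and each such variable is now form $\mathtt{L}$ with no constant term and coefficients summing to $1$. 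Hence any sink SCC $S'\subseteq S$ of $x=P_{\sigma,\tau}(x)$ is a bottom SCC of the whole system whose induced subsystem is $x_{S'}=B_{S'}x_{S'}$ with $B_{S'}$ stochastic, i.e. a linear degenerate bottom SCC; note $S'$ cannot be a self-loop-free singleton, since a form-$\mathtt{L}$ node with no successors but row-sum $1$ is impossible. By Lemma~\ref{lem:ld-or-ldf}, $x=P_{\sigma,\tau}(x)$ is not LDF, so $\tau$ is not LDF.

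For the converse ``not LDF $\Rightarrow$ closed set'', I would start from a deterministic max policy $\sigma$ for which $x=P_{\sigma,\tau}(x)$ has a linear degenerate bottom SCC $S$, and show that this very set $S$ is a closed set of $x=P_{*,\tau}(x)$. Condition (ii) holds because linear degeneracy forbids form-$\mathtt{Q}$ variables in $S$, leaving only original form-$\mathtt{L}$ nodes, collapsed $\mathtt{M_{\min}}$ nodes, and $\mathtt{M_{\max}}$ nodes, which in $x=P_{*,\tau}(x)$ have type $\mathtt{L}$ or $\mathtt{M}$. Condition (iii) holds because the form-$\mathtt{L}$ variables of $x=P_{*,\tau}(x)$ (the original $\mathtt{L}$ nodes and the collapsed $\mathtt{M_{\min}}$ nodes) have policy-independent equations: their dependencies, constant-freeness, and row-sum $1$ are inherited directly from linear degeneracy of $S$ in $x=P_{\sigma,\tau}(x)$, and their successors lie in $S$ because $S$ is a bottom SCC there. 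For condition (i), strong connectivity of $S$ transfers from $x=P_{\sigma,\tau}(x)$ to $x=P_{*,\tau}(x)$ because the latter's dependency graph, restricted to $S$, is a \emph{supergraph} of the former's (the only change is that $\mathtt{M_{\max}}$ nodes regain their second edge), and adding edges preserves strong connectivity; nontriviality is likewise inherited.

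The main obstacle, and the reason the statement is phrased via closed sets of the maxPPS rather than directly via bottom SCCs, is precisely this asymmetry between the players: min's policy $\tau$ is already fixed, while max's policy $\sigma$ is universally quantified in the definition of LDF. The delicate point is to verify that the freedom max has at $\mathtt{M_{\max}}$ nodes is \emph{exactly} captured by letting those nodes sit in $S$ under only the ``both-edges'' requirement with no analogue of condition (iii): in the forward direction one must confirm that strong connectivity always supplies an in-$S$ successor to pick, and in the reverse direction that reinstating the discarded $\mathtt{M_{\max}}$ edge cannot violate any closed-set condition. I expect the bookkeeping around singleton/self-loop SCCs, and the claim that every form-$\mathtt{L}$ node of $S$ keeps all its dependencies inside $S$, to need the most care.
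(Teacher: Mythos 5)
Your proof is correct and follows essentially the same route as the paper's: in the forward direction you build a max policy $\sigma$ that picks in-$S$ successors at the {\tt M$_{\max}$} nodes of the closed set and exhibit a linear degenerate bottom SCC inside $S$, and in the reverse direction you show that a linear degenerate bottom SCC of $x=P_{\sigma,\tau}(x)$ is itself a closed set of $x=P_{*,\tau}(x)$. The additional bookkeeping you flag (ruling out trivial singleton SCCs, and the supergraph argument for transferring strong connectivity) only makes explicit details the paper's shorter proof leaves implicit.
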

\begin{proof}
To prove the $(\Rightarrow)$ direction,
suppose that $S$ is a closed set. Since $S$ is strongly-connected, every variable $x_i$ of form {\tt M$_{\max}$} 
must have a choice in $S$, because otherwise no variable in $S$ would depend on it via $S$. 
Let $\sigma$ be a policy that picks a choice in $S$ for every variables $x_i$ in $S$. 
No variable in $S$ in $x=P_{\sigma,\tau}(x)$ depends on any variable outside of $S$. So there must be a bottom SCC, $T$, of $x=P_{\sigma,\tau}(x)$ with $T \subseteq S$. $T$ is a linear degenerate SCC, since it contains 
no variables of form {\tt Q} and has $(P_{\sigma,\tau})_T(\vzero)=\vzero$ and $(P_{\sigma,\tau})_T(\vone)=\vone$. 
So $\tau$ is not LDF.

To prove the $(\Leftarrow)$ direction, 
suppose that $\tau$ is a policy for min in $x=P(x)$ which is not LDF, i.e., 
there exists a $\sigma$ 
such that there is a bottom SCC $S$ of $x=P_{\sigma,\tau}(x)$ that is linear degenerate. 
We claim that $S$ is a closed set in $x=P_{*,\tau}(x)$. It is strongly-connected because it is an SCC of 
$x=P_{\sigma,\tau}(x)$. It contains no variables of form {\tt Q} and every variable 
of form {\tt L} satisfies (iii).
\end{proof}

\begin{lemma} \label{ldf-alg}
Given a max-minPPS $x=P(x)$ and a policy $\tau$ for the min player, we
can determine in linear time whether $\tau$ is LDF.
\end{lemma}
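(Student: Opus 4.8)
The plan is to reduce directly to the closed-set characterization just established in Lemma~\ref{ldf-criterion}: since $\tau$ is LDF if and only if the maxPPS $x=P_{*,\tau}(x)$ contains no closed set, it suffices to detect the existence of a closed set of a maxPPS in linear time. Constructing $Q := P_{*,\tau}$ (replacing each $\mathtt{M_{\min}}$ equation $x_i = \min\{x_j,x_k\}$ by the degenerate-linear equation determined by $\tau$, and leaving the $\mathtt{L}$, $\mathtt{Q}$, and $\mathtt{M_{\max}}$ equations untouched) together with its dependency graph takes time $O(|P|)$, so the whole cost is dominated by the closed-set test.

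The core of the argument is a greatest-fixed-point characterization of ``closed-set membership''. I would define $W$ to be the largest set of variables $S$ of $Q$ such that: (a) $S$ contains only variables of type $\mathtt{L}$ or $\mathtt{M}$; (b) every $\mathtt{L}$-variable $x_i\in S$ satisfies $P_i(\vzero)=0$, $P_i(\vone)=1$, and has all variables occurring in $P_i(x)$ inside $S$; and (c) every ($\max$-)$\mathtt{M}$-variable $x_i\in S$ has at least one of its two choice-variables in $S$. The claim to prove is that $Q$ has a closed set if and only if $W\neq\emptyset$, so that $\tau$ is LDF precisely when $W=\emptyset$. The forward direction is immediate: any closed set $S$ satisfies (a)--(c) by its defining conditions (ii) and (iii), and a straightforward induction on the removal order shows $S$ is never deleted while forming $W$, hence $S\subseteq W$.

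For the reverse direction, suppose $W\neq\emptyset$ and form the induced subgraph $H$ on $W$ that keeps all out-edges of $\mathtt{L}$-variables (which all land in $W$ by (b)) and the surviving choice-edges of $\mathtt{M}$-variables. Every vertex of $H$ has out-degree at least $1$: an $\mathtt{L}$-variable is degenerate linear summing to $1$, so has a positive coefficient, and an $\mathtt{M}$-variable has a choice in $W$ by (c). Hence the condensation of $H$ has a sink strongly connected component $C$, which is necessarily nontrivial (each of its vertices has an out-edge, and these cannot leave $C$, so $C$ contains an edge). I would then verify that $C$ is a closed set: it is nontrivial and strongly connected, consists only of $\mathtt{L}$ and $\mathtt{M}$ variables, and for each $\mathtt{L}$-variable in $C$ all of its out-neighbours (all present as edges of $H$) remain inside $C$ because $C$ is a sink, which gives condition (iii); the degenerate-form requirements $P_i(\vzero)=0$, $P_i(\vone)=1$ hold since such variables survived into $W$.

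Finally, I would compute $W$ by the standard monotone removal: discard all $\mathtt{Q}$-variables and all $\mathtt{L}$-variables failing the local test $P_i(\vzero)=0$, $P_i(\vone)=1$; then repeatedly remove any $\mathtt{L}$-variable that points to an already-removed variable and any $\mathtt{M}$-variable both of whose choices have been removed, driving the process with a worklist and a per-$\mathtt{M}$-variable counter of surviving choices. Each edge and vertex is touched a constant number of times, so this runs in $O(|P|)$, and we report ``LDF'' iff $W$ ends up empty. I expect the reverse direction --- extracting an honest closed set from a merely nonempty $W$ --- to be the main obstacle, since the asymmetry between $\mathtt{L}$-variables (all successors must stay inside the set) and $\mathtt{M}$-variables (only one successor needed) is exactly what the sink-SCC choice is designed to reconcile.
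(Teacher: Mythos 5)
Your proof is correct and takes essentially the same approach as the paper: your greatest set $W$ is exactly the complement of the paper's AND-OR reachable set (so your worklist-removal computation is the standard linear-time AND-OR reachability algorithm viewed from the complement side), and your extraction of a closed set from a sink SCC of $H$ is the paper's bottom-SCC argument on the subgraph induced by the non-reaching nodes. The only difference is presentational --- you spell out why every vertex of $H$ has out-degree at least one and hence why the sink SCC is nontrivial, a detail the paper leaves terse.
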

\begin{proof}
By Lemma \ref{ldf-criterion}, $\tau$ is LDF if the maxPPS $x=P_{*,\tau}(x)$ 
contains no closed sets.
A P-time algorithm was given in \cite{CY98} to find
the maximal closed subsets (called the closed components) of a finite state MDP
(and an improved algorithm was given in \cite{Chat-Hen14}).
These algorithms could be readily adapted to our setting,
in order to compute all maximal closed subsets of the maxPPS.
However, our problem is simpler here: we only need to determine
if there is any closed set.
We can test this condition more directly as follows.
Let $G$ be the dependency graph of the maxPPS $x=P_{*,\tau}(x)$.
Note that the variables $x_i$ of type Min have become now type L variables
in the maxPPS $x=P_{*,\tau}(x)$ and the corresponding polynomial $(P_{*,\tau})_i(x)$
satisfies $(P_{*,\tau})_i(\vzero)=0$ and $(P_{*,\tau})_i(\vone)=1$.

Perform AND-OR reachability on $G$, where the set $T$ of target nodes
includes all nodes (variables) of type Q and all nodes $x_i$ of type L
where $P_i(\vzero)>0$ or $P_i(\vone)<1$; the set of OR nodes consists of
all type L variables $x_i$ where $P_i(\vzero)=0$ and $P_i(\vone)=1$
(this includes all type Min nodes); and the set of AND nodes consists
of all type Max variables.
Recall, from the second paragraph of Section
\ref{sec:qual-gfp-equal-1}, the definition of the
set of nodes that can {\em and-or reach} the set $T$. 
Let $U$ be the set of nodes that {\em cannot} AND-OR reach the set $T$ of target nodes.
We claim that $\tau$ is LDF if and only if $U$ is empty.

Suppose first that $\tau$ is not LDF. 
Then the maxPPS $x=P_{*,\tau}(x)$ contains a closed set $S$.
By the definition of a closed set, every type Max node of $S$ has a successor in $S$ (because $S$ is strongly connected), and every type L node $x_i$ of $S$ has all its successors in $S$ and satisfies $P_i(\vzero)=0$ and $P_i(\vone)=1$.
Therefore, when we perform AND-OR reachability, no node of $S$ will
be accessed, i.e., no node of $S$ can AND-OR reach the set $T$ of target nodes.
Hence $S \subseteq U$ and thus $U$ is not empty.

On the other hand, suppose that $U$ is not empty, and let $S$ be a bottom
SCC of the subgraph $G[U]$ of $G$ induced by $U$.
Then $S$ satisfies the conditions of a closed set.
Hence $\tau$ is not LDF.
\end{proof}

We can show now the main result of this section.

\begin{theorem} The problem of approximating the GFP of a max-minPPS $x=P(x)$, i.e. 
computing a vector $\tilde{g} \in [0,1]^n$ such that $\|g^* - \tilde{g}\|_\infty \leq \epsilon$, is in TFNP.  
 \end{theorem}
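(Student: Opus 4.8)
The plan is to place the problem in TFNP by exhibiting a polynomial-time-checkable certificate, consisting of a pair of deterministic policies, that sandwiches $g^*$ between two quantities each approximable in P-time. First I would preprocess: using Proposition \ref{lem:prob1-ptime} I detect and remove in P-time all coordinates $x_i$ with $g^*_i = 1$, outputting $\tilde{g}_i = 1$ for them; on the residual max-minPPS we may then assume $g^* < \vone$, which is exactly the hypothesis under which Lemma \ref{lem:max-minpps-cases-g-less-1} applies.

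The certificate is a pair $(\sigma,\tau)$, where $\sigma$ is a deterministic policy for the $\max$ player and $\tau$ is a deterministic policy for the $\min$ player. In SNF form each has polynomial size (one choice per {\tt M} variable), so both can be guessed nondeterministically. The verifier does three things. (i) Using the linear-time test of Lemma \ref{ldf-alg}, it checks that $\tau$ is LDF. (ii) It approximates $g^*_{\sigma,*}$, the GFP of the minPPS $x = P_{\sigma,*}(x)$: since fixing any $\max$ policy only lowers the operator, $P_{\sigma,*} \le P$ pointwise, whence $g^*_{\sigma,*} \le g^* < \vone$, and Theorem \ref{thm:minPPS-approx-gfp} produces a rational $v_1$ with $\|v_1 - g^*_{\sigma,*}\|_\infty \le \epsilon/4$ in time polynomial in $|P|$ and $\log(1/\epsilon)$. (iii) It approximates $q^*_{*,\tau}$, the LFP of the maxPPS $x = P_{*,\tau}(x)$, using the P-time LFP algorithm of \cite{esy-icalp12}, obtaining a rational $v_2$ with $\|v_2 - q^*_{*,\tau}\|_\infty \le \epsilon/4$. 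The machine accepts iff $\tau$ passed the LDF test and $\|v_2 - v_1\|_\infty \le \epsilon/2$, and in that case outputs $v_1$.

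Correctness rests on the sandwich $g^*_{\sigma,*} \le g^* \le q^*_{*,\tau}$. The left inequality holds for \emph{every} $\max$ policy $\sigma$, by the operator domination above together with monotonicity of the greatest-fixed-point map. The right inequality is precisely Lemma \ref{lem:max-minpps-cases-g-less-1}(\ref{lem:mm-ldf-lfp-geq-gfp-part}), which is why the certificate must carry a $\tau$ that is verified LDF. For \emph{soundness}, on any accepting run the two approximation bounds give $v_1 - \epsilon/4 \le g^*_{\sigma,*} \le g^* \le q^*_{*,\tau} \le v_2 + \epsilon/4 \le v_1 + 3\epsilon/4$, so $\|g^* - v_1\|_\infty \le 3\epsilon/4 \le \epsilon$ on the residual coordinates; combined with the preprocessed $=1$ coordinates this yields a valid global answer. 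For \emph{totality}, I invoke the two existence results: Corollary \ref{cor:opt-maxPPS} supplies a deterministic $\max$ policy $\sigma^*$ with $g^*_{\sigma^*,*} = g^*$, and Lemma \ref{lem:max-minpps-cases-g-less-1}(\ref{lem:mm-min-ldf-part}) supplies a deterministic LDF $\min$ policy $\tau^*$ with $q^*_{*,\tau^*} = g^*$. On the guess $(\sigma^*,\tau^*)$ the gap $q^*_{*,\tau^*} - g^*_{\sigma^*,*}$ is exactly $\vzero$, so the verification necessarily accepts; hence an accepting path always exists.

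The main obstacle, and the reason the earlier sections are needed, is obtaining a matching \emph{upper} bound on $g^*$ from an efficiently verifiable object: a guessed $\min$ policy yields $g^* \le q^*_{*,\tau}$ only when it is LDF (otherwise $q^*_{*,\tau}$ can fall below $g^*$ and soundness breaks), so the P-time LDF test of Lemma \ref{ldf-alg} is essential, as is the guarantee from Lemma \ref{lem:max-minpps-cases-g-less-1}(\ref{lem:mm-min-ldf-part}) that a \emph{deterministic} LDF witness achieving $q^*_{*,\tau^*} = g^*$ exists at all, so that the certificate is finite and the verified gap can genuinely reach $0$. The remaining work — fixing the internal precisions and confirming that the three P-time subroutines compose within the stated bound — is routine bookkeeping.
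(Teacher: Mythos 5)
Your proposal is correct and follows essentially the same route as the paper's proof: preprocess the $g^*_i=1$ coordinates via Proposition \ref{lem:prob1-ptime}, guess a pair of deterministic policies $(\sigma,\tau)$, verify that $\tau$ is LDF via Lemma \ref{ldf-alg}, approximate $g^*_{\sigma,*}$ and $q^*_{*,\tau}$ in P-time, and accept when they are close, with completeness from Corollary \ref{cor:opt-maxPPS} and Lemma \ref{lem:max-minpps-cases-g-less-1}(\ref{lem:mm-min-ldf-part}) and soundness from the sandwich $g^*_{\sigma,*} \leq g^* \leq q^*_{*,\tau}$. The only (immaterial) differences are your tighter internal precision $\epsilon/4$ versus the paper's $\epsilon/2$, and your explicit justification of the left inequality via operator domination and Tarski, which the paper leaves implicit.
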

 \begin{proof}
We first compute in polynomial time,  by Proposition \ref{lem:prob1-ptime}, 
the set of indices $D = \{ i \in [n] \mid  g^*_i = 1 \}$.
We then eliminate all variables $x_i$ such that $i \in D$
from the max-minPPS, substituting them by the value $1$
and removing their corresponding equations.

So, assume henceforth that $g^* <\vone$.
By Lemma \ref{lem:max-minpps-cases-g-less-1},
 for any deterministic LDF policy $\tau$ for min, and deterministic policy
$\sigma$ for max, we have that $g^*_{\sigma,*} \leq g^* \leq q^*_{*,\tau}$. Furthermore, by Lemma \ref{lem:max-minpps-cases-g-less-1} and
Corollary \ref{cor:opt-maxPPS},
there exist such policies 
which make both of these inequalities tight. Thus,
to  put the problem of approximating $g^*$ in TFNP, it suffices to guess such policies with 
$g^*_{\sigma,*}$ and $q^*_{*,\tau}$ close enough to each other, approximate the two vectors,
and verify that they are close.   

In more detail, the algorithm is as follows.
Guess deterministic policies $\sigma$ and $\tau$ for the max and min players.  Check whether $\tau$ is LDF (in P-time,
by Lemma \ref{ldf-alg}). 
If it is not, we reject this guess.
Otherwise, using our P-time algorithm for computing $g^*$ for minPPSs,  together with the P-time algorithm for computing $q^*$ for maxPPSs
from \cite{esy-icalp12},
we compute approximations $v_\sigma$ and $v_\tau$ to $g^*_{\sigma,*}$ and $q^*_{*,\tau}$ from below, such that 
$\|v_\sigma-g^*_{\sigma,*}\|_\infty \leq \epsilon/2$ and $\|v_\tau-q^*_{*,\tau}\|_\infty \leq \epsilon/2$. 
Check whether $\|v_\sigma-v_\tau\|_\infty \leq \epsilon/2$.
If so, then output $\tilde{g}=v_\sigma$;
otherwise, we reject this guess.

We have to show that the algorithm is sound and complete: 
(1) There is at least one guess for which the algorithm produces an output, and
(2) For every guess $\sigma, \tau$ for which the algorithm produces an
output, the output $\tilde{g}=v_\sigma$ is within $\epsilon$ of $g^*$.

For claim (1), consider a deterministic policy $\sigma$ for the max player and
deterministic LDF policy $\tau$ for the min player such that
$g^*_{\sigma,*} = g^* = q^*_{*,\tau}$.
The algorithm computes values $v_\sigma \leq g^*_{\sigma,*} = g^*$ and
$v_\tau \leq q^*_{*,\tau} =g^*$ such that
$\|v_\sigma-g^*\|_\infty \leq \epsilon/2$ and $\|v_\tau-g^*\|_\infty \leq \epsilon/2$.
Therefore, $\|v_\sigma-v_\tau\|_\infty \leq \epsilon/2$, the algorithm accepts the
guess and outputs $\tilde{g}=v_\sigma$.

For claim (2), suppose that the algorithm accepts a guess $\sigma, \tau$
and outputs $\tilde{g}=v_\sigma$. 
Since $v_\sigma \leq g^*_{\sigma,*} \leq g^* \leq q^*_{*,\tau}$, we have:

\begin{eqnarray*}
	\| g^*  - v_\sigma \|_\infty & \leq &  \| q^*_{*,\tau} - v_{\sigma} \|_\infty \\
	& \leq & \| q^*_{*,\tau} - v_{\tau} \|_\infty  + \| v_{\tau} - v_\sigma \|_\infty\\
	& \leq &  \frac{\epsilon}{2}+\frac{\epsilon}{2}\\
	& = & \epsilon.
\end{eqnarray*}

\end{proof}

\noindent {\bf Acknowledgements.}   Research partially supported
by the Royal Society, and by NSF grant CCF-1320654.

\bibliographystyle{plain}
\end{document}